\newcommand*{\addFileDependency}[1]{
  \typeout{(#1)}
  \@addtofilelist{#1}
  \IfFileExists{#1}{}{\typeout{No file #1.}}
}
 \setlist[enumerate]{leftmargin=0cm,rightmargin=0cm,noitemsep, topsep=2pt}
\definecolor{clemson-orange}{RGB}{234,106,32}
\definecolor{chicago-maroon}{RGB}{128,0,0}
\definecolor{northwestern-purple}{RGB}{82,0,99}
\definecolor{cornell-red}{RGB}{179,27,27}
\definecolor{sauder-green}{RGB}{171,180,0}
\definecolor{gray}{RGB}{192,192,192}
\definecolor{lawngreen}{RGB}{0,250,154}
\def\BState{\State\hskip-\ALG@thistlm}
\newcommand{\bb}{\mathbb}
\newcommand{\Z}{{\bb Z}}
\newcommand{\N}{{\bb N}}
\newcommand{\E}{\bb E}
\theoremstyle{definition}
\newtheorem{theorem}{Theorem}
\newtheorem{corollary}{Corollary}
\theoremstyle{definition}
\patchcmd{\@addmarginpar}{\ifodd\c@page}{\ifodd\c@page\@tempcnta\m@ne}{}{}
\crefname{assumption}{Assumption}{Assumptions}
\crefname{lemma}{Lemma}{Lemmas}
\crefname{theorem}{Theorem}{Theorems}
\crefname{corollary}{Corollary}{Corollaries}
\crefname{proposition}{Proposition}{Propositions}
\crefname{claim}{Claim}{Claims}
\crefname{procedure}{Procedure}{Procedures}
\crefname{algorithm}{Algorithm}{Algorithms}
\crefname{figure}{Figure}{Figures}
\crefname{remark}{Remark}{Remarks}
\crefname{section}{Section}{Sections}
\crefname{procedure}{Procedure}{Procedures}
\crefname{example}{Example}{Examples}
\crefname{definition}{Definition}{Definitions}
\crefname{table}{Table}{Tables}
\crefname{equation}{}{}
\crefname{enumi}{}{}
\crefname{conjecture}{Conjecture}{Conjectures}
\crefname{step}{Step}{Steps}
\def \a{\alpha}
\def \m{\mu}
\def \l{\lambda}
\def \t{\theta}
\def \T{\Theta}
\def \E{\mathbb{E}}
\def \l{\lambda}
\def \M{\mathcal M}
\def \r{\rho}
\def \V{\mathcal V}
\def\H{H}
\def \ll{\lower1.6truept\hbox{${{\scriptstyle =\atop \scriptstyle <}}$}}
\def \gl{\lower1.6truept\hbox{${{\scriptstyle >\atop \scriptstyle
=}\atop{\scriptstyle <}}$}}
\def \lg{\lower1.6truept\hbox{${{\scriptstyle <\atop \scriptstyle =}\atop
{\scriptstyle >}}$}}
\def\g{\gamma}
\def\l{\lambda}
\def\bp{\mathbf{p}}
\def\fcfs{\text{\textsf{\tiny  FCFS}}}
\def\lcfs{\text{\textsf{\tiny  LCFS}}}
\def\siro{\text{\textsf{\tiny  SIRO}}}
\def\ntu{\text{\textsf{\tiny  NTU}}}
\def\tu{\text{\textsf{\tiny  TU}}}
\def\bv{\mathbf{v}}
\begin{document}

\title{\textbf{Dynamic Market Design\thanks{This is a survey prepared for an invited lecture at the 2025 World Congress of the Econometric Society.  I am grateful to Dirk Bergemann, Andy Choi, and Sangjun Park, for their comments.
}}}
\author{Yeon-Koo Che\thanks{
Department of Economics, Columbia University, USA. Email: \href{mailto: yeonkooche@gmail.com}%
{\texttt{yeonkooche@gmail.com}}.}}
\date{\today \endgraf }
\maketitle

\begin{abstract}
Classic market design theory is rooted in static models where all participants trade simultaneously. In contrast, modern platform-mediated digital markets are fundamentally dynamic, defined by the asynchronous and stochastic arrival of supply and demand. This chapter surveys recent work that brings market design to this dynamic setting. We focus on a methodological framework that transforms complex dynamic problems into tractable static programs by analyzing the long-run stationary distribution of the system. The survey explores how priority rules and information policy can be designed to clear markets and screen agents when monetary transfers are unavailable, and, when they are available, how queues of participants and goods can be managed to balance intertemporal mismatches of demand and supply and to spread competitive pressures across time.

 \vskip0.2cm \noindent \textbf{%
JEL Classification Numbers}: C78, C61, D47, D83, D61 \newline
\textbf{Keywords:} Asynchronous and stochastic arrival of participants, steady state market design, non-transferable utility models, transferable utility models.
\end{abstract}

% \newpage

% %\setcounter{tocdepth}{2}
% \begingroup
% \footnotesize % or \footnotesize, \scriptsize, etc.
% \setstretch{1.2} 
% \tableofcontents
% \endgroup

\newpage

%\onehalfspacing
% \setstretch{2} %\singlespacing
%\linenumbers
\doublespacing
\section{Introduction}

The basic workhorse models of market design are rooted in two great traditions. For markets mediated by monetary prices, or Transferable Utility (TU) markets, the foundational theories of auctions by \cite{myerson1981} and \cite{milgrom1982theory} and matching theories of \cite{shapley1971assignment} and \cite{becker1973theory} provide our core understanding. For markets coordinated without direct transfers, or Non-Transferable Utility (NTU) markets, the canonical matching model of \cite{gale1962college} is the foundational paradigm. These seminal models, despite their power, reflect a classic market setting where all buyers and sellers must be present simultaneously at the same location to execute trades.

The advances in information technology and the internet have fundamentally broken down these physical and temporal barriers. The defining feature of the resulting modern marketplaces is that participants---buyers and sellers, or agents and goods---arrive over time \textbf{asynchronously and stochastically}. A typical transaction today is mediated by a digital platform that must manage unpredictable, often mismatched flows of supply and demand. These platforms do not merely serve as passive meeting grounds; they actively structure the market by matching or recommending participants to one another, in many cases deciding who must wait for a match and for how long.

Examples of such dynamic markets are ubiquitous. In the NTU setting, asynchronous and stochastic arrivals figure prominently in kidney exchange, public housing,  ride-hailing,\footnote{While ride-hailing platforms often employ dynamic (surge) pricing, the actual allocation or dispatch of a specific rider to a driver is typically mediated by proximity and arrival order rather than a price-based auction for each match. Rare exceptions exist, such as the Czech platform Liftago, which uses explicit auctions to mediate dispatch; however, for the vast majority of platforms, the allocation decision remains better classified as an NTU case.} and professional job matching. In the TU setting, examples include the sale of cloud computing capacity, where demand for server time arrives and supply is freed up as old jobs complete; gig platforms like TaskRabbit, which face unpredictable availability of both service providers and customer requests; and blockchains, which must allocate randomly arriving transactions to blocks that are themselves generated at random intervals.

The goal of this paper is to survey recent works that bring the classical agenda of market and mechanism design up to date by placing it in this more realistic dynamic setting. The relevant literature is already voluminous and growing rapidly, and we do not aim to be exhaustive. Instead, we focus on a strand of recent literature, developed largely around the works of \cite{baccara2020optimal}, \cite{che2021optimal}, \cite{madsen2025collective}, and \cite{che2025optimal}, that highlights a particular methodological framework capable of collapsing the complex dynamics of these markets into a tractable, static linear programming problem.

The workhorse framework we develop borrows heavily from queueing theory. For illustrative clarity, we will ground the survey in a simple M/M/1 environment, where buyers arrive at a Poisson rate $\lambda$ and items (or service completions) arrive at a Poisson rate $\mu$.\footnote{In Kendall's notation, M/M/1 denotes a system with Markovian (Poisson) arrivals, Markovian (exponential) service times, and a single server.} While much of the analysis can be generalized, this setting is ideal for conveying the core ideas. We consider a general class of mechanisms, called Positive Recurrent Regenerative Mechanisms (PRRMs), which induce a process for the system’s state that is positive recurrent and regenerative. This class is general enough to be without loss for most problems of interest and nests, as a special case, the simpler class of Markovian mechanisms that condition policies on payoff-relevant states such as the size and types of agents in the queue. This generality is warranted, as the optimal mechanisms we identify are often non-Markovian. Despite their generality, PRRMs retain crucial tractability because they admit a unique stationary distribution, allowing the designer’s objective and the agents' incentives to be evaluated in the long-run steady state.

The central methodology, therefore, is to transform the dynamic design problem into a static optimization problem: choosing a stationary distribution over the space of queue states that maximizes the designer's objective, subject to the constraint that this distribution must be implementable by some feasible PRRM. This presents two primary challenges. First, it is not immediately obvious how to characterize the set of feasible stationary distributions. We show how this can be done using a Border-inspired characterization of reduced-form allocations, adapted to a dynamic setup. Second, the space of queue states can be forbiddingly complex—for instance, including the reported types of all agents in the queue—making the dimensionality of the stationary distribution intractable. We demonstrate how this dimensionality can be reduced to make the analysis tractable.

This survey is organized into two parts: NTU and TU markets.

First, we analyze the NTU setting, where goods and services are allocated without monetary transfers. While other non-monetary mechanisms exist, we focus on the most common: waiting in line, or ``queueing.''\footnote{Here, we use the term queueing generically to mean any methods that involve agents waiting in line, and not necessarily to mean a particular queueing rule such as first-come-first-served.} Though queueing can play a role similar to pricing in clearing markets and screening types, there are noteworthy distinctions. In a competitive market, a decision to pay a monetary price has no direct externality on others. In contrast, a decision to join a queue can impose a significant externality, and the nature of this externality depends on three key factors: the admissions control policy (how entry and exit are regulated), the queueing rule (how priority is assigned), and the information environment (what agents know about the state of the queue). Moreover, the decision is fundamentally dynamic, as agents are typically free to abandon the queue at any time. This suggests that special attention must be paid to how queueing incentives are managed to serve the designer’s objective.

We will survey several key papers that study this problem under complete information, summarizing the findings of \cite{naor1969regulation}, \cite{hassin1985}, and \cite{leshnoAER} on the relative merits of First-Come, First-Served (FCFS), Last-Come, First-Served (LCFS), and Service-In-Random-Order (SIRO). We then discuss how \cite{che2021optimal} applies the steady-state framework to analyze optimal queue design, showing that FCFS re-emerges as optimal when the designer has a complete toolkit. Finally, we discuss the implications of using queueing to screen agents with heterogeneous types. While waiting can function like a price, it entails wasteful social costs, a crucial distinction from monetary transfers. This trade-off implies that a welfare-motivated designer may be willing to sacrifice allocative efficiency to avoid these costs, sometimes resulting in complete pooling and random allocation as the optimal policy.

Second, we turn to the TU model, where monetary transfers can be used without restriction. Given the availability of transfers, the wasteful queueing of the NTU world need not be relied upon for screening or market clearing. Instead, the central question is how to balance competitive forces across time optimally. This adds an important new dimension to the static design problem of \cite{myerson1981}: the designer must not only allocate goods/services optimally among currently present agents but also store buyers or goods optimally in a queue for potential future allocation. This involves dynamically managing the entry and exit of buyers into and out of queues based on their types, as well as managing an inventory of goods. We will then summarize the main results of \cite{che2025optimal}, which characterizes the optimal dynamic auction in this setting.

A final section concludes by discussing some topics and literature not covered in the survey and suggesting directions for future research.

\section{Non-Transferable Utility Model}

\label{sec:model}

\paragraph*{Primitives.} 

We consider a continuous-time model in which a platform/designer allocates goods or services arriving at the platform to buyers who also arrive at the platform over time.    At each instant $t\in [0, \infty)$, buyers arrive at a Poisson rate of $\lambda>0$ and homogeneous goods---or  a firm offering the goods--- arrive at a Poisson rate of $\mu>0$. In the context of the service center, the arrival of goods can be interpreted as the arrival of ``service completions," which would be enjoyed by a buyer if they have been receiving service, or would be wasted if there is no buyer receiving service.

Initially, we assume that buyers are also homogeneous; they value a fixed surplus $v>0$ from receiving the good/service, and they incur $c\in (0,v)$ per unit of time they wait in the queue. There is no discounting.\footnote{It is customary to assume in the queueing literature that the only discounting involves linear waiting costs, as we assume. There are two reasons.  First, exponential discounting introduces risk-loving time preferences, which are often contrary to what authors believe customers exhibit (e.g., risk-averse time preferences). For instance, customers are known to exhibit a strong preference for the first-come, first-served system (FCFS), which has the least dispersed waiting time distribution among all queueing disciplines.  Second, nontrivial time preferences interact with the effects of other policy variables in a manner that makes it difficult to isolate their effect.  Linear waiting times, which imply risk-neutral time preferences, help isolate the channel of effects orthogonal to those caused by nonlinear time preferences. } This means that, if a buyer waits $t\ge 0$ and receives the good/service, he enjoys the payoff of $v-ct$; his payoff is $-ct$ in case he never receives the good. His outside option is zero.\footnote{Note that a buyer stops incurring the waiting cost once he leaves the queue.  One can think of the cost as the opportunity cost of not exercising the outside option, which would yield the flow value of $c$. An outside option could be a leisure activity that the buyer is forgoing or the next best good or service he has immediate access to.}
% When we turn to the TU environment, we will introduce value heterogeneity into the model.  

The designer's job is to allocate goods to buyers.  In case no item is available when a buyer arrives, the designer may hold the buyer in a (metaphorical) waiting room, called a {\bf queue}, until an item arrives.\footnote{We use the term ``queue'' in a broad sense without any connotation about the service priority rule, such as first-come, first-served. }  
% It is possible, and allowed, that
Likewise, if no buyer is available, when an item arrives, the designer may hold it in  inventory until a buyer arrives; however, doing so incurs cost  $d>0$ to the designer per unit of time.  There is also a firm that provides service or good; it earns a fixed profit $\pi>0$ per each item or service rendered to a buyer.  

One can see that this model encompasses two modal scenarios of modern marketplaces. 
\begin{itemize}
    \item {\bf Service scenario:}  Service centers provide perishable goods or services.  Cloud computing, repair and maintenance services, and customer services all fall into this category.  In the context of the service center, the arrival of goods is interpreted as the completion of service for buyers who are already receiving service. Specifically, as long as there is at least one buyer in the system, service is being provided; at the Poisson rate $\mu$, a service completion occurs. If a service completion 'arrives' when there are no buyers (i.e., the system is empty), the service opportunity is wasted.  
 This is a special case, where $d=\infty$ so that only buyers can wait in a queue, and items cannot be stored.

   \item {\bf Goods and dynamic matching scenario:}  Retail platforms, dating apps, public housing authority, human organs transplant organizations, child adoptions and foster care agencies mediate matches between agents/resources on two-sided markets.\footnote{While some of these examples, such as dating apps or retail platforms, incorporate monetary transfers, their role in mediating specific matching and allocation decisions is often limited. For instance, in dating apps like Tinder, pricing is primarily used to discriminate among users for access to a more informed pool (e.g., those who have already 'liked' the user). In contrast, the actual matching follows the platform's NTU-based recommendation algorithm.}  In these environments, entities on both sides can be held in queues if immediate matches are impossible or undesirable. 
\end{itemize}

\paragraph{Mechanisms.}  
A canonical probability space $(\Omega, \mathcal{F}, P)$ captures the primitive arrival processes. A history $\omega \in \Omega$ is a realization of two independent Poisson processes tracking the arrival times of buyers, $\{a_{i}(\omega)\}_{i \in \mathbb{N}}$, and items, $\{g_{j}(\omega)\}_{j \in \mathbb{N}}$, indexed by their arrival order. Let $\{\mathcal{F}_{t}\}_{t \ge 0}$ be the natural filtration generated by these processes. A \textbf{mechanism}, $\phi$, is a non-anticipatory,\footnote{That is, the mapping up to time $t$ must be adapted to the filtration $\mathcal{F}_{t}$.} measurable mapping from histories to outcomes. It specifies which buyers and items are queued, when they are removed, and how they are matched.\footnote{While one could formally specify the outcome space and the induced stochastic processes in greater detail, we omit this formalism here to maintain narrative flow. Furthermore, the methodology of this survey, which relies on a relaxed program, does not require the full machinery of the general mechanism. For readers interested in the complete formal specification of dynamic mechanisms and histories in these contexts, we refer them to the online appendices of \cite{che2021optimal} and \cite{che2025optimal}.}  We also impose an efficiency condition, \textbf{No Allocation Delay (NAD)}, meaning the mechanism never holds both buyers and items simultaneously---an assumption that is without loss for the objectives we consider.

A mechanism $\phi$ induces an outcome $y \in \mathcal{Y}$ at each time $t$, specifying the set of current matches, admissions to the queue, and removals from the queue. An \textbf{outcome process} is then the stochastic process $\{y_t\}_{t \ge 0}$ induced by the mechanism's mapping of histories to outcomes. In this sense, the outcome process is the realization of the mechanism’s decisions over time.  This, in turn, induces a coarser \textbf{queue state process}, $\{\theta_{t}\}_{t \ge 0}$. The state $\theta_{t} \in \Theta := \mathbb{Z}$ represents the number of waiting participants:  $\theta_t>0$ indicates $\theta_t$ buyers in the queue; $\theta_t<0$ indicates $-\theta_t$ items in the inventory; and $\theta_{t}=0$ is the null state where the system is empty. A time $\tau$ is a \textbf{null time} if the queue is empty, $\theta_{\tau}=0$.

Given Markovian arrivals, the system probabilistically restarts after each null time, so the designer repeatedly faces the same problem. It is therefore without loss to require a mechanism to depend only on the history since the last null time.  Formally, let $\omega|_{t}$ be the history following time $t$, with time and participant indices reset. A continuation mechanism $\phi|_{t}$ is the outcome process following time $t$, again with the time and buyer/item indices reset. A mechanism is \textbf{regenerative} if $\phi|_{\tau}(\omega)=\phi(\omega|_{\tau})$ for each null time $\tau$. Restricting attention to regenerative mechanisms is without loss of generality.

% --- PAGE 8 REVISIONS ---

If a mechanism is regenerative, its induced queue-state process $\{\theta_{t}\}$ is also regenerative. This class is very general, nesting \textbf{Markov mechanisms}---where decisions depend only on the current state $\theta_{t}$---as a special case. However, the Markov class is often insufficient. For instance, being oblivious to the arrival order, it cannot implement standard rules such as First-Come, First-Served (FCFS). Although arrival orders are payoff-irrelevant in a memoryless process, they can be instrumental for maintaining dynamic incentives, as we will see.

A remarkable feature of a regenerative process $\{\theta_{t}\}$ is that if it is also \textit{positive recurrent}---meaning the expected return time to a null state is finite---it admits a unique stationary distribution $p \in \Delta(\mathbb{Z})$. This distribution describes the system's long-run behavior, as empirical time-averages of queue states converge to $p$ almost surely.\footnote{See \cite{asmussen2003applied} (p. 170, Theorem 1.2) and \cite{THORISSON1992237}; we collect a few relevent results in \Cref{app:regen}.} We therefore focus on a \textbf{Positive-Recurrent Regenerative Mechanism (PRRM)}, which induces such a process. This restriction is without loss, as a non-positive recurrent mechanism would lead to unbounded queues and infinite expected costs, which is never optimal.\footnote{A process that is not positive recurrent is either \textit{transient} (queue length diverges) or \textit{null recurrent} (expected return to the null state is infinite). Both imply unbounded queue growth and infinite costs, making them suboptimal.  A third possibility is that the queue-state process is positive recurrent but never reaches the null state. A stationary distribution is still well defined and unique in this case, and our formulation is valid.}

\paragraph{Incentives.}   The incentive issues concerning buyers depend on the set of instruments employed by the designer, including possible control of entry/exit, information policy, and the service priority.  On both sides (firm and buyers), we require that they at least break even so that they participate in the mechanism. On the buyer side, the incentive issue arises because the platform cannot compel a buyer to enter and/or remain in the queue against their will; therefore, any desired behavior in this regard must be incentivized.  An important issue that arises is the belief a buyer forms about the history leading up to his arrival, particularly the current queue state. We assume that a buyer forms his belief about $\t\in \T$ based on the stationary distribution $\mathbf{p}$ induced by the mechanism. A justification is that each arriving buyer knows that the mechanism has been operating for long enough so that the limit distribution applies.\footnote{This is established formally in \cite{wolff1982poisson}.}  Note this is when the buyer observes nothing other than the fact that he ``arrives.'' If the buyer observes the queue state or some additional signal about it, he will have a refined belief.

While we deal with this issue more precisely as we get to the specific results, here we will simply require that a chosen mechanism be {\bf incentive compatible}: i.e., buyers have incentives to obey the recommendation associated with $\phi$ as a Bayes Nash equilibrium.\footnote{The agents, buyers in our model, face dynamic environment, so refinements capturing sequential rationality may be warranted. We use Bayes-Nash equilibria in the formulation of the problem here to broaden our search for the optimal mechanism. As will become clear, either the equilibrium we study in each specific case satisfies sequential rationality (as in \Cref{sec:complete-info}) or the information policy makes Bayes-Nash equilibria a relevant concept (as in \Cref{sec:che-tercieux}).}  We let $\Phi$ denote the set of all incentive-compatible PRRMs.

\paragraph{Problem Statement.}     
We are interested in the following problem: 
$$\max_{\bp\in \Delta(\Z) }  \a \left[\sum_{k\in \N}p_k( \m v-ck) +\sum_{\ell\in \N} p_{-\ell } \l v \right ] +   (1-\a)\left[\sum_{k\in \N}p_k \m \pi+\sum_{\ell\in \N} p_{-\ell } (\l \pi- \ell d) \right ], \eqno{[\mathcal{P_{\ntu}}]}$$
where $\mathbf{p}=(p_i)_{i\in \Z}$  is the stationary distribution induced by some $\phi\in \Phi$.  
One can interpret the objective as the long-run time average of the weighted sum of buyer welfare and profit, where $\a$  is the weight for the buyer welfare.  I have motivated the relevance of the program $[\mathcal P_{\ntu}]$ by the use of PRRM.  Beyond PRRMs, this program is well-defined whenever the mechanism the designer employs induces a process on the queue states $\T$ that admits a well-defined stationary distribution.

% We will frame the problems studied in the literature through $[\mathcal P_{\ntu}].$   

\paragraph{Benchmark: Relaxed problem.}   Before jumping into specific results, it is useful to begin with the first-best benchmark, which ignores the incentive constraints, except for participation constraints. Namely, consider 

$$\max_{\bp\in \Delta(\Z) }  \a \left[\sum_{k\in \N}p_k( \m v-ck) +\sum_{\ell\in \N} p_{-\ell } \l v \right ] +   (1-\a)\left[\sum_{k\in \N}p_k \m \pi+\sum_{\ell\in \N} p_{-\ell } (\l \pi- \ell d) \right ], \eqno{[\mathcal{P}_{\ntu}']}$$
subject to  
\begin{align}
 &\sum_{k\in \N}p_k( \m v-ck) +\sum_{\ell\in \N} p_{-\ell } \l v\ge 0; \tag{${IR_B}$}  \label{IRB}\\
 &\sum_{k\in \N}p_k \m \pi+\sum_{\ell\in \N} p_{-\ell } (\l \pi- \ell d)\ge 0; \tag{${IR_S}$}  \label{IRS} \\
 & \l p_k \ge \m p_{k+1}, \, \forall k \in \N\cup \{0\}; \tag{${B_k}$}  \label{BB} \\
  & \m p_{-\ell} \ge \l p_{-\ell-1}, \, \forall \ell \in \N\cup \{0\}.  \tag{${B_{-\ell}}$}  \label{BS} 
\end{align}

The first two constraints, \cref{IRB} and \cref{IRS} are the break-even, or participation, constraints for the firm and buyers.  The two constraints, \cref{BB} and \cref{BS}, are necessary for the stationarity of $\bp$.  \Cref{fig:balance} helps explain \cref{BB}.  Fix any $k\ge 0$, consider the probability that the queue state transitions from $\T_{\le  k}:=\{j\in \Z: j\le k\}$ (to the left of the dotted line in \Cref{fig:balance}) to  $\T_{>k}:=\T\setminus {\T_{\le k}}$ (to the right of the dotted line)  for a brief instant $dt>0$.  This probability is at most $p_k \l dt+o(dt)$---the probability that there are $k$ buyers and one more buyer arrives during $dt>0$---``at most,'' since the new arrival may or may not join the queue.  Likewise, for a brief instant $dt>0$, the probability that the queue state transitions from $\T_{>k}$ to $\T_{\le  k}$ is at least $p_{k+1} \mu dt+o(dt)$---the probability that there are $k+1$ buyers and an item arrives during $dt>0$, in which case a buyer leaves with a good. This is a lower bound, given our assumption of NAD.  The latter can't exceed the former at a stationary distribution, which explains \cref{BB}.

\begin{figure}[h!]
    \centering
\begin{tikzpicture}[
    >=stealth',
    node distance=3.5cm,
    on grid,
    auto,
    state/.style={
        circle,
        draw,
        minimum size=1.5cm
    }
]

% Define the nodes (states)
\node[state] (k-1)              {$k-1$};
\node[state] (k)   [right=of k-1]   {$k$};
\node[state] (k+1) [right=of k]     {$k+1$};
\node[state] (k+2) [right=of k+1]   {$k+2$};

% Nodes for the "..." parts
\node (dots1) [left=of k-1, xshift=1cm] {$\dots$};
\node (dots2) [right=of k+2, xshift=-1cm] {$\dots$};

% Arrows between states
\path[->, thick, shorten >=1pt]
    (k-1) edge [bend left]  node[above] {} (k)
       (k)   edge [bend left]  node[below] {} (k-1)
    (k+1)   edge [bend left]  node[below, text=red] {$\mu$} (k)
    (k)   edge [bend left]  node[above, text=red] {$\lambda x_k$} (k+1)
    (k+1) edge [bend left]  node[below] {} (k)
    (k+1) edge [bend left]  node[above] {} (k+2)
    (k+2) edge [bend left]  node[below] {} (k+1)

    % Self-loops
    (k-1) edge [loop above] node {} ()
    (k)   edge [loop above] node {} ()
    (k+1) edge [loop above] node {} ()
    (k+2) edge [loop above] node {} ();

% Arrows to and from "..."
\draw[->, thick] (k-1) edge[bend left=45] (dots1);
\draw[->, thick] (dots1) edge[bend left=45] (k-1);
\draw[->, thick] (k+2) edge[bend left=45] (dots2);
\draw[->, thick] (dots2) edge[bend left=45] (k+2);

% Vertical dotted line
\draw[dotted, thick] ([xshift=1cm]k.east) -- ([xshift=1cm, yshift=2cm]k.east);
\draw[dotted, thick] ([xshift=1cm]k.east) -- ([xshift=1cm, yshift=-2cm]k.east);

% Caption

\end{tikzpicture}
    \caption{Balance conditions for a stationary distribution.}
    \label{fig:balance}
\end{figure}
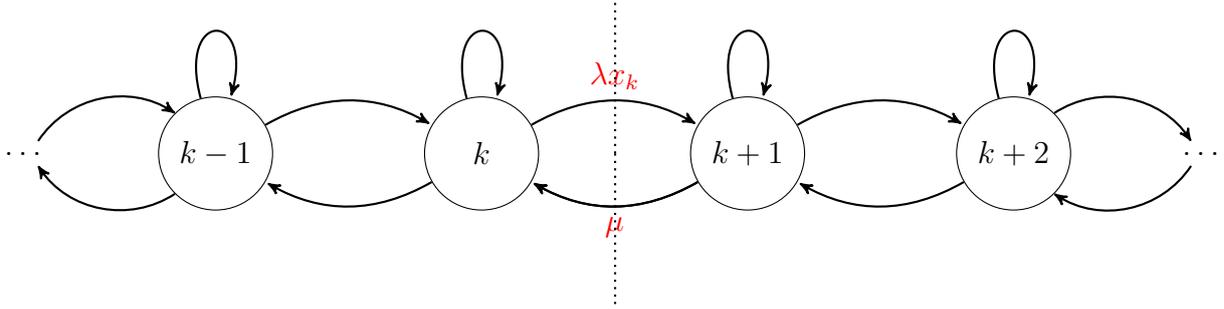
An analogous condition explains \cref{BS}. Any $\bp$ satisfying \cref{BB} and \cref{BS} can be implemented as a stationary distribution, by probabilistically regulating the entry of buyers or goods into the queue.  Specifically, given   \cref{BB} for $k\in \N$ say, one can choose a suitably-calibrated admission probability $x_k\in (0,1]$, so that the balance condition holds with equality.\footnote{Strictly speaking, a stationary distribution must satisfy a balance condition between any $ \T'\subset  \T$ and $\T\setminus\T'$. However, this is implied by the conditions required here.}

Note that we have reformulated the dynamic mechanism design problem into a static linear programming (LP) problem in which the designer directly chooses a stationary distribution $\bp$. This methodology holds promise for making a dynamic problem tractable and is the primary methodological focus of the current survey.  We can characterize the first-best as involving a cutoff structure:

\begin{theorem}  \label{thm:cutoff}   There exist $K^*, L^*\in \{0\}\cup \N\cup\{\infty\}$ such that the optimal solution $\bp$ to $[\mathcal P_{\ntu}']$ has support $\{-L^*, ..., K^*\}$  and satisfies $(B_k)$ with equality for all $k\in \N$ with $k<K-1$ and  $(B_{-\ell})$ with equality for all $\ell\in \N$ with $L^*-1$.  For $d$ sufficiently large, $L^*=0$, and for $c$ sufficiently large, $K^*=0$.
\end{theorem}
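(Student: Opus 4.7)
The plan is to treat $[\mathcal{P}_{\ntu}']$ as a linear program in $\bp\in\Delta(\Z)$ and extract the cutoff structure from the per-state objective coefficients. Denote them by $c_0=0$, $c_k=\alpha(\mu v-ck)+(1-\alpha)\mu\pi$ for $k\ge 1$, and $c_{-\ell}=\alpha\lambda v+(1-\alpha)(\lambda\pi-\ell d)$ for $\ell\ge 1$. The crucial monotonicity is that $c_k$ is strictly decreasing in $k$ on the positive side (at rate $\alpha c>0$) and $c_{-\ell}$ is weakly decreasing in $\ell$ on the negative side (strictly if $\alpha<1$).

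I would first establish the \emph{contiguous support} property directly from the balance constraints. If $p_{k+1}>0$ for some $k\ge 0$, then $(B_k)$ forces $\lambda p_k\ge\mu p_{k+1}>0$, so $p_k>0$; iterating gives $p_j>0$ for every $0\le j\le K^*:=\sup\{k\ge 0:p_k>0\}$. The symmetric argument on $(B_{-\ell})$ yields $p_{-j}>0$ for every $0\le j\le L^*:=\sup\{\ell\ge 0:p_{-\ell}>0\}$, so the support is $\{-L^*,\ldots,K^*\}$.

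Second, I would prove that the \emph{interior balance constraints bind} by a pairwise exchange. Suppose, for contradiction, that $(B_k)$ is strict for some $0\le k<K^*-1$. Pick $k'\in\{k+2,\ldots,K^*\}$ with $p_{k'}>0$ (take $k'=K^*$ when $K^*<\infty$; for $K^*=\infty$, take $k'$ arbitrarily large in the support), and form $\bp'$ by raising $p_{k+1}$ and lowering $p_{k'}$ by a common small $\epsilon>0$. The only potentially affected balance conditions are $(B_k)$, $(B_{k+1})$, $(B_{k'-1})$, and $(B_{k'})$: the first is absorbed by its strict slack; $(B_{k+1})$ and $(B_{k'-1})$ are only relaxed; and $(B_{k'})$ holds automatically when $k'=K^*$ since $p_{K^*+1}=0$. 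The objective then changes by $\epsilon(c_{k+1}-c_{k'})>0$ by the monotonicity of $c_k$, while both IR constraints are preserved (firm profit is unchanged since $\mu\pi$ is common to both positive states, and buyer welfare strictly rises by $\epsilon c(k'-k-1)>0$). This contradicts optimality, so $(B_k)$ must bind for every $k<K^*-1$. The mirror exchange on the negative side yields the analogous claim for $(B_{-\ell})$.

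Third, for the comparative statics, observe that $c_{-\ell}<0$ for every $\ell\ge 1$ once $d>\lambda\pi+\alpha\lambda v/(1-\alpha)$ (and analogously for $c_k$ when $c$ is sufficiently large). I would conclude $L^*=0$ by comparing the optimal value $V(d)$ of $[\mathcal{P}_{\ntu}']$ with the value $V_0$ of its restriction to $p_{-\ell}=0$ for all $\ell\ge 1$: $V_0$ is independent of $d$ and attained by feasible solutions (e.g., $p_0=1$ gives value $0$), while any optimum $\bp^*(d)$ with $L^*\ge 1$ carries the additive penalty $-(1-\alpha)d\sum_{\ell\ge 1}\ell p_{-\ell}^*$ in its objective, driving $V(d)<V_0$ once $d$ is large enough. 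The symmetric calculation gives $K^*=0$ for large $c$. The main obstacle is this last step: a direct mass transfer from negative states to $p_0$ may violate $(IR_B)$ by stripping the per-arrival buyer surplus $\lambda v$ enjoyed while waiting for items, which is why I would work at the value-function level rather than via a constructive pointwise exchange. A secondary technicality is the $K^*=\infty$ subcase in the second step, where one must pick $k'$ carefully (or introduce a cascading shift) to avoid propagating through a chain of binding constraints $(B_{k'}),(B_{k'+1}),\ldots$.
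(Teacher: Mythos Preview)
Your approach and the paper's are close in spirit---both derive the interval support directly from the balance inequalities and then use a mass-shifting exchange to force interior balance constraints to bind. The substantive difference is that the paper first dualizes the two IR constraints, so the per-state coefficients become $f(k)=(\alpha+\nu)(\mu v-ck)+(1-\alpha+\gamma)\mu\pi$ and $g(\ell)=(\alpha+\nu)\lambda v+(1-\alpha+\gamma)(\lambda\pi-\ell d)$ with multipliers $\nu,\gamma\ge 0$, whereas you work with the raw objective coefficients and verify $(IR_B),(IR_S)$ separately after each exchange. Your route is more elementary, and your pairwise swap (raise $p_{k+1}$, lower $p_{K^*}$) is in fact more carefully specified than the paper's sketch, which says ``raise $p_k$''---a move that would tighten $(B_{k-1})$. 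But the Lagrangian is not merely cosmetic: at $\alpha=0$ your positive-side coefficients collapse to $c_k=\mu\pi$, so your exchange yields zero objective gain and no contradiction; the paper's $f(k)$ stays strictly decreasing via $\nu>0$ precisely when $(IR_B)$ binds, which is what happens when $\alpha$ is small. The symmetric issue arises on the negative side at $\alpha=1$, which you flag (``strictly if $\alpha<1$'') but do not resolve. So the dualization is what closes these boundary cases. Your value-function argument for the comparative statics is sound, and the paper's sketch omits that part anyway, deferring to the cited reference.
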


\noindent {\it Proof Sketch.} The interval structure of the support of $\bp$ is obvious. The support contains $0$ since we assume no allocation delays, which makes $\t=0$ positive recurrent.  To see that \cref{BB} and \cref{BS} are binding for $k<K^*-1$ and $\ell<L^*-1$, write the objective as 
$$\sum_k f(k)p_k +\sum_{\ell}g(\ell)p_{-\ell},$$
where $f(k):= (\a+\nu)(\m v-ck) +(1-\a+\g)(\m \pi)$  and  $g(\ell
):= (\a+\nu)\l v +(1-\a+\g)(\l\pi -\ell d)$, and $\nu$ and $\g$ are (nonnegative) Lagrangian multipliers respectively for \cref{IRB} and \cref{IRS}.  Observe that $f(\cdot)$ is strictly decreasing.  Hence, if \cref{BB} is slack for some $k<K^*-1$, then one can raise $p_k$ slightly and lower $p_j$ for $j \in \{k+1, ..., K^*\}$ without violating any constraints. This modification strictly increases the objective.  An analogous argument works for $j=-\ell$, with $\ell<L^*-1$.  See  \cite{che2021optimal} for details.  \qed
\bigskip

A binding \cref{BB} means that a buyer arriving at a queue of $k$ buyers must enter the queue with probability one.  Therefore, the optimum in \Cref{thm:cutoff} can be implemented by a {\bf cutoff policy} which admits buyers (goods) into the queue only up to some critical number $K^*-1$ (resp. $L^*-1$) and denies entry once $K^*$ (resp. $L^*$) is reached.\footnote{Partial rationing may be used for $\t=K-1$ (resp  $\ell=L^*-1$) when \cref{IRB} (resp. \cref{IRS})  is binding; otherwise, there is no rationing.} The buyers who join the queue are asked to stay until they receive items/services. Of course, whether and how the buyers may be incentivized to follow these instructions are the key questions that will be discussed below. 

Intuitively, a cutoff policy is optimal since, while storing an additional buyer (resp. good) has the option value against the future arrival of goods (resp. buyers), the value of this option diminishes as we store more and more buyers (resp. goods). Consequently, if it pays to admit a new buyer when there are $k$ buyers in the queue, it does so when there are $j<k$ buyers in the queue.

\subsection{Complete Information  Analyses}\label{sec:complete-info}

\cite{naor1969regulation} pioneered the {\it rational queueing theory}, which studies the incentives facing the agents (buyers in our context) in a queueing environment; see \cite{hassin-haviv2003} for a survey. Much of this literature considers a setting in which buyers have complete information about the queue state: namely, when they arrive at the queue, they observe the number of buyers already in the queue.  This assumption is less compelling in many modern settings, such as service call centers, even kidney exchanges, and digital platforms, in which queue information can be, and often is, withheld from the agents.  However, it remains relevant in specific physical queue settings, such as grocery checkouts and emergency room waiting areas.  Below, we assume $d=\infty$, as is realistic in the service queue setting, and considered by all authors.  At the same time, the results below will remain qualitatively valid even when $d<\infty$.

\subsubsection{Excessive queueing under FCFS.}

\cite{naor1969regulation} considers the welfare objective with $\alpha=1$ and studies the queueing incentive of the buyers who observe the queue state. The queue discipline is FCFS.

To begin, suppose a buyer is the first to arrive at an empty queue.  He can start receiving  service immediately, and it takes  $1/\m$ on average for service to be completed (since service completion occurs at the Poisson rate of $\m$). The buyer will join the queue if and only if $v>c/{\m}$.

Suppose next a buyer arrives with $k-1$ buyers already in the queue.  Under FCFS, his service begins only after all $k-1$ buyers ahead of him are served, each taking time distributed exponentially with mean $1/\m$. So, his total waiting time (including his service time) is Erlang-distributed with mean $k/\m$.\footnote{A Erlang distribution with $(k,\m)$ is the sum of $k$ independent exponentially distributed random variables with mean $1/\m$: see \url{https://en.wikipedia.org/wiki/Erlang_distribution}.} Hence, he will wish to join the queue if and only if $v\ge ck/{\m}$.  Note that once a buyer joins the queue, he can only move up in the priority order, so his residual mean wait time can only fall, meaning he will want to remain in the queue until he is served.  

It follows that the equilibrium will be of the cutoff structure specified in \Cref{thm:cutoff}, with the maximal queue length given by the marginal buyer (almost) indifferent to joining the queue, or 
$$K_{\fcfs}:=\left\lfloor \m \frac{v}{c} \right\rfloor=\max\left\{k: v- c\frac{k}{\m}\ge 0\right\}.$$
The key question is: {\it how does $K_{\fcfs}$ compare with the optimal cap $K^*$? }
To answer this question, recall that we evaluate the welfare in the stationary distribution. Suppose that agents queue up  to some maximal $K$. Requiring the balance condition  \cref{BB} for $k=1,..., K$ with equality, we observe
$$p_k^K=\frac{\l}{\m}p_{k-1}^K= \left(\frac{\l}{\m}\right)^2p_{k-2} = ...=\left(\frac{\l}{\m}\right)^k p_0^K=\rho^k p_0^K, $$
where $\r:=\l/\m$ is the ``load  factor.'' Using $\sum_{k=0}^K p_k^K=1$, we obtain 
$$p_k^K= \frac{\rho^k}{1+\rho +...+  \rho^K}, \, \forall k=0,..., K.$$
It is important to observe that $p_k^K$  is decreasing in $K$ for each $k\le K$; namely, if more buyers are willing to queue up, the system spends less time on a lower queue state.  This reflects the negative externality conferred by a marginal buyer: {\it a buyer joining the queue increases the queue length experienced by future buyers and their waiting times}. The presence of such a congestion externality leads to the following conclusion:

\begin{theorem} Assume  $d=\infty$ and  $\a=1$.  We have $K^*\le K_{\fcfs}$, where the inequality is often strict, and the queue length under FCFS is larger in first-order stochastic dominance than that under optimum. 
\end{theorem}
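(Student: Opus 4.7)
The heart of the argument is to compare the welfare function under alternative cutoff policies and exploit the cutoff structure already guaranteed by \Cref{thm:cutoff}. I would proceed in three stages.

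\emph{Stage 1 (welfare as a convex combination).} Let $W(K):=\sum_{k=1}^K p_k^K(\mu v-ck)$ denote welfare when the cap is $K$, where the paper's balance analysis gives $p_k^K=\rho^k/S_K$ with $\rho:=\lambda/\mu$ and $S_K:=\sum_{j=0}^K\rho^j$. Straightforward manipulation yields the recursion
\begin{equation*}
W(K)=\frac{S_{K-1}}{S_K}\,W(K-1)+\frac{\rho^K}{S_K}\,(\mu v-cK),
\end{equation*}
which exhibits $W(K)$ as a \emph{convex combination} of $W(K-1)$ and the marginal-state payoff $\mu v-cK$. This is the workhorse identity.

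\emph{Stage 2 (comparing $K^*$ and $K_{\fcfs}$).} Two observations follow from the recursion. First, by induction on $K$, $W(K)\ge 0$ for every $K\le K_{\fcfs}$, because both quantities being combined are nonnegative ($W(0)=0$ and $\mu v-cK\ge 0$ for $K\le K_{\fcfs}$). Second, suppose for contradiction $K^*>K_{\fcfs}$. Optimality of $K^*$ gives $W(K^*)\ge W(K^*-1)$; plugging this into the recursion yields $\mu v-cK^*\ge W(K^*-1)$. Since $K^*>K_{\fcfs}$ forces $\mu v-cK^*<0$, we obtain $W(K^*-1)<0$; then $W(K^*)$, being a convex combination of two strictly negative quantities, is itself strictly negative. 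This contradicts $W(K^*)\ge W(K_{\fcfs})\ge 0$, so $K^*\le K_{\fcfs}$. The inequality is generically strict, since applying the recursion at $K=K_{\fcfs}$ shows $W(K_{\fcfs})<W(K_{\fcfs}-1)$ whenever $\mu v-cK_{\fcfs}<W(K_{\fcfs}-1)$---the typical case in which a small marginal private gain is outweighed by the congestion externality already captured in $W(K_{\fcfs}-1)$.

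\emph{Stage 3 (FOSD).} From the closed form, $P^K(\theta>m)=1-S_m/S_K$ for $m\le K$ and $0$ otherwise. Since $K\mapsto S_K$ is strictly increasing and $K^*\le K_{\fcfs}$, this gives $P^{K_{\fcfs}}(\theta>m)\ge P^{K^*}(\theta>m)$ for every $m\ge 0$, with strict inequality at some $m$ whenever $K^*<K_{\fcfs}$.

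The main obstacle is the contradiction step in Stage 2: deriving and correctly deploying the convex-combination identity to show that overshooting $K_{\fcfs}$ drives welfare strictly below zero while the optimum is bounded below by zero. Everything else is routine algebra with the truncated geometric distribution and an invocation of \Cref{thm:cutoff}.
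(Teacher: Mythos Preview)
Your argument is correct. The FOSD step (Stage~3) is essentially identical to the paper's, but Stage~2 differs in a useful way. The paper proceeds directly: for any $K>K_{\fcfs}$ it shows $W(K)<W(K_{\fcfs})$ by splitting the sum---the new states $K_{\fcfs}+1,\dots,K$ contribute negative payoff $f(j)<0$, while the weights $p_k^K$ on the good states $k\le K_{\fcfs}$ strictly shrink (since $S_K$ grows). This yields $K^*\le K_{\fcfs}$ in one stroke and in fact shows the slightly stronger fact that $K_{\fcfs}$ dominates every larger cutoff.

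Your route instead builds the convex-combination recursion $W(K)=\tfrac{S_{K-1}}{S_K}W(K-1)+\tfrac{\rho^K}{S_K}(\mu v-cK)$ and argues by contradiction. This is less direct but buys more: the identity immediately gives the monotonicity test $W(K)\gtrless W(K-1)\iff \mu v-cK\gtrless W(K-1)$, which is exactly the comparison between private and social marginal value at the heart of Naor's externality argument and, not coincidentally, is the same relation the paper exploits in the appendix proof of \Cref{thm:hassin} (where $\Delta W(K)\propto w_K$). So your recursion is a reusable tool that reappears later in the survey, whereas the paper's proof here is a self-contained one-liner tailored to this theorem.
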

\begin{proof}  The second part follows  from the first part, since, given $K^*\le K_{\fcfs}$, 
$$\sum_{j=0}^k p_j^{K_{\fcfs}}\le\sum_{j=0}^k p_j^{K^*}, \, \forall k.$$
To prove the first part, recall that we can write the objective as $\sum_{k=0}^{K} p_k^K f(k),$
where $f(k)= \m v - k c$. (Recall $\a=1$ and $L^*=0$ since $d=\infty$.) Now observe that for all $k\le K_{\fcfs}$,  $f(k)\ge 0$.  Consider any $K>K_{\fcfs}$. Since $K_{\fcfs}:=\left\lfloor \m \frac{v}{c} \right\rfloor$,  $f(j)<0$ for each $j=K_{\fcfs}+1, ...,K$,  and $p_k^K< p_k^{K_{\fcfs}}$ for each $k\le K_{\fcfs}$.  It thus follows that $\sum_{k=1}^{K} p_k^K f(k)< \sum_{k=1}^{K_{\fcfs}} p_k^{K_{\fcfs}}f(k)$, whenever $K>K_{\fcfs}$. We thus conclude that $K^*\le K_{\fcfs}$.
\end{proof}

The intuition can be seen clearly in the standard textbook analysis of a market.  \Cref{fig:surplus_comparison} compares the surplus under FCFS and under the optimum (with $\a=1$) identified in \Cref{thm:cutoff}, when  $v=5, c=1$, and $\rho=\lambda=\mu=1$.  In each case, the horizontal dotted line depicts the value realized $\m v=5$,  and the 45 degrees line (dotted) depicts the social costs associated with waiting, for each state $k$.   

FCFS is depicted in Panel (a), in which $K_{\fcfs}=v/c=5$, and the social surplus $\m v- ck =5-k$ at each state $k$ is simply the difference between the value and the cost curve. Based on standard textbook reasoning, it may be tempting to conclude that FCFS is socially optimal, as it exhausts all possible surplus-generating opportunities by selecting $K$ to be the highest $k$ with nonnegative surplus.  This logic, however, overlooks the ``intensive margin'': as noted earlier, with a higher $K$, each lower inframarginal state $k<K$, where the surplus is higher, becomes less likely to occur.   By reducing the queue cap $K$, welfare may increase due to improvements on the intensive margin.

In the example, with $K_{\fcfs}=5$,
$p_1^{K_{\fcfs}}=...=p_5^{K_{\fcfs}}=1/6$, and the welfare is:  
$\sum_k p_k^K (\m v- ck) =(4+3+2+1)/6=5/3$.  By contrast, under the optimal outcome with $K^*=2$, the likelihood of infra-marginal state increases to $p_1^{K^*}=p_2^{K^*}=1/3$, and the welfare rises to $(4+3)/3=7/3$. The increased likelihood of states $k=1$ and $k=2$ under $K^*$ is represented in panel (b) as thicker lines than in (a).

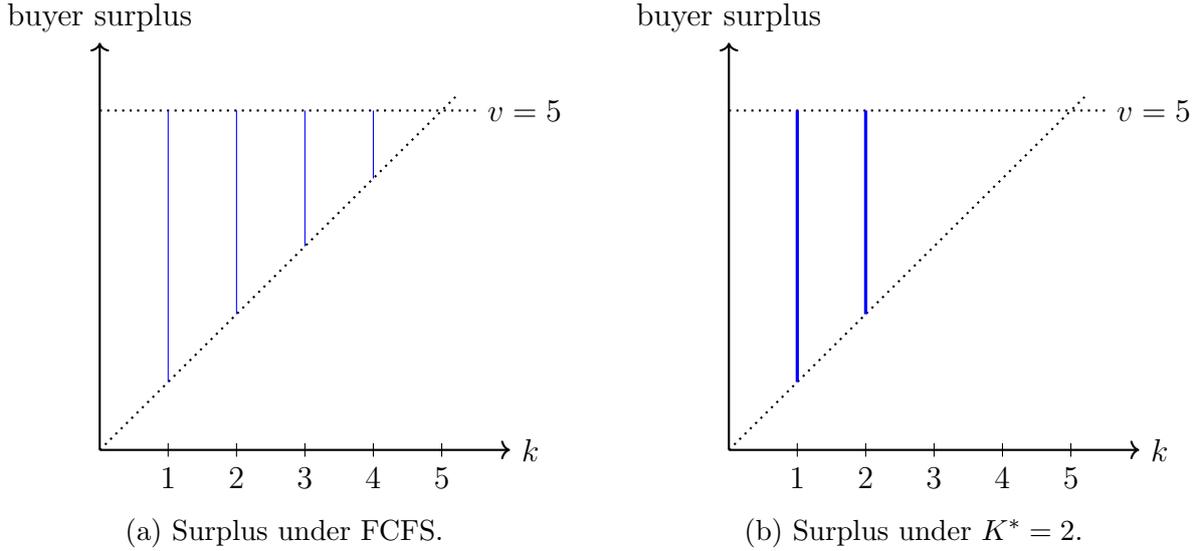
\begin{figure}[h!]
\centering

% Panel (a)
\begin{subfigure}{0.48\textwidth}
    \centering
    \begin{tikzpicture}[scale=0.9]
        % Axes
        \draw[->, thick] (0,0) -- (6,0) node[right] {$k$};
        \draw[->, thick] (0,0) -- (0,6) node[above] {buyer surplus};

        % k-axis tick marks
        \foreach \k in {1, 2, 3, 4, 5} {
            \draw (\k, 0.1) -- (\k, -0.1) node[below] {$\k$};
        }
        
        % Dotted lines
        \draw[dotted, thick] (0,0) -- (5.2,5.2); % 45-degree line
        \draw[dotted, thick] (0,5) -- (5.5,5) node[right] {$v=5$};

        % Solid vertical lines
        \foreach \k in {1, 2, 3, 4} {
            \draw[blue] (\k, \k) -- (\k, 5);
        }
    \end{tikzpicture}
    \caption{Surplus under FCFS.}
    \label{fig:panel_a}
\end{subfigure}
\hfill % Adds space between the two figures
% Panel (b)
\begin{subfigure}{0.48\textwidth}
    \centering
    \begin{tikzpicture}[scale=0.9]
        % Axes
        \draw[->, thick] (0,0) -- (6,0) node[right] {$k$};
        \draw[->, thick] (0,0) -- (0,6) node[above] {buyer surplus};

        % k-axis tick marks
        \foreach \k in {1, 2, 3, 4, 5} {
            \draw (\k, 0.1) -- (\k, -0.1) node[below] {$\k$};
        }

        % Dotted lines
        \draw[dotted, thick] (0,0) -- (5.2,5.2); % 45-degree line
        \draw[dotted, thick] (0,5) -- (5.5,5) node[right] {$v=5$};
        
        % Solid vertical lines (thicker)
        \foreach \k in {1, 2} {
            \draw[blue, very thick] (\k, \k) -- (\k, 5);
        }
    \end{tikzpicture}
    \caption{Surplus under $K^*=2$.}
    \label{fig:panel_b}
\end{subfigure}

\caption{Comparison of buyer surplus.}
\label{fig:surplus_comparison}

\vspace{1em} % Adds a bit of vertical space before the note
\small{Note: $v=5, c=1$, and $\rho=\lambda=\mu=1$.}
\end{figure}

\subsubsection{LCFS to the rescue?}

\cite{hassin1985} made an interesting observation that the Last-Come, First-Served (LCFS) restores the welfare optimal outcome identified in \Cref{thm:cutoff}. The idea is as follows.  When LCFS is used, there is always a strict incentive for an arriving buyer to join the queue, as long as he has the option to leave the queue at a later time, an option we assume to exist here.  Hence, each arriving buyer always joins the queue.  The buyer in the queue who arrived earliest, let's call him {\it the first incumbent},  is last to be served, so he is most willing to leave the queue if it becomes too long. In other words, the maximal queue length is effectively chosen in LCFS by the first incumbent's decision to leave the queue.  The key insight boils down to the following difference between FCFS and LCFS.  In the FCFS,  when a buyer joins the queue when $k=K_{\fcfs}-1$, part of the social cost of that decision is externalized to the buyers who will arrive later and have to wait longer;\footnote{Such externalities exist even if no new buyer joins the queue when $k= K_{\fcfs}$, since the queue length eventually becomes strictly below $K_{\fcfs}$ and new buyers arrive and have to wait longer because the marginal buyer decided to join the queue.} however, when the first incumbent decides to stay when the new entry triggers the queue length to increase from $K^*$ to $K^*+1$, nobody else bears that social cost, except for the first incumbent.  This logic suggests that the first incumbent leaves the queue if and only if a new entry causes the queue length to reach $K^*+1$.

While this logic is compelling, \cite{hassin1985} offers no formal analysis or proof. I provide a simple analysis here, whose proof, along with most of the proofs, appear in the Appendix.

\begin{theorem} \label{thm:hassin} Assume  $d=\infty$ and  $\a=1$.
LCFS implements the optimal solution of [$\mathcal P_{\ntu}'$] with $K_{\lcfs}=K^*$.
\end{theorem}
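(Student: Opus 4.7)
The plan is to identify a no-externality structure peculiar to LCFS, use it to reduce each buyer's optimal stopping problem to an autonomous one, and then match the resulting individual optimum with the planner's optimum via Little's law.

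First, I would track the trajectory of an arriving buyer $A$ under LCFS. Let $n_t$ denote the number of buyers who arrived after $A$ and are still waiting at time $t$; then $A$'s service position equals $n_t+1$. The key structural observation is that $(n_t)_{t\ge 0}$ evolves as an \emph{autonomous} birth-death process: it increases by one at Poisson rate $\lambda$ (new arrival), decreases by one at rate $\mu$ (the next service goes to position $1$, which is newer than $A$ whenever $n_t\ge 1$), and, at $n_t=0$, the next service (rate $\mu$) serves $A$ itself. Buyers older than $A$ can leave only as first incumbent, and first-incumbent departures do not alter the positions of newer buyers; hence, starting from $n_0=0$ at $A$'s arrival, the law of $(n_t)$ is invariant to the queue state at $A$'s arrival and to the strategies of older buyers.

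Second, I would exploit this autonomy to pin down the best response. Any ``quit at position $\bar K+1$'' strategy reduces to the threshold ``quit at $n_t=\bar K$,'' and the resulting expected payoff $V(\bar K)$---with terminal reward $v$ upon absorption at $n_t=0$, flow cost $c$, and termination payoff $0$ at the threshold---is a function of $\bar K$ alone, independent of the system state and of others' strategies. Thus each buyer faces a standard optimal stopping problem on a birth-death chain, and the dominant threshold solution $\bar K^*=\arg\max_{\bar K} V(\bar K)$ is the unique best response. In the symmetric equilibrium, only the first incumbent ever has position equal to queue size, so the boundary is triggered precisely when a new arrival would take the queue to $\bar K^*+1$, recovering the cutoff structure of \Cref{thm:cutoff}.

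Third, I would identify $\bar K^*$ with $K^*$ via Little's law. Under threshold $\bar K$, the welfare objective in $[\mathcal P_{\ntu}']$ can be rewritten as $\sum_{k\ge 1} p_k^{\bar K}(\mu v-ck)=\mu v(1-p_0^{\bar K})-c\,E[Q]$. On the buyer side, every one of the $\lambda$ arrivals per unit time obtains $V(\bar K)=v\,\Pr[\text{served}]-c\,E[T_A]$; since $\lambda\Pr[\text{served}]=\mu(1-p_0^{\bar K})$ (throughput) and $\lambda E[T_A]=E[Q]$ (Little's law), $\lambda V(\bar K)$ equals the welfare rate. So maximizing welfare over $\bar K$ is equivalent to maximizing $V(\bar K)$, giving $\bar K^*=K^*$ and hence $K_{\lcfs}=K^*$.

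The main obstacle I expect is the autonomy step: rigorously verifying that $(n_t)$ is independent of the queue history and of every other buyer's strategy---including off-equilibrium deviations where older buyers might use different thresholds---requires careful bookkeeping of how each event (arrival, service, first-incumbent quit) affects $n_t$ and $A$'s position. Once this no-externality property is nailed down, the dominant-strategy conclusion and the Little's-law equivalence are essentially mechanical.
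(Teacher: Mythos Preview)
Your approach is correct and takes a genuinely different route from the paper's. The paper works with the first incumbent's value function: it sets up and solves the Bellman system for $w_k$ (the first incumbent's value at queue length $k$ under threshold $K$), obtains the closed form $w_K=p_0^K\bigl(v-\sum_{j=0}^{K-1}(K-j)c\rho^j/\mu\bigr)$, and then checks by direct computation the \emph{marginal} identity $w_K\propto W(K)-W(K-1)$; together with quasi-concavity of $W$, this yields $K_{\lcfs}=K^*$. You instead exploit the autonomy of $(n_t)$ to make the ex-ante value $V(\bar K)$ well-defined regardless of the system state, and then use throughput conservation plus Little's law to get the \emph{level} identity $\lambda V(\bar K)=W(\bar K)$ directly---so the private and social objectives coincide without solving any recursion. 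Your argument is more conceptual and arguably more portable (it does not lean on the explicit $M/M/1$ algebra), while the paper's marginal identity makes the no-externality interpretation vivid precisely at the first incumbent's boundary decision.

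One caveat: your ``dominant threshold'' claim is slightly too strong. Autonomy of $(n_t)$ holds with respect to \emph{older} buyers' strategies, as you correctly note, but not with respect to newer ones: if you deviate to $\bar K\ge K^*+2$ while others play $K^*$, then once your position exceeds $K^*+1$ the oldest newer buyer hits his threshold and quits, so $(n_t)$ is reflected at $K^*$ rather than autonomous, and your payoff is no longer given by the formula $V(\bar K)$. This does not undermine the equilibrium conclusion---downward deviations and the one-step upward deviation are handled by $V(\bar K)\le V(K^*)$, and the residual ``never-quit'' deviation can be dispatched by a short value comparison at the reflecting boundary---but it does mean the best response is not literally strategy-independent. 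The paper's proof operates at a comparable level of detail on this point (it posits the threshold form and checks $w_{K_{\lcfs}}\ge 0\ge w_{K_{\lcfs}+1}$ without a full deviation analysis), so this is a refinement to flag rather than a flaw.
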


While insightful and valuable in isolating the source of inefficiency under FCFS, the practicality of LCFS is dubious. First, the system may be gamed; once they lose their last arrival status, the agents may leave the queue and reenter it, thereby regaining that status and the priority that comes with it.  While the designer may manage that problem by prohibiting reentry, this requires the designer to verify agents'  identities, which the designer may find costly or undesirable.  Second, the analysis and results implicitly assume ``preemption," meaning that a server may switch from an existing buyer to a new one when the latter arrives, possibly while serving the former.  While this is payoff-irrelevant given the memoryless property of the exponential service process, in practice, productivity losses may occur due to service interruptions.  While FCFS does not suffer from such losses, LCFS  does. Last but not least, the psychology of queueing is clear that individuals experience strong resentment when the arrival order is not respected, let alone reversed; see \cite{larson1987}.  

\subsubsection{Insufficient queueing under FCFS} \cite{leshnoAER} considers a situation in which the assigned goods are so over-demanded that agents' waiting costs do not constitute social costs (provided that, once a good becomes available, it is assigned to somebody without delay). Namely, in his model, the total number of people who must wait remains independent of who is served. The goal is instead to keep them waiting until goods with high match values become available, no matter how long it may take.  

He focuses on the priority rule as an instrument to maximize the incentive to wait for the matched item.  The main tradeoff across different priority rules can be captured isomorphically within our framework by assuming $d=\infty$ and $\a=0$. Just like the planner in \cite{leshnoAER}, our designer with $\a=0$ wishes to maximize buyers' incentives for waiting, to minimize the risk of ``buyer stockout'' or to maximize the option value of future matching when goods arrive.

In this case, either  $K^*=\infty$ (if $\m>\l$), or \cref{IRB} is binding at $K^*$. That is, if  $K^*<\infty$, then
\begin{align} \label{eq:1}
    \sum_{k=1}^{K^*} p_k^{K^*} (\m v - kc)\approx0,
\end{align}
where we use $\approx$ to mean that $K^*$ is the smallest $K$ among those that make the LHS nonpositive.\footnote{More precisely, the rationing of entry in state $k=K^*-1$ is calibrated to satisfy the equality exactly. The chosen $K^*$ with the rationing will be precisely the smallest $K$ among those that make the LHS nonpositive.} The condition \cref{eq:1} means that 
$$K^*\ge\m \frac{v}{c} \ge  K_{\fcfs},$$
with the first inequality being strict whenever $K^*>1$, in which case FCFS is suboptimal.  Unlike \cite{naor1969regulation}, the problem now is that buyers queue too little, rather than too much.  The difference is due to the objective function: $\a=1$ versus $\a=0$.

\cite{leshnoAER}'s main result is that Service in Random Order (SIRO) improves on FCFS.  SIRO assigns the good uniformly at random amongst the buyers in the queue.  Given this rule, a shorter queue is still more profitable to join than a longer queue, so the entry decision still has a cutoff structure: there exists some $K_{\siro}$ such that a buyer enters the queue if and only if $k<K_{\siro}$  and those who enter the queue stay until they receive the goods.

\begin{theorem} \label{thm:leshno} Assume $d=\infty$ and $\a=0$.  Then, $K_{\fcfs}\le K_{\siro}\le K^*$, and SIRO attains a weakly higher (resp. lower) value of the objective for the designer than does FCFS (resp. the optimum).
\end{theorem}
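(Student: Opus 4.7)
My plan is to establish the two cap inequalities $K_{\fcfs}\le K_{\siro}$ and $K_{\siro}\le K^*$ separately, from which the welfare ranking will follow immediately. The key preliminary observation is that any work-conserving service discipline paired with the same admission cap $K$ induces the same M/M/1/$K$ stationary queue-length distribution $p_k^K=\rho^k/(1+\rho+\cdots+\rho^K)$, and that with $d=\infty$ and $\alpha=0$ the objective in $[\mathcal P_{\ntu}']$ collapses to $\mu\pi(1-p_0^K)$, which is strictly increasing in $K$. Hence the welfare ordering is an immediate consequence of the cap ordering.

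For $K_{\fcfs}\le K_{\siro}$, let $V^K(j)$ denote the expected remaining wait of a tagged buyer who is one of $j$ currently present in a SIRO/cap-$K$ system, so that the marginal SIRO entrant arrives at state $K_{\siro}-1$ and faces expected wait $V^{K_{\siro}}(K_{\siro})$. I would show $V^K(K)\le K/\mu$ by a direct coupling. Construct an auxiliary system whose queue is imagined to remain at size $K$ forever, and couple each Poisson($\mu$) service epoch across the two systems with an i.i.d.\ draw $U_j\sim\text{Unif}[0,1]$: the tagged buyer is declared served in the auxiliary iff $U_j\le 1/K$, and in the true SIRO dynamics iff $U_j\le 1/q_j$, where $q_j\in\{1,\ldots,K\}$ is the current SIRO queue size. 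Since $q_j\le K$ implies $1/q_j\ge 1/K$, the tagged buyer in the real system is served no later than in the auxiliary. The auxiliary absorption time has a geometric (mean $K$) number of $\text{Exp}(\mu)$ holding times, so $E[T_{\text{aux}}]=K/\mu$ and therefore $V^K(K)\le K/\mu$. It follows that the FCFS joining condition $v\ge cK/\mu$ at cap $K=K_{\fcfs}$ implies the SIRO condition $v\ge cV^K(K)$ at the same cap, giving $K_{\siro}\ge K_{\fcfs}$.

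For $K_{\siro}\le K^*$, I would aggregate the individual SIRO IR constraints. At $K:=K_{\siro}$, incentive compatibility requires $v\ge cV^K(k+1)$ at every joining state $k\in\{0,\ldots,K-1\}$. Averaging against the stationary weights $p_k^K$ yields
\begin{equation*}
v(1-p_K^K)\ge c\sum_{k=0}^{K-1}p_k^K V^K(k+1)=cE[L^K]/\lambda,
\end{equation*}
where the last equality is Little's law applied to the queue (effective arrival rate $\lambda(1-p_K^K)$ times mean wait per joining buyer equals mean queue length). Substituting the steady-state flow identity $\lambda(1-p_K^K)=\mu(1-p_0^K)$ gives $\mu v(1-p_0^K)\ge cE[L^K]$, which a fortiori delivers the relaxed-program IR $\mu v\ge cE[L^K]$, i.e., $\sum_k p_k^K(\mu v-kc)\ge 0$. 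Hence $K_{\siro}$ is feasible for $[\mathcal P_{\ntu}']$ and cannot exceed the LP optimum $K^*$.

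The main delicacy lies in Step 1: the coupling is clean precisely because the marginal SIRO entrant occupies the cap state, where $q_j\le K$ holds automatically; an analogous coupling at interior arrival states would fail, since the queue can then grow above the arrival state, but fortunately this refinement is unnecessary for the theorem. Everything else is a short algebraic consequence of Little's law and the monotonicity of $1-p_0^K$ in $K$.
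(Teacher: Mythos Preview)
Your proposal is correct, and your second step---aggregating the SIRO entry constraints via Little's law and the flow identity $\lambda(1-p_K^K)=\mu(1-p_0^K)$ to recover \eqref{IRB} at $K_{\siro}$---is essentially the paper's own argument. The first step, however, is a genuinely different and more elementary route. The paper establishes $\tau_K\le K/\mu$ by writing down the full linear system for the SIRO waiting times $(\tau_1,\ldots,\tau_K)$ and then invoking a monotonicity result $1/\mu\le\tau_1\le\cdots\le\tau_K\le K/\mu$ whose proof is deferred to an online appendix; it further argues strictness $\tau_K<K/\mu$ via a short contradiction off the recursion. Your coupling bypasses all of this: because the tagged marginal entrant sits in a queue that can never exceed the cap $K$, his per-epoch service probability is always at least $1/K$, and a single comparison with the stay-at-$K$ auxiliary gives $V^K(K)\le K/\mu$ directly. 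The paper's approach yields the finer ordering of the $\tau_k$'s and strict inequality, but neither refinement is needed for the theorem as stated, so your coupling is arguably the cleaner route here.

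One cosmetic slip: the ``a fortiori'' step to $\mu v\ge cE[L^K]$ is both unnecessary and a slight misreading of \eqref{IRB}. The constraint in the relaxed program is $\sum_{k\ge 1}p_k^K(\mu v-kc)\ge 0$, which equals exactly $\mu v(1-p_0^K)-cE[L^K]\ge 0$---precisely the inequality you had already derived one line earlier. Drop the ``a fortiori'' and conclude directly.
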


The intuition behind \Cref{thm:leshno} is made clear via the following example (adapted from \cite{che2021optimal}).  
Assume $\m=\l=1$ and $  \frac{3 }{2 } c\le v<2c$. The former inequality implies that $K^*=2$, meaning the optimal cutoff policy accommodates up to two buyers in the queue,\footnote{Little's law can be used to show that the ex ante expected wait time conditional on there being a buyer for queue with $K=2$ equals: $\frac{p_1^2+2p_2^2}{\l(p_0^2+p_1^2)}=3/2$, since $p_0^2=p_1^2=p_2^2=1/3$ given $\m=\l=1$.}  and the implication of the latter condition will become clear. 

\Cref{fig:leshno} plots the expected wait costs for a buyer who arrives at an empty queue ($k=0$) and a queue with one buyer ($k=1$) under FCFS.  Under FCFS, a buyer will expect to wait  $1/\m=1$  unit of time for  the good if $k=0$ and $2/\m=2$ units of time if $k=1$.  The associated waiting costs are $c$ and $2c$, respectively. If  $v<2c$, as described in the figure, then a second buyer will refuse to join the queue, so FCFS supports at most one buyer in the queue.

\begin{figure}[h!]
\centering

\begin{subfigure}[b]{0.45\textwidth}
\centering
\begin{tikzpicture}[scale=2,font=\small]

% Axis labels
\draw (-0.2,0) -- (3.5,0) node[right] {$k$};
\draw (0,-0.2) -- (0,2.5) node[above] {Waiting costs};

% Circles
\filldraw[black] (2,2) circle (0.3mm);
\filldraw[black] (1,1) circle (0.3mm);
\filldraw[red] (2,1.635) circle (0.3mm);
\filldraw[red] (1,1.32) circle (0.3mm);

% Labels
\node[left] at (0,2) {$2c$};
\node[left] at (0,1) {$c$};
\node[left] at (0,1.75) {$v$};
\node[below] at (1,0) {$k=0$};
\node[below] at (2,0) {$k=1$};

% Arrows
\draw[red,->,thick] (2,2) -- (2,1.64);
\draw[red,->,thick] (1,1) -- (1,1.31);

% Line
\draw (0,1.75) -- (3.3,1.75);

% Dotted lines
\draw[dotted] (1,0) -- (1,2.3);
\draw[dotted] (2,0) -- (2,2.3);
\draw[dotted] (0,1) -- (1,1);
\draw[dotted] (0,2) -- (2,2);

\end{tikzpicture}
\caption{SIRO supports 2 agents in the queue; FCFS can't.}
\end{subfigure}
\hfill
\begin{subfigure}[b]{0.45\textwidth}
\centering
\begin{tikzpicture}[scale=2,font=\small]

% Axis labels
\draw (-0.2,0) -- (3.5,0) node[right] {$k$};
\draw (0,-0.2) -- (0,2.5) node[above] {Waiting costs};

% Circles
\filldraw[black] (2,2) circle (0.3mm);
\filldraw[black] (1,1) circle (0.3mm);
\filldraw[red] (2,1.64) circle (0.3mm);
\filldraw[red] (1,1.32) circle (0.3mm);
% Labels
\node[left] at (0,2) {$2c$};
\node[left] at (0,1) {$c$};
\node[left] at (0,1.5) {$v$};
\node[below] at (1,0) {$k=0$};
\node[below] at (2,0) {$k=1$};

% Arrows
% \draw[red,->,thick] (2,1.6) -- (2,1.5);
% \draw[red,->,thick] (1,1.33) -- (1,1.5);
\draw[red,->,thick] (2,2) -- (2,1.64);
\draw[red,->,thick] (1,1) -- (1,1.31);
% Line
\draw (0,1.5) -- (3.3,1.5);

% Dotted lines
\draw[dotted] (1,0) -- (1,2.3);
\draw[dotted] (2,0) -- (2,2.3);
\draw[dotted] (0,1) -- (1,1);
\draw[dotted] (0,2) -- (2,2);

\end{tikzpicture}
\caption{SIRO can't support 2 agents.}
\end{subfigure}

\caption{Comparison of queueing rules under full information} 
\footnotesize
Notes: The black and red dots represent the expected waiting cost under FCFS  and SIRO, respectively.
\label{fig:leshno}
\end{figure}
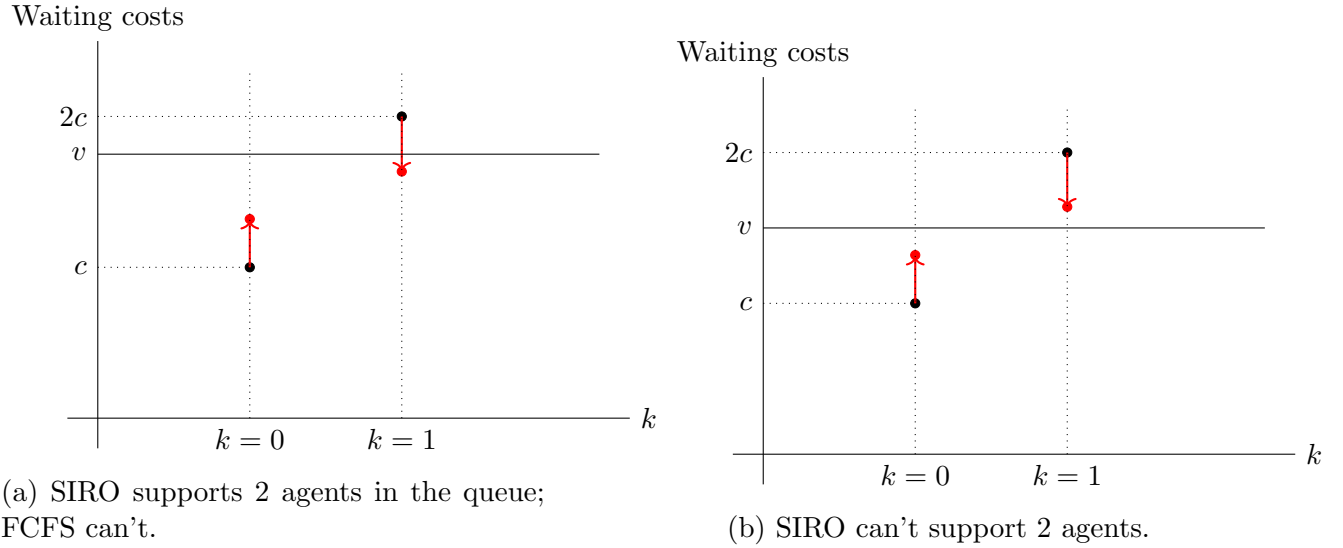

Note that a buyer arriving in state $k=0$ enjoys a strictly positive surplus.  If one can transfer this ``slack'' incentive to a buyer arriving in state $k=1$, the latter may be incentivized to join the queue. This is precisely what SIRO does by ``flattening'' the waiting cost curve.   By randomizing service priority, SIRO increases the priority of a buyer arriving in state $k=1$ at the expense of the buyer arriving in state $k=0$, compared to FCFS. 

Observe that the waiting cost curve is not completely flat even under SIRO.\footnote{The mean waiting times are   $4/3$ if $k=0$ and $5/3$ if $k=1$; see the equations in the proof of \Cref{thm:leshno}.}  If $v$ is sufficiently high (panel (a)), SIRO can restore the optimum.  However, if $v$ is lower (panel (b)), in particular if $v<\frac{5}{3}c$,  SIRO can't support two buyers in the queue; see panel (b) of \Cref{fig:leshno}. 
 
 It is possible to completely flatten the waiting cost curve by transferring priority to the second arrival even further---under the rule called the Load Independent Expected Waiting  (LIEW) by  \cite{leshno2019dynamic}).  Such a rule maximizes the queueing incentive by equalizing the wait times for a buyer joining in state $k=0$ versus joining in state $k=1$.  In the example, LIEW supports two agents in the queue, provided that $v$ is no less than $1.5c$, the conditional mean waiting cost for a queue with $K=2$. That is, LIEW can attain the optimal solution of $[\mathcal P_{\ntu}']$ (with $\a=0$).
 
 In summary, FCFS is suboptimal and outperformed by rules such as SIRO and LIEW when the queue is fully observable.
However, if the designer can control information,  the queueing rule becomes irrelevant for the incentive to join the queue.  Suppose the designer informs buyers only whether $k\in \{0,1\}$ (``recommend to join'')  or not.  Then, a buyer can't distinguish between $k=0$ and $k=1$ upon his arrival; he then forms the identical belief between $k=0$ and $k=1$, thus facing the expected waiting time of $1.5$, just like LIEW. In other words, the information policy can replace the LIEW.  Under this (optimal) information policy, the {\it mean} wait time---and hence the incentive to join the queue---is the same across {\it all} queueing rules, so they cannot be differentiated on this account.  

Meanwhile, LIEW may create pathological incentives for buyers once they join the queue. Just like LCFS, after joining the queue, a buyer realizes that his remaining expected wait time increases as time passes.  This is because LIEW, to equalize the wait time between a buyer arriving in state $k=0$ and a buyer arriving in state $k=1$, must worsen the former buyer's priority when a second buyer arrives, putting him in a position akin to one who arrives in $k=1$ under FCFS.  This means that the buyer will be tempted to exit the queue once a second buyer joins and takes away his priority.  Consequently, LIEW (and any other rules) can't implement the optimal outcome under complete information.   

\subsection{Information Design and the Optimality of FCFS}\label{sec:che-tercieux}

While the complete information assumption is realistic for some classic contexts, modern businesses, particularly platforms, do have substantial control over what information buyers can or can't have.  Call centers, a quintessential example of queue systems, often leave customers without precise information about how long they have to wait. Public housing systems, a motivating example in \cite{leshnoAER}, often keep the eligible recipients in the dark about their spots in the waitlists.
Digital platforms, which engage customers through mobile or web interfaces, have even more control over information.  Ride-hailing platforms, such as Uber or Lyft, inform customers about the locations of matched drivers but do not reveal other potential matches or allow them to choose from these options. 

Information design confers the designer additional power to control buyers' incentives.  Recall that under complete information, FCFS offers excessive incentives for queueing from the consumer surplus standpoint $(\a=1)$ but insufficient incentives from the producer surplus standpoint $(\a=0)$.  As we already hinted in the last example, information design can help overcome the latter problem. But without additional tools, information design  alone cannot implement the optimal solution $[\mathcal P_{\ntu}']$  incentive-compatibly.

\cite{lingenbrink2019optimal} and \cite{Anunrojwong2020} studied information design by the designer who seeks to maximize consumer welfare (i.e.,  $\a=1$) under FCFS. They showed that the optimal information involves two signals; either the current queue  state  is ``$k<K$'' or ``$k=K$,'' for some $K\in \N$, with the former signal intended for encouraging buyers to join the queue and the latter for discouraging buyers from doing so.  This result discovers the well-known ``folk'' theorem in economics:  {\it a mechanism designer should reveal no more than the actions that she recommends to an agent,} the simple logic being that any distinct signals leading to the same action can be pooled into one without violating any incentive constraints.\footnote{Since the desired action is incentive compatible under each signal, it must be on average over the multiple signals, and hence under the pooled signal.}  Without admissions control, however, this policy itself cannot keep buyers from joining the queue at the signal ``$k=K$'' unless $K\ge K_{\fcfs}$.  \cite{Anunrojwong2020} shows that a cap $\hat K< K_{\fcfs}$ (where $\hat K$ may possibly equal $K^*$) can be implemented if there is an additional type of buyers who incur no wait costs (or have a very low outside option) and are willing to join the queue, regardless of the current queue length.  Then, the signal ``$k\ge \hat K$'' would mean that there are many more than $K_{\fcfs}$ buyers already in the queue, making it incentive-compatible for a buyer with cost $c>0$ not to join when learning ``$k\ge \hat K$.''

This problem does not arise when the designer can control {\it admissions} into a queue.   Formally, the designer can control admission by refusing to serve buyers whom the designer wishes to keep out.  In practice, a call center discouraging entry (accompanied by a repetitive soundtrack) is often effective for this purpose.  \cite{che2021optimal} consider the designer who controls three sets of instruments: (i) {\it admission control} (buyers can be denied entry into a queue or even removed from it); (ii) {\it service priority allocation} (or queue disciplines); and (iii) {\it information design}.  They consider buyers' incentives not only to  join the queue  but also to stay in the queue, when recommended by the mechanism.  The literature has largely ignored this latter incentive issue.\footnote{The exception is \cite{hassin1985}, where the first incumbent's decision to leave the queue plays a crucial role in sustaining welfare maximum.} \cite{che2021optimal} obtain the following result:  

\begin{theorem} \label{thm:che-tercieux} The optimal solution of [$\mathcal P_{\ntu}'$] is implementable by FCFS under the optimal information.  
\end{theorem}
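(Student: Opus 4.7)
The plan is to exhibit an implementing mechanism and verify both stationarity and the two buyer-incentive conditions (join and stay). Starting from an optimal $\bp^{*}$ of $[\mathcal P_{\ntu}']$ with cutoffs $K^{*},L^{*}$, the proposed mechanism uses (i) FCFS priority on both sides; (ii) admission control that admits every arriving buyer while $\theta \leq K^{*}-2$, rations admission at $\theta=K^{*}-1$ so as to match the possibly-slack balance $(B_{K^{*}-1})$, and denies at $\theta=K^{*}$, with a symmetric rule on the item side; and (iii) an information policy issuing only two signals per side---a single ``Go'' recommendation pooling every ``instant-match'' and ``join'' state, and a ``Stop'' recommendation at the outer boundary. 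Under obedient behavior this induces a regenerative queue-state process whose unique stationary distribution is $\bp^{*}$, so the designer's objective attains the LP value and firm IR follows directly from $(IR_{S})$.

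For interim join IC, PASTA makes the ``Go''-signalled buyer's expected surplus proportional to
\[
v\sum_{\ell \geq 1} p_{-\ell}^{*} + \sum_{k=0}^{K^{*}-1} p_{k}^{*}\bigl(v - c(k+1)/\m\bigr),
\]
where $(k+1)/\m$ is her mean FCFS sojourn length in state $k$. Two ingredients collapse the nonnegativity of this quantity into $(IR_{B})$: Little's law applied to the buyer queue yields the identity $\sum_{k=0}^{K^{*}-1}(k+1) p_{k}^{*} = (\m/\l)\sum_{k \geq 1} k p_{k}^{*}$, and summing $(B_{k})$ over $k=0,\ldots,K^{*}-1$ yields $\l\sum_{k=0}^{K^{*}-1} p_{k}^{*} \geq \m\sum_{k \geq 1} p_{k}^{*}$. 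Chaining these two facts reduces the display to a multiple of the LHS of $(IR_{B})$, already nonnegative on $\bp^{*}$. Pooling the instant-match and join signals thus transfers the strictly positive surplus enjoyed in inventory states to subsidize borderline join states, and join IC becomes a direct consequence of the IR already imposed in the relaxed program.

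Stay IC is the most delicate step. The key property of FCFS, which LCFS and LIEW lack, is that a buyer's queue position is pathwise non-increasing: service completions advance her by one and new arrivals land strictly behind her. Because the optimal information reveals nothing after joining beyond ``still in queue,'' her posterior over current position updates only by Bayes' rule against the Erlang survival functions, and by memorylessness her expected residual wait equals $\E[P_{t}\mid\text{in queue at } t]/\m$. Showing that this quantity stays below $v/c$ for every $t$ rests on a structural consequence of the binding $(B_{k})$: the weights $p_{k}^{*}\propto \r^{k}$ form a geometric sequence up to the boundary, and this geometric structure is what renders the induced truncated mixture of Erlangs well enough behaved in its mean residual life that stay IC reduces to the already-established join IC. Handling the boundary atom at $k=K^{*}-1$ created by the rationing that makes $(IR_{B})$ bind is the one technical point where the argument would require real care, and this is where I would expect the main obstacle to lie.
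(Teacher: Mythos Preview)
Your setup and join-IC argument match the paper's: FCFS with cutoff-based admission control and a binary recommendation policy, with the PASTA/Little's-law/balance chain reducing join IC to $(IR_B)$.

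The gap is in stay IC. You correctly identify the target as showing $\E[P_t \mid \text{in queue at }t]/\mu \le v/c$ for all $t$, and you correctly note the geometric initial weights from the binding $(B_k)$. But you then assert without argument that ``this geometric structure renders the truncated mixture of Erlangs well enough behaved that stay IC reduces to join IC.'' The geometric structure makes the $K^*=\infty$ case trivial---the Erlang mixture with unbounded geometric weights is exactly exponential, so mean residual life is constant---but for $K^*<\infty$ the truncation destroys memorylessness, and the tension between pathwise-decreasing position and Bayesian updating toward a higher initial position must be resolved explicitly. The paper does this by tracking the likelihood ratios $r_\ell^t := \gamma_\ell^t/\gamma_{\ell-1}^t$ of the posterior over current position, deriving the ODE system $\dot r_\ell^t = \mu r_\ell^t(r_{\ell+1}^t - r_\ell^t)$ with $r_{K^*+1}^t \equiv 0$ and $r_\ell^0 = \rho$, and showing via a first-crossing argument on the second derivatives that $\dot r_\ell^t \le 0$ throughout. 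Hence the posterior on position improves in likelihood-ratio order and the expected residual wait is nonincreasing in $t$. You have no analogue of this step; ``pathwise non-increasing position'' alone does not deliver it, since the buyer cannot observe her position. You also misplace the difficulty: the rationing atom at $k=K^*-1$ that you flag as ``the main obstacle'' is handled by the paper as a routine adjustment to the updating formula, whereas the belief-dynamics argument above is already the hard part even without rationing.

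A smaller wrinkle with your pooling of instant-match states into the ``Go'' signal: it is fine for join IC, but at $t=0^+$ the buyer observes that she received no item, so stay IC conditions out the inventory states and the operative inequality becomes $\sum_{k\ge 1} p_k^*(\mu v - ck) \ge 0$. This coincides with $(IR_B)$ only when $L^*=0$, which is the setting the paper's proof is written for.
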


\Cref{thm:che-tercieux} states that the optimal solution to the relaxed program [$\mathcal P_{\ntu}'$] which ignores the incentives problems can be implemented by an admissions control which keeps the buyers from the queue whenever $K^*$ is reached (with possible rationing at $k=K^*-1$), an information policy which imply issues recommendation to join the queue if $k<K^*$ (subject to possible rationing at $k=K^*-1$), and a priority rule of FCFS.  

For a sufficiently large $\a\in [0,1]$, the problem is the excessive queueing a la \cite{naor1969regulation}; this problem is solved simply by keeping buyers from joining to queue beyond $K^*$.\footnote{Equivalently, as noted below, the optimum $K^*$ can be also implemented by a lottery of the real good (with value $v$) and a null good, calibrated to induce $K^*$.  If a buyer rejects a null good, he loses his priority.} The difficult situation is when $\a$ is so low that buyer have insufficient incentives for queueing, as was seen earlier. We already saw that, given \cref{IRB}, the optimal information policy provides buyers with sufficient incentives to join the queue, regardless of the priority rule.  However, it is not easy to provide them with incentives to stay in the queue once they join, which is what the optimal cutoff policy calls for.

To illustrate, consider the earlier example in which $d=\infty, v=1.5c$, and $\m=\l=1$. We note that $K^*=2$, at which \cref{IRB} is binding. Under optimal information, any arriving buyer invited to the queue expects to wait for 1.5 units of time, assuming he will never leave the queue.  Given \cref{IRB}, this wait time incentivizes buyers to join the queue as long as $k<K^*=2$ under any of the priority rules; see \Cref{app:che-tercieux} for the precise argument.

However, the incentive to stay in the queue once a buyer joins may or may not hold, depending on the priority rule.  \Cref{fig:waitingtime}  plots the mean {\it residual} wait times for a buyer who has spent time $t\ge 0$ on the queue. We consider five standard rules: FCFS, SIRO, LIEW, LCFS, and LCFS-PR, where LCFS-PR is the LCFS with ``preemption,'' namely, a rule in which an old agent leaves when a new agent enters the queue.  
\begin{figure}[h]
\caption{Expected wait times under alternative queueing rules.}
\label{fig:waitingtime}\centering
\includegraphics[height=2.5in]{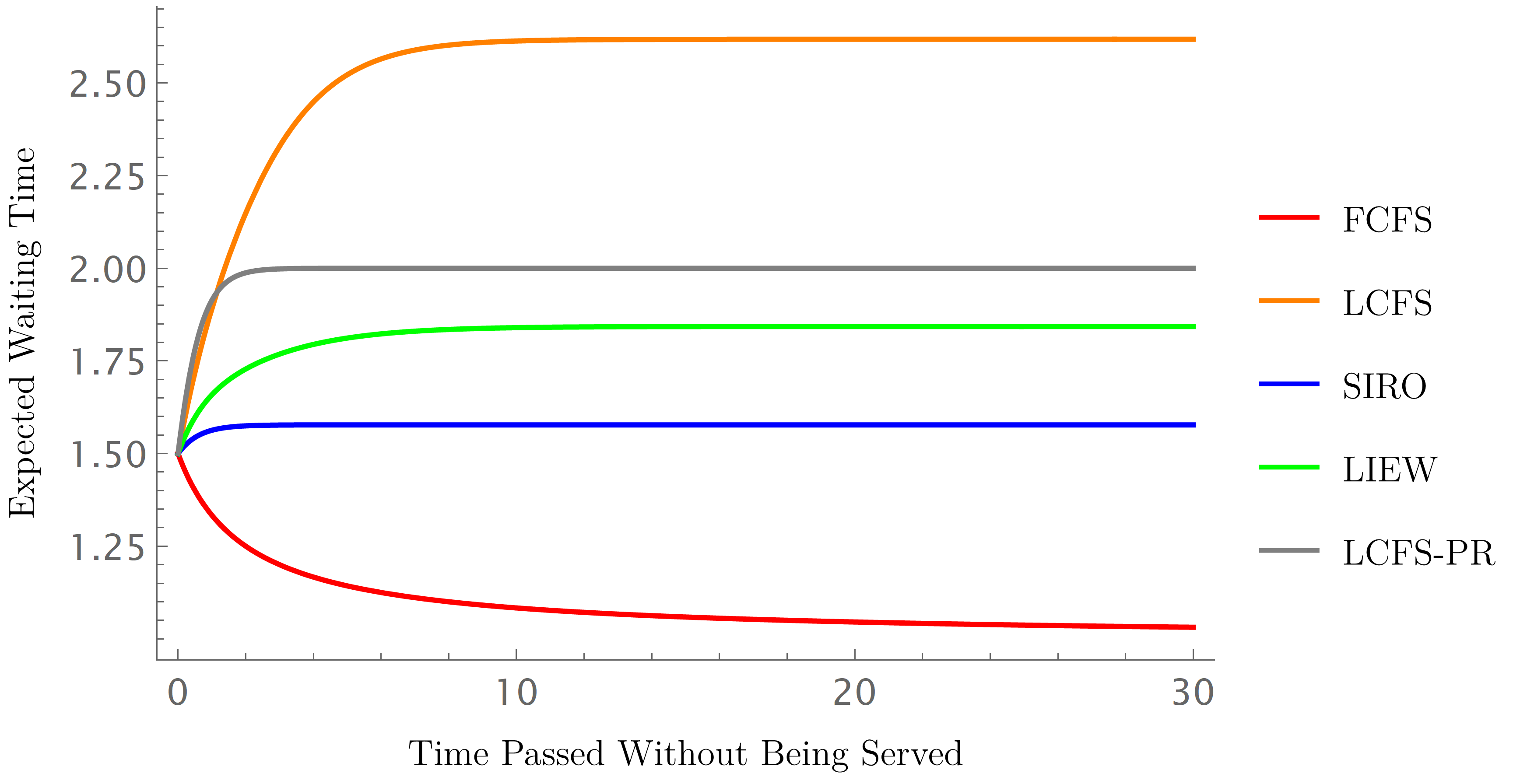}\newline
%Note: $M/M/1$ with $K^*=2$; %$\lambda=\mu =1$.
\end{figure}

Note first that the mean wait time is $1.5$ at $t=0$, as observed below, for all priority rules. Since $v=1.5 c$, buyers are thus indifferent to joining the queue (assuming they will stay until they are served). However, once they have joined, the mean residual wait times diverge under alternative rules as one's time on the queue $t$ rises: it {\it decreases} under FCFS but {\it increases} under all other queueing rules.  Hence, the agents will have incentives to stay until they are served under FCFS, but they will abandon the queue unless served immediately under every other rule. This difference means that the FCFS outperforms the other queueing rules.\footnote{The figure implies that under each of the other queueing rules, it cannot be an equilibrium that agents will join the queue up to two and stay until they are served.  It is unclear what will happen in equilibrium; they may randomize in leaving the queue, and/or they may excessively enter with a plan to leave soon after.  Regardless, \cite{che2021optimal} shows the designer will be worse off than under FCFS for some values of $\l$.}    

Under FCFS, a buyer's service priority increases over time as earlier arrivals leave the queue. Hence, conditional on the initial queue length, one expects to wait less over time, so the mean residual wait time decreases gradually.
However, there is also a countervailing force. Since an agent is not told about
whether $k=0$ or $k=1$ upon joining the queue, his belief about the initial queue length will also be updated as time progresses. On this account, he becomes  \emph{pessimistic} over time since the fact that he is still in the queue indicates that he likely underestimated the initial length of the queue when he
joined it. It turns out that the good news effect dominates the bad news effect in the $M/M/1$ model and even more broadly (see \cite{che2021optimal}). 

The mean residual wait time increases in other rules since the ``seniority'' in arrival order does not carry as much priority as in FCFS.  This is most evident with LCFS and LIEW.  As time passes, one faces a new entrant who takes away all or part of his priority, and one's residual wait time increases as time passes. SIRO suffers the same issue, albeit to a lesser degree.  Under SIRO, one's priority does not improve over time; only the queue length and one's belief about it matter. As time passes, one becomes more pessimistic on this account as he believes more buyers are in the queue than he initially thought, thus increasing his remaining wait time. 
  
At a more fundamental level, the above difference in queueing rules can be traced to the fact that they entail different {\it distributions} of wait times, although their  {\it mean} is the same.  In particular,   the wait time distribution is most ``fair,'' or {\it least dispersed}, under FCFS among all queueing rules; see \cite{Shanthikumar1987} for establishing this result under M/M/1 setup.  This means that both unusually short waits and unusually long waits are rare under FCFS compared with other rules. 
 
 This is easily seen in any realized within-cycle sample path of arrival times and departure times.   \Cref{fig:queue_comparison} illustrates an arbitrary sample path.\footnote{Any within-cycle sample path must have the same number of arrivals and departures.} A priority rule corresponds to a bipartite matching between arrival times and departure times.  In the $M/M/1$ model, the buyer who arrives first departs first under FCFS; this means that no two edges in the corresponding bipartite matching ``cross'' each other; see panel (a). By contrast, in any rule differing from FCFS, such as LCFS or  SIRO, the corresponding matching has crossing edges, as depicted in the panel (b).  This difference means a lower dispersion in wait times under FCFS; in the example, the wait times are $1.5, 1.5,$ and $2$ in (a), whereas they are $0.5, 0.5,$ and $4$ in (b), with the same mean $5/3$.\footnote{In general, since the wait time distribution becomes mean-preservingly contracted whenever any pair of crossed edges are swapped to reduce crossing, and since one can always repeat this swapping operation to move from any arbitrary matching to the matching under FCFS, it follows that the wait times under FCFS are a mean-preserving contraction of the wait times under any queueing rule.}

How does the wait time distribution affect a buyer's belief about residual wait time?  To study this, it is convenient to construct a probability space, as well as a buyer's belief, in two steps: (i) a sample path of arrival and departure times is first drawn according to exponential distributions (e.g., \Cref{fig:queue_comparison}) and (ii) a buyer's arrival time is then assigned to one of the arrival times (e.g., across $a_1$, $a_2$ or $a_3$).\footnote{The ``calendar'' time of one's arrival or departure carries no information in the steady state.}  One can then ``couple'' the sample paths under two queueing rules, such as (a) and (b).

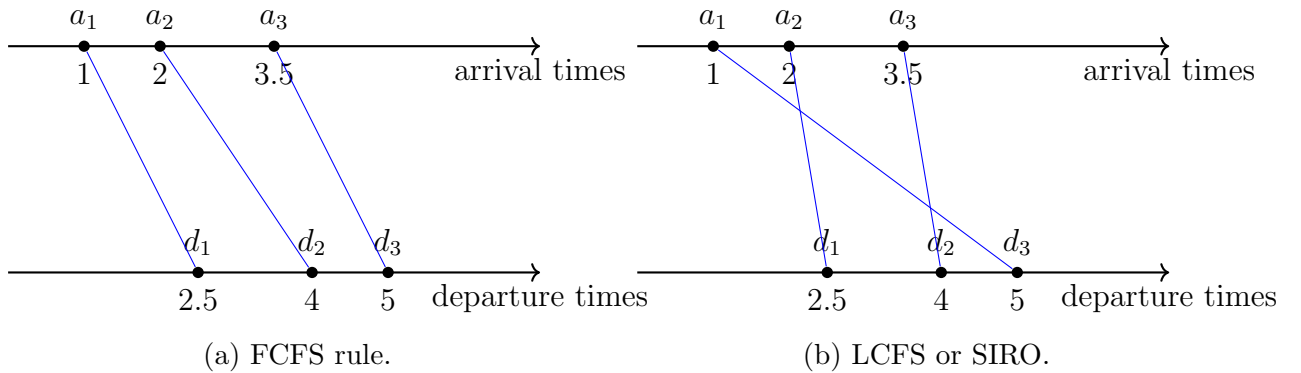
\begin{figure}[h!]
\centering

% Panel (a): FCFS
\begin{subfigure}{0.48\textwidth}
    \centering
    \begin{tikzpicture}
        % Define a style for the agent markers
        \tikzstyle{agent}=[circle, fill, inner sep=1.5pt]

        % Draw the two parallel lines and add labels
        \draw[->, thick] (0, 3) -- (7, 3) node[below] {arrival times};
        \draw[->, thick] (0, 0) -- (7, 0) node[below] {departure times};

        % Mark arrival times on the top line with numbers below
        \node[agent, label=above:$a_1$, label=below:1] (a1) at (1, 3) {};
        \node[agent, label=above:$a_2$, label=below:2] (a2) at (2, 3) {};
        \node[agent, label=above:$a_3$, label=below:3.5] (a3) at (3.5, 3) {};

        % Mark departure times on the bottom line with numbers
        \node[agent, label=above: $d_1$, label=below: 2.5] (d1) at (2.5, 0) {};
        \node[agent, label=above: $d_2$, label=below: 4] (d2) at (4, 0) {};
        \node[agent, label=above: $d_3$, label=below:5] (d3) at (5, 0) {};
        
        % Draw edges for FCFS
        \draw[blue] (a1) -- (d1);
        \draw[blue] (a2) -- (d2);
        \draw[blue] (a3) -- (d3);
    \end{tikzpicture}
    \caption{FCFS rule.}
    \label{fig:fcfs_rule_updated}
\end{subfigure}
\hfill % Adds space between the two figures
% Panel (b): LCFS or SIRO
\begin{subfigure}{0.48\textwidth}
    \centering
    \begin{tikzpicture}
        % Define a style for the agent markers
        \tikzstyle{agent}=[circle, fill, inner sep=1.5pt]

        % Draw the two parallel lines and add labels
        \draw[->, thick] (0, 3) -- (7, 3) node[below] {arrival times};
        \draw[->, thick] (0, 0) -- (7, 0) node[below] {departure times};

        % Mark arrival times on the top line with numbers below
        \node[agent, label=above:$a_1$, label=below:1] (a1) at (1, 3) {};
        \node[agent, label=above:$a_2$, label=below:2] (a2) at (2, 3) {};
        \node[agent, label=above:$a_3$, label=below:3.5] (a3) at (3.5, 3) {};

        % Mark departure times on the bottom line with numbers
         \node[agent, label=above: $d_1$, label=below: 2.5] (d1) at (2.5, 0) {};
        \node[agent, label=above: $d_2$, label=below: 4] (d2) at (4, 0) {};
        \node[agent, label=above: $d_3$, label=below:5] (d3) at (5, 0) {};
        
        % Draw edges for LCFS / SIRO
        \draw[blue] (a1) -- (d3);
        \draw[blue] (a2) -- (d1);
        \draw[blue] (a3) -- (d2);
    \end{tikzpicture}
    \caption{LCFS or SIRO.}
    \label{fig:lcfs_siro_rule_updated}
\end{subfigure}

\caption{Comparison of queueing disciplines in a sample path.}
\label{fig:queue_comparison}
\end{figure}

Recall we are considering an optimal information policy, so a buyer does not observe the queue state when he joins the queue and thereafter.  Suppose a buyer forms his belief on (ii) for each arbitrary sample path (i) he considers possible, e.g., a path like the one portrayed in the example.\footnote{The explanation below conditions on each sample path. In the original problem, a buyer does not observe the realized sample path, so he also updates his belief on the sample path based on the elapsed time. In this sense, the explanation is intended to provide intuition on the force at work rather than a precise proof, which explicitly studies the evolution of beliefs based on one's queue position; see \Cref{app:che-tercieux} for details.}
Having spent time $t<0.5$ in the queue, the remaining expected wait time is $(5/3)-t$ for both (a) and (b), conditional on the illustrated sample path.  But after spending time $t\in [0.5, 1.5)$, the remaining wait times diverge under two rules: it is  $(5/3)-t$ in (a), whereas it is $4-t$ in (b), as the buyer infers his arrival time to be $a_1$.\footnote{For $t\in [1.5, 3]$, then the remaining wait time is $3-t$ in (a), and it is $4-t$.}   

% Intuitively, under queueing rule such as (b), as one spends more time without being served, she becomes more pessimistic about the remaining wait time compared with FCFS, having realized he missed the  ``boat'' on lucky early breaks and unlucky long delays more probable.

Intuitively, dispersed waiting times make one pessimistic over one's residual wait times, worsening one's dynamic incentives.  As time passes, the fact that one {\it still remains in the queue} indicates that he has ``missed the early breaks'' and, therefore, the residual wait will be longer.  The fairness property of FCFS alleviates this problem, enabling the designer to implement the optimal outcome.  

\subsection{Large Market Limit}

Suppose the market becomes dense as $\l,\m\to\infty$ with the balance parameter $\rho=\l/\m$ held constant. The limit of such markets is captured by a  ``fluid'' model with a unit mass of items and a mass $\r=\l/\m$ of buyers arriving at every instant.\footnote{The term 'fluid model' (or 'fluid limit') is standard in queueing theory, operations research, and the study of stochastic networks to describe a deterministic approximation of a system where discrete arrivals and departures are smoothed into a continuous flow.} 

In such large markets, the stochasticity of the arrival processes and any uncertainty facing the designer disappear. Hence, the fundamental reason for holding queues of buyers or items no longer exists.  Yet, the incentive problem recognized by \cite{naor1969regulation} manifests itself extremely.  Recall that, given complete information, no admissions control, and FCFS, buyers will queue up to a level $K_{\fcfs}=\left\lfloor \frac{v}{c}\m \right\rfloor.$ 

Let's focus on the interesting case of  $\r\ge 1$.\footnote{If $\r<1$,  this maximal queue length is never binding, as $\l,\m\to \infty$, the excess supply absorbs demands with probability one, and the wait time shrinks to zero in probability.  }
 In this case, there is a perennial excess demand, and all buyers queue up to the $K_{\fcfs}$ level.  In the limit, the normalized average queue length, normalized by $\m$, and each buyer's wait time, converge to $\r\frac{v}{c}$ and $\frac{v}{c}$, respectively, in probability.  In short, all buyers queue up excessively to a degree that leaves them with no surplus.

 \Cref{thm:che-tercieux} remains valid even in this large market setting. Yet, the queueing rule/priority rule becomes irrelevant since the distribution of waiting time becomes degenerate. 
 % Specifically, in the trivial case of $\r<1$, the optimal policy of \Cref{thm:cutoff} will be $K^*=\infty$: excess supply of goods means that all buyers are served with no delays. 
 Recalling $\r\ge 1$,  one can ensure that virtually all goods will be allocated to buyers with vanishing delays in the limit, by setting $K^*=(1-\epsilon)\m$, for an arbitrarily small $\epsilon>0$. The idea is that the small $\epsilon$ here creates states of excess supply and eliminates delay for those admitted into the queue. In particular, in the fluid model, we have the following.

 \begin{theorem} \label{thm:ntu0} Consider a continuum economy in which a unit mass of goods and a mass $\r\ge 1$ of buyers arrive at each instant. 
\begin{itemize}
    \item [(i)] Given the setup of \cite{naor1969regulation} with complete information, FCFS, and no admissions control,  all buyers wait for $v/c$ time to be served. The (per-unit time) total surplus is 0.

\item [(ii)] An optimal mechanism allocates (virtually) all goods to a queue of buyers whose length is capped so that no delay occurs.
\end{itemize}
\end{theorem}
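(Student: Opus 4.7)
The plan is to derive both parts as large-market limits of the corresponding stochastic M/M/1 results developed earlier in the survey.

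For part (i), the key input is the Naor-style self-selection equilibrium: an arriving buyer joins iff the observed queue length $k$ satisfies $k < K_{\fcfs} = \lfloor \mu v/c \rfloor$, so the induced stationary distribution is the truncated geometric $p_k \propto \rho^k$ on $\{0,\ldots,K_{\fcfs}\}$. For $\rho > 1$ this distribution concentrates sharply at the cap as $\mu \to \infty$, so the normalized queue length $\theta_t/\mu$ converges in probability to $v/c$. PASTA then transfers this to arrivals: an admitted buyer with high probability faces a queue essentially at capacity and under FCFS waits essentially $K_{\fcfs}/\mu \to v/c$ time, leaving limit surplus $v - c(v/c) = 0$; rejected buyers (normalized mass rate $\rho - 1$) take their outside option and also earn zero. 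Summing delivers the zero per-unit-time total buyer surplus.

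For part (ii), I would exhibit a sequence of admissible PRRMs whose fluid limit achieves the first-best. Take a cutoff admission rule capping the queue at $K^*(\mu)$ chosen so that $K^*(\mu) \to \infty$ but $K^*(\mu)/\mu \to 0$ (for concreteness, $K^*(\mu) = \lfloor \sqrt{\mu}\rfloor$), paired with FCFS under the optimal information policy of \Cref{thm:che-tercieux}. The induced birth-death chain has stationary empty-state probability $p_0$ that vanishes as $K^* \to \infty$ (geometrically in $K^*$ when $\rho > 1$, like $1/(K^*+1)$ when $\rho = 1$), so the normalized mass rate of wasted goods $\mu p_0/\mu = p_0 \to 0$ and virtually all goods are served. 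Simultaneously, every admitted buyer's wait is bounded by $K^*/\mu \to 0$, giving the ``no delay'' conclusion. Incentive compatibility of both joining and remaining follows because the expected wait stays well below $v/c$, and because the optimal information policy of \Cref{thm:che-tercieux} makes the residual wait non-increasing in time on queue under FCFS.

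The main obstacle is the concentration step in part (i). For $\rho > 1$ the computation is direct: the truncated geometric has mean $K_{\fcfs} - \rho/(\rho-1) + o(1)$ and bounded variance, so after dividing by $\mu$ one gets convergence to $v/c$ in probability. The knife-edge case $\rho = 1$ is delicate, since the stationary distribution is uniform on $\{0,\ldots,K_{\fcfs}\}$ and the normalized queue is asymptotically uniform on $[0, v/c]$ with mean only $v/(2c)$; the cleanest resolution is to either restrict part (i) to $\rho > 1$ and recover $\rho = 1$ by continuity after taking $\mu \to \infty$, or to invoke a two-timescale analysis tracking both the queue position and the large-deviation rate of crossing $K_{\fcfs}$. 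A simple diagonal sequence $K^*(\mu)$ in part (ii) consolidates the otherwise doubly-asymptotic construction into a single-limit statement without additional work.
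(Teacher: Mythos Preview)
Your approach is sound and, in places, more careful than the paper's own treatment. The paper does not give a formal proof of this theorem; the argument is the informal discussion in the text immediately preceding the statement, reasoning directly in the continuum (fluid) economy rather than via stochastic limits. For part (ii), the paper suggests setting a cap $K^* = (1-\epsilon)\mu$, claiming this ``creates states of excess supply and eliminates delay for those admitted,'' and then notes that a lottery with success probability $(1/\rho) - \epsilon$ does the same job. Your construction with $K^*(\mu) = \lfloor\sqrt{\mu}\rfloor$ is actually cleaner for establishing the ``no delay'' conclusion as a stochastic limit: with a cap of order $\mu$ and $\rho>1$, the stationary distribution still concentrates near the cap, so the worst-case wait $K^*/\mu$ is bounded away from zero, whereas your choice forces it to vanish while still driving $p_0 \to 0$. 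The paper's lottery remark, read directly in the fluid model, is the more transparent route: admitting mass $1-\epsilon$ of buyers against mass $1$ of goods per instant trivially yields zero wait and near-full utilization.

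The one place to sharpen is your handling of $\rho = 1$ in part (i). Your proposed fix---restrict to $\rho > 1$ and recover $\rho = 1$ by continuity after $\mu \to \infty$---conceals an order-of-limits issue. Sending $\mu \to \infty$ at $\rho = 1$ gives an asymptotically uniform normalized queue on $[0, v/c]$ (as you observe), so the wait does not concentrate at $v/c$; sending $\rho \downarrow 1$ after $\mu \to \infty$ gives $v/c$. These iterated limits disagree, so ``continuity'' does not settle the case---it merely selects one of several fluid equilibria. Indeed, in the fluid model at $\rho = 1$, any queue length $L \in [0, v/c]$ is a steady state (inflow equals outflow throughout that range), so the theorem's assertion at $\rho = 1$ is implicitly a selection statement; the paper glosses over this too. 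For $\rho > 1$ both your argument and the paper's direct fluid reasoning are solid; the knife-edge case simply requires a more explicit equilibrium selection than either provides.
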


Instead of a binding queue cap, the optimal policy (ii) can also be implemented by assigning a lottery that awards a unit of good with probability $ (1/\rho)-\epsilon$, under FCFS with complete information.

\subsection{Screening buyers with heterogeneous values}

The large market limit introduced above provides a convenient segue to study how waitlist/queueing can be used to allocate goods to buyers with heterogeneous values.  \cite{ashlagi2025optimal}, \cite{arnosti2020design}, and \cite{castro2021randomized} study a fluid (or large market model) model in which the designer allocates items with heterogeneous qualities to buyers with different values via some form of waitlist policies.\footnote{\cite{mekonnen2019random} develops a closely-related model with two-sided matching with frictional search. The trade-off between random search and directed search, the focus of this work, mirrors the trade-off between screening and random allocation discussed here. \cite{su-zenios2004} and \cite{schummer2021influencing} study a related issue in non-fluid models, but without the type of heterogeneity considered here.  Indeed, I am not aware of any NTU analysis that allows for value heterogeneity in the canonical queueing (i.e., non-fluid) setup, and this remains an open area of research. Even though \cite{leshnoAER} and \cite{baccara2020optimal} have two types of agents and items, the analysis is virtually isomorphic to a one-type model. \cite{che2025optimal} studies a fully general model of the value heterogeneity, but in a TU setup.}  

Here, I present a simplified version of \cite{ashlagi2025optimal} to present the main insights from these papers.  Consider a fluid model in which a unit mass of items and a mass $\r=\l/\m>1$ of buyers arrive at every instant. For the current purpose, it is convenient to interpret the model alternatively so that a mass $\r$ of items of heterogeneous qualities arrive at each instant, out of which mass 1 of items has high quality equal to one, and a mass $\r-1$  has low quality equal to zero. The real new feature is that the buyers have heterogeneous values $v$ of the item, distributed from $[0,1]$ according to a CDF $F$, which admits a strictly positive density $f$ on $(0,1)$. Assume that the inverse hazard rate $\frac{1-F(v)}{f(v)}$ is decreasing in $v$ for all $v\in (0,1)$.   

\cite{ashlagi2025optimal} considers discrete time (as opposed to the continuous time considered here) and allows for finite quality levels, rather than the two levels, 0 and 1, as in the current model. Nevertheless, the current model captures the central economic insights of the paper. The benefit is that we maintain the workhorse framework presented in this survey.\footnote{The multiple quality version required the authors to express feasible allocation---defined as a mapping from buyer values to the expected quality---as a mean-preserving contraction of the most positively assortative matching allocation.  They then use \cite{kleiner2021extreme}'s characterization of extreme points of such (majorized) allocations to argue that the optimal solution takes the form of a certain ironed version of the most positively assortative matching. The current two-level quality model simplifies ironing. } The designer's objective could be social welfare or ``allocative efficiency,'' which accounts for the gross surplus, ignoring the buyers' wait costs.  

Without loss, the designer specifies the eventual allocation probability $X(v)$ and the expected wait time $W(v)$, for the buyer who has just arrived and reported value $v$ at each instant. Since these are equilibrium objects, the pair must satisfy:
\begin{align*}
  U(v)&:=vX(v)-c W(v)\ge 0,\, \forall v; \tag{$IR_{\ntu}$} \label{IR'}  \\
  U(v)&\ge vX(v')-c W(v'),\, \forall v, v'; \tag{$IC_{\ntu}$} \label{IC'}  
\end{align*}
These specifications of \cref{IR'} and \cref{IC'} already portend the possible role of ``waiting costs'' as a screening device, often reserved for monetary transfers in the TU model.  This analogy is not accidental; the wait costs will play the same role as transfers, except for one important difference: the wait costs entail welfare costs whereas monetary transfers would not. 

As usual, one can
use the standard envelope characterization to replace \cref{IC'} and \cref{IR'} by:
\begin{align*}
 U(v)=\int_0^v X(s)ds  \, \forall v; \tag{$E_{\ntu}$} \label{E'}  \\
X(\cdot) \mbox{ is nondecreasing}. \tag{$M_{\ntu}$} \label{M'}  \\
\end{align*}

% Consequently, the buyer welfare  $\int U(v)f(v)dv$ coincides with social welfare, with no transfers accruing to the firm.

Next, let $p(v)$ denote the steady-state mass of buyers with values above $v$ in the queue. (This corresponds to the stationary distribution in our framework, even though $p(0)$, the total mass of buyers in the queue, need not equal 1.)  Then, we must have
\begin{align*}
 \r \int_v^1 W(s)f(s)ds &= p(v), \, \forall v; \tag{$L_{\ntu}$} \label{L'}  \\
  \r \int_0^1 X(s)f(s)ds &\le  1.\tag{$RF_{\ntu}$} \label{RF'}  \\
\end{align*}

The condition \cref{L'} follows from Little's law, which, for each $v$, relates the total wait time for incoming buyers with values above $v$ to the average queue length of these types, which in our continuum economy collapses degenerately to $p(v)$, the total mass of these types in the queue.  The condition \cref{RF'} ensures that the mass of promised allocation (after possible wait) equals the mass of items available at each instant.  (Recall that with our normalization, the normalized rate at which the items are received equals one.) This condition constitutes a Border style reduced-form auction characterization; see \cite{che2025optimal} as well as the next section for further details.

\cite{ashlagi2025optimal} consider allocative efficiency as an objective, in which case the problem is:
$$\max_{X, W, p}\, \r \int_0^1 vX(v)f(v)dv \eqno{[\mathcal P_{\ntu}'']}$$
$$\mbox{subject to } \quad \cref{IR'}, \cref{IC'}, \cref{L'}, \mbox{ and } \cref{RF'},$$

\begin{theorem} \label{thm:ntu} Assume $\r>1$. At the optimal mechanism solving $[\mathcal P_{\ntu}'']$,  all buyers with value  $v\ge \tilde v$ wait  for  $\tilde W:=\tilde v/c$ amount of time, where $\r[1-F(\tilde v)]=1$.  More formally, $X(v)=\mathbf{1}_{\{v\ge \tilde v\}}$, $W(v)=\mathbf{1}_{\{v\ge \tilde v\}}\tilde W$, and $p(v):=\r\frac{\tilde v}{c}[1-F(\max\{v, \tilde v\})].$   
\end{theorem}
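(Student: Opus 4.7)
The plan is to reduce $[\mathcal{P}_{\ntu}'']$ to a static linear program in the allocation rule $X$ alone, and then solve the resulting LP by a bang-bang/Lagrangian argument. First, I would substitute the envelope formula \cref{E'} into the identity $U(v) = vX(v) - cW(v)$ to express the wait time explicitly in terms of $X$:
\[
cW(v) \;=\; vX(v) - \int_0^v X(s)\,ds \;=\; \int_0^v \bigl[X(v) - X(s)\bigr]\,ds,
\]
which is automatically non-negative by the monotonicity constraint \cref{M'}. Little's law \cref{L'} then defines a non-negative $p$ at no extra cost, so both $W$ and $p$ are pinned down by $X$.

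What remains is
\[
\max_{X\colon[0,1]\to[0,1]\text{ nondecreasing}} \; \r\int_0^1 v X(v) f(v)\,dv \quad \text{s.t.}\quad \r\int_0^1 X(v) f(v)\,dv \le 1,
\]
which is a linear program. I would solve it by the standard pointwise/Lagrangian argument: introducing a multiplier $\eta \ge 0$ for feasibility and maximizing $(v - \eta)X(v)f(v)$ pointwise gives $X^*(v) = \mathbf{1}\{v \ge \eta\}$, which is automatically monotone. Since $\r > 1$, the equation $\r[1-F(\eta)]=1$ admits a solution $\eta = \tilde v \in (0,1)$, and this choice binds feasibility, yielding $X^*(v) = \mathbf{1}\{v \ge \tilde v\}$.

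Finally, I would recover $W$ and $p$ by substituting $X^*$ back. The wait identity gives $cW(v) = v\,\mathbf{1}\{v \ge \tilde v\} - (v - \tilde v)^+ = \tilde v\,\mathbf{1}\{v \ge \tilde v\}$, matching $\tilde W = \tilde v/c$. Splitting the Little's-law integral \cref{L'} at $\tilde v$ gives $p(v) = \r(\tilde v/c)[1-F(v)]$ for $v \ge \tilde v$ and $p(v) = \r(\tilde v/c)[1-F(\tilde v)] = \tilde v/c$ for $v < \tilde v$; combining, $p(v) = \r(\tilde v/c)[1-F(\max\{v,\tilde v\})]$.

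The only conceptual step worth flagging is the reduction to a problem over $X$ alone: because the envelope \cref{E'} encodes the \emph{minimum} wait time consistent with IC--IR for any given allocation, and because the objective is purely allocative and does not penalize $W$, no mechanism with the same $X$ can do strictly better. Notably, the monotone hazard rate hypothesis plays no role in this particular theorem; it would matter only in the richer multi-quality extension of \cite{ashlagi2025optimal} where ironing becomes relevant.
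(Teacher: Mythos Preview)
Your proof is correct and follows essentially the same route as the paper: both reduce to maximizing $\int_0^1 vX(v)f(v)\,dv$ over $X$ subject to \cref{RF'}, solve via a Lagrangian/bang-bang argument to obtain the threshold rule $X(v)=\mathbf{1}_{\{v\ge\tilde v\}}$, and then verify \cref{M'}, \cref{E'}, and \cref{L'} ex post (the paper phrases this as relaxing and checking, you as substituting out $W$ and $p$---the content is the same). Your closing remark that the decreasing inverse-hazard-rate assumption plays no role in this particular theorem is correct; it is needed only for \Cref{thm:ntu2}.
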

\begin{proof}  Consider a further relaxed problem in which \cref{M'} and \cref{L'} are absent. Consider the Lagrangian (ignoring the constraint $X(\cdot)\in [0,1]$):  $$L=\int_0^1( v-\zeta )X(v)f(v)dv,$$
where $\zeta\ge 0$ is the multiplier for \cref{RF'}. Letting $\tilde v:=\inf\{v: v\ge \zeta\}$, we must have $X(v)=\mathbf{1}_{\{v\ge \tilde v\}}$, which satisfies \cref{M'}. If $\zeta=0$, then $X(v)=1$ for all $v$, which violates \cref{RF'} since $\r>1$, so \cref{RF'} is binding and $\tilde v$ is  pinned down by $\r[1-F(\tilde v)]=1$.   By setting $W(v)=\mathbf{1}_{\{v\ge \tilde v\}}\tilde v/c$, \cref{E'} and \cref{L'} are satisfied.  The chosen set of solutions so far is feasible and satisfies the complementary slackness condition. So, by the weak duality, it is an optimal solution to $[\mathcal P_{\ntu}'']$.
\end{proof}

Allocative efficiency requires allocating goods to buyers with values above a market-clearing ``price" $\tilde {v}$.  In the NTU context, the price can only be paid in waiting costs, requiring a wait time of $\tilde W:=\tilde v/c$, which is supported by the steady-state queue length of $\r [1-F(\tilde v)]\tilde v/c$. Facing this queue length and the requisite wait time, buyers enter if and only if $v\ge \tilde v$.  Note that the expected waiting time is pinned down by Little's law, regardless of the queueing discipline. One simple rule could be FCFS;\footnote{Other priority rules also work in the fluid model, since the residual mean wait time is degenerate at $\tilde v/c$, regardless of the queueing rule.} the allocation can be seen as heterogeneous-value version of \cite{naor1969regulation}, or \Cref{thm:ntu0}-(i).  

Panel (a) of \Cref{fig:welfare_comparison} depicts the allocatively efficient outcome when $\rho=1.5$ and $F$ is uniform.  Under the alternative interpretation with masses 1 and $\r-1$ of high-quality and zero-quality goods, the allocatively efficient outcome can be implemented by an FCFS with a deferral right: namely, buyers offered zero-quality goods can refuse assignment without losing their spots. 

It is important to recognize allocative efficiency doesn't correspond to social welfare maximization---an important difference relative to the TU setup that will follow. \cite{ashlagi2025optimal} also considers social welfare maximization.  Here, we consider an objective similar to the one considered in the earlier section:
$$\max_{X, W, p}\, \a \r\int_0^1 U(v)f(v)dv+(1-\a)\r \int_0^1 \pi X(v)f(v)dv \eqno{[\mathcal P_{\ntu}''']}$$
$$\mbox{subject to } \quad \cref{IR'}, \cref{IC'}, \cref{L'}, \mbox{ and } \cref{RF'},$$
where $\a\in [0,1]$ and $1-\a$ are respectively the weights the designer assigns to buyer welfare and firm profit ($\pi$ is generated whenever a buyer is assigned/served).

Substituting \cref{E'} into the objective function and simplifying the objective function allows us to rewrite the problem as:
$$\max_{X, W, p}\, \r\int_0^1 K(v) X(v)f(v)dv$$
$$\mbox{subject to } \quad \cref{RF'},  \mbox{ and } \cref{M'},$$
where $K(v):=\a \frac{1-F(v)}{f(v)}+(1-\a)\pi.$ 
Given our assumption of decreasing inverse hazard rate, $K(\cdot)$ is nonincreasing for all $\a$.

\begin{theorem} \label{thm:ntu2} Assume $\r>1$.  At the optimal mechanism solving $[\mathcal P_{\ntu}''']$, all buyers are served with a lottery $X(\cdot)\equiv 1/\r$ without any delays (i.e., $W(v)=p(v)\equiv 0).$
\end{theorem}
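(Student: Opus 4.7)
The plan is to exploit the tension between two opposing monotonicity properties: by the decreasing inverse hazard rate assumption, $K(\cdot)$ is nonincreasing, yet (M') requires $X(\cdot)$ to be nondecreasing. This tension forces the optimal $X$ to be constant, after which (E') and (L') pin down $W$ and $p$.

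\textbf{Step 1 (constant $X$ dominates via a covariance inequality).} Fix any feasible nondecreasing $X$ and let $\bar X := \int_0^1 X(v) f(v) dv$ be its $f$-mean. Since $K$ is nonincreasing and $X$ is nondecreasing, the integral version of Chebyshev's sum inequality (equivalently, nonpositivity of the covariance of comonotone/antimonotone functions under a common probability measure) gives
$$\int_0^1 K(v) X(v) f(v) dv \;\le\; \Bigl(\int_0^1 K(v) f(v) dv\Bigr)\Bigl(\int_0^1 X(v) f(v) dv\Bigr) \;=\; \int_0^1 K(v)\, \bar X\, f(v) dv.$$
Thus replacing any feasible nondecreasing $X$ with the constant $\bar X$ weakly increases the objective while preserving (RF') and (M').

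\textbf{Step 2 (optimizing over constants).} Among constant $X \equiv c$, the objective equals $\rho c \int_0^1 K(v) f(v) dv$, and (RF') reduces to $c \le 1/\rho$. Since $K(v) = \alpha \tfrac{1-F(v)}{f(v)} + (1-\alpha)\pi \ge 0$ with strict positivity on a set of positive measure (because $\pi > 0$ or $v<1$), $\int K f > 0$, so the objective is strictly increasing in $c$. The optimum is therefore $X(\cdot) \equiv 1/\rho$. Note that (RF') binds, consistent with $\rho > 1$ so that demand cannot be fully served.

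\textbf{Step 3 (recovering $W$ and $p$).} Plugging $X \equiv 1/\rho$ into (E') gives $U(v) = \int_0^v X(s) ds = v/\rho$. Combining this with the definition $U(v) = vX(v) - cW(v) = v/\rho - cW(v)$ yields $W(v) \equiv 0$. Then (L') gives $p(v) = \rho \int_v^1 W(s) f(s) ds \equiv 0$. The constraints (IR'), (IC') (via (E') and (M')), (L'), and (RF') are all satisfied, and since our construction upper-bounds any feasible value by Step 1, it is optimal.

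\textbf{Main obstacle.} The substantive step is Step 1: one might be tempted to solve the problem pointwise, which for nonincreasing $K$ would prescribe a cutoff form $X(v) = \mathbf{1}_{\{v \le v^*\}}$ that is decreasing and hence infeasible under (M'). The cleanest resolution is not to iron this candidate locally but to observe globally, via the covariance inequality, that any nondecreasing $X$ is weakly dominated by its own $f$-mean, collapsing the problem to a one-dimensional maximization over constants.
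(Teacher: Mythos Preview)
Your proof is correct and follows essentially the same route as the paper: bound $\int K X f$ above by $\bigl(\int K f\bigr)\bar X$ using the antimonotone covariance inequality, then push $\bar X$ to the (RF$'$) constraint $1/\rho$ since $\int K f>0$. You correctly name the key inequality as Chebyshev's integral inequality (the paper labels it ``Cauchy--Schwarz,'' which is a misnomer), and you additionally spell out the recovery of $W$ and $p$ from (E$'$) and (L$'$), which the paper leaves implicit.
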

\begin{proof}  Fix any nondecreasing $X(\cdot)$ satisfying \cref{RF'}. Let $\bar x:=\int_0^1 X(v)f(v)dv\le 1/\r$.  Then, since $K(\cdot)$ is nonincreasing and nonnegative,  by the Cauchy-Schwarz inequality, 
$$\int_0^1K(v)X(v)f(v)dv\le \left(\int_0^1 K(v)f(v)dv\right)\bar x \le \left(\int_0^1 K(v)f(v)dv\right) \frac{1}{\r},$$
where the last inequality follows from \cref{RF'}.
\end{proof}

Allocative efficiency corresponds to social welfare maximization in the TU setting; however, this is not the case in the NTU setup, since the price paid by buyers to claim objects is not a transfer but a wasteful social cost.  The dark blue area in Panel (a), therefore, constitutes the only net social surplus; the red area is dissipated through waiting costs.

When the buyers are homogeneous, such waste can be eliminated through admissions control or a lottery, as shown in \Cref{thm:ntu0}-(ii). However, doing so is costly here since the designer must sacrifice allocative efficiency.  In principle, it is unclear how the designer would trade off these two objectives.  Yet, given the declining inverse-hazard rate condition, the social optimum ($\a=1$) completely sacrifices allocative efficiency to minimize waiting costs.  In other words,  complete pooling with random allocation is prescribed, which entails no delay.  In the example, the welfare under full screening is $2/9$ (the dark blue area in (a)), whereas the welfare under pooling is $1/2$ (the light blue area in (b) times $1/1.5$).  
Under the alternative interpretation with a mass $\r-1$ of zero quality good and mass 1 of high quality good, this complete pooling can be implemented with an FCFS without a deferral right; namely, buyers offered zero quality goods can't refuse the assignment.  See \cite{arnosti2020design} and \cite{castro2021randomized} for proposing queuing mechanisms with a similar feature. 

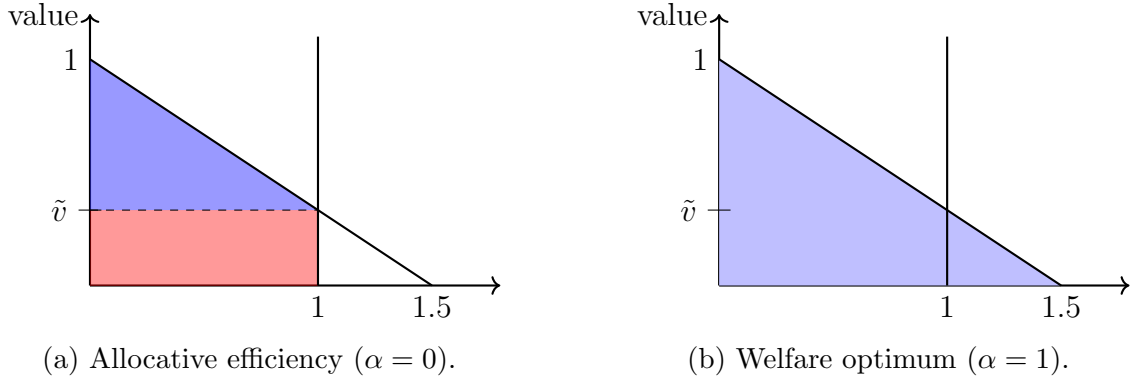
\begin{figure}[h!]
\centering
% Define the value of rho
\def\rho{1.5}

% Panel (a): Allocative Efficiency
\begin{subfigure}{0.48\textwidth}
    \centering
    \begin{tikzpicture}[scale=3]
        % Axes
        \draw[->, thick] (0,0) -- (1.8,0) ;
        \draw[->, thick] (0,0) -- (0,1.2) node[left] {value};

        % Key coordinates
        \coordinate (A) at (0,1);
        \coordinate (B) at (\rho, 0);
        \coordinate (C) at (1, 0);
        
        % Calculate intersection point y-value
        \pgfmathsetmacro{\yintersect}{1 - 1/\rho}
        \coordinate (D) at (1, \yintersect);
        \coordinate (E) at (0, \yintersect);

        % Shading with accentuated colors
        \fill[red, opacity=0.4] (0,0) rectangle (D);
        \fill[blue!80, opacity=0.5] (A) -- (D) -- (E) -- cycle; % Darker blue

        % Lines
        \draw[thick] (A) -- (B); % Demand curve
        \draw[thick] (1,0) -- (1,1.1); % Supply curve at x=1
        \draw[dashed] (E) -- (D); % Horizontal price line
        
        % Labels and Marks
        \node[below] at (C) {1};
        \node[left] at (A) {1};
        \node[below] at (B) {$\rho$}; % Added rho label
        \draw (0.05, \yintersect) -- (-0.05, \yintersect) node[left] {$\tilde v$};
    \end{tikzpicture}
    \caption{Allocative efficiency ($\a=0$).}
    \label{fig:allocative_efficiency_final}
\end{subfigure}
\hfill % Adds space between the two figures
% Panel (b): Welfare Optimum
\begin{subfigure}{0.48\textwidth}
    \centering
    \begin{tikzpicture}[scale=3]
        % Axes
        \draw[->, thick] (0,0) -- (1.8,0) ;
        \draw[->, thick] (0,0) -- (0,1.2) node[left] {value};

        % Key coordinates
        \coordinate (A) at (0,1);
        \coordinate (B) at (\rho, 0);
        \coordinate (C) at (1, 0);

        % Shading the entire welfare triangle with lighter color
        \fill[blue!25] (0,0) -- (A) -- (B) -- cycle; % Lighter blue
        
        % Lines
        \draw[thick] (A) -- (B); % Demand curve
        \draw[thick] (1,0) -- (1,1.1); % Supply curve at x=1

        % Intersection point y-value for the mark
        \pgfmathsetmacro{\yintersect}{1 - 1/\rho}

        % Labels and Marks
        \node[below] at (C) {1};
        \node[left] at (A) {1};
        \node[below] at (B) {$\rho$}; % Added rho label
        \draw (0.05, \yintersect) -- (-0.05, \yintersect) node[left] {$\tilde v$};
    \end{tikzpicture}
    \caption{Welfare optimum ($\a=1$).}
    \label{fig:welfare_optimum_final}
\end{subfigure}

\caption{Comparison of welfare measures.}
\label{fig:welfare_comparison}
\end{figure}

The main insight of \Cref{thm:ntu2} is traced back in its intellectual provenance to \cite{mcafee1992bidding}, who showed that a bidding ring seeking to maximize its members' welfare would prefer to assign a winner at random when a knock-out action is infeasible (so transfers can't be used), given the same inverse hazard rate condition,\footnote{See also \cite{che2018weak} for the optimality of pooling in the auction environment, for the same reason.} and to \cite{condorelli2012money} and \cite{hartline2008optimal}, who used the same logic to argue that complete pooling is socially optimal when the designer must rely on costly signaling to screen agents.  The difference here is that costly signaling must take the form of waiting in a queue, which, in the steady state, must require the queue length to be adjusted endogenously to satisfy incentive compatibility. The current overview also reveals the welfare cost of the queue as a manifestation of \cite{naor1969regulation}'s problem and pooling as its remedy.  

\section{Transferable Utility Model}

We now turn to the Transferable Utility (TU) model. While the previous section focused on markets where prices play a limited role in allocation, many dynamic environments utilize monetary transfers to balance competition and manage supply and demand. Following the framework established in \cite{che2025optimal}, we generalize the classic mechanism design paradigm of \cite{myerson1981} to the dynamic setting with asynchronous and stochastic arrivals of both items and buyers.

\subsection{Setup}
The model is the same as before in its basic primitives:  in a continuous time $t\ge 0$, the designer(platform)  receives units of a homogeneous good arriving at a Poisson rate of $\mu>0$. Buyers with unit demand arrive at a Poisson rate of $\lambda>0$.  What is different, however, and similar to \cite{myerson1981}, is that buyers have heterogeneous values for the good.  Specifically, each buyer has value $v\in [0,1]$, drawn independently for each buyer at the time of his arrival according to a CDF $F$ with a density $f$ that is strictly positive and absolutely continuous. We assume that the virtual value $J(v)\:=v-(1-2\a)\frac{1-F(v)}{f(v)}$ is nondecreasing in $v$ for all $\a$.\footnote{This condition holds when $F$ is uniform for all $\alpha \in [0, 1]$. More generally, for distributions with a non-increasing inverse hazard rate, the condition is always satisfied for $\alpha \in [0, 1/2]$. For $\alpha > 1/2$, the condition requires that the inverse hazard rate does not decrease too rapidly; specifically, we require $1 - (1-2\alpha)\frac{d}{dv}[\frac{1-F(v)}{f(v)}] \ge 0$. If this strong form of regularity is violated, then I conjecture that a form of ironing suggested by \cite{myerson1981} will apply.}  Each buyer is privately informed of his value $v$.

As before, the designer may store buyers in a queue, at $c>0$ per unit time per buyer, if there are no items available, and she can store goods in inventory at cost $d>0$ per unit time per item, in case there are no buyers. 

Unlike the NTU model, buyers have transferable utilities that are linear in money.   So, if a buyer with value $v$ spends time $t$ in the queue, makes a monetary payment of $y$, and receives the good, his payoff is $v-y-ct$. If he spends time $t$ in the queue and pays $y$ but does not receive the good, his payoff is $-y-ct$.

\paragraph{Mechanisms.}   We again focus on regenerative mechanisms, as defined in the NTU model. The TU setting introduces two key differences. First, buyers have heterogeneous private values $v \in [0, 1]$. By the (Bayes-Nash) revelation principle, we can restrict our attention to direct mechanisms $\phi$ that condition on buyers' reported values. Second, the mechanism's set of outcomes is expanded to include monetary payments. Therefore, a mechanism $\phi$ is a non-anticipatory, measurable mapping from histories (which include primitive arrivals and all reported values) to outcomes (queuing decisions, matches, and payments).

This richer setting necessitates a more complex state space. A queue state $\theta$ must now capture not only the queue length $k \in \mathbb{Z}$ (where $k<0$ denotes item inventory) but also the reported values of buyers currently waiting. We represent this by an ordered vector $\mathbf{v}=(v_{1},v_{2},....) \in [0,1]^{\mathbb{N}}$ of reported values. Let $\mathcal{V} \subset [0,1]^{\mathbb{N}}$ be the set of all such ordered vectors. We continue to impose the No Allocation Delay (NAD) condition. The full queue state space is thus $\Theta := (\cup_{l \in \mathbb{N}} \{-l\}) \cup \{0\} \cup \mathcal{V}$.

As before, we restrict attention to \textbf{Positive-Recurrent Regenerative Mechanisms (PRRMs)}. Any such mechanism $\phi$ induces a positive recurrent process on the state space $\Theta$, which admits a unique stationary distribution $p \in \Delta(\Theta)$. We require the mechanism to be incentive-compatible. Let $\Phi$ denote the set of all PRRMs in which buyers have incentives to follow all recommendations, including reporting their values truthfully, as a Bayes-Nash equilibrium. This equilibrium is evaluated assuming an incoming buyer's prior on the queue state is given by the stationary distribution $p$ induced by the mechanism.

The problem facing the designer is:
$$\sup_{\phi\in \Phi}  \int_{\t} \int_0^1\Big[\alpha \l U^M(v; \t)+(1-\a)\l T^M(v;\t)-\sum_{\ell\in \N} \mathbf{1}_{\{\t=-\ell\}} \ell d\Big]f(v)\text{d}v \bp(\text{d}\t),\eqno{[\mathcal P_{\tu}]}$$
where $\a$ is the weight on the buyer welfare,
$U^M(v; \t)$ and $T^M(v; \t)$ are respectively the expected utility and payment for a buyer with value $v$ arriving in state $\t$ under mechanism $\phi$.  The problem specializes to revenue maximization when $\a=0$ and to the welfare maximization, or equivalently allocative efficiency, when $\a=1/2$.

\paragraph{Relation with Literature.}
 This model generalizes the static mechanism design framework, as presented in \cite{myerson1981}, by considering a dynamic environment, which is more descriptive of platform-mediated modern marketplaces. In such a setting, the key role of the designer is not just to allocate items to buyers when they are all present simultaneously, but also to manage and transfer competition across time by judiciously storing buyers or goods. 
  
 Importantly, however, this model does not nest the NTU model surveyed earlier. Transferability makes it easy for the designer to control buyers'  incentives. For example, dynamic incentives can be managed simply by reimbursing buyers for the waiting costs they incur based on the amount of time they spend in the queue. Recall that most of the analytical challenge in the previous section resulted from the difficulty associated with managing buyers' incentives to queue in the NTU model; this issue becomes easier to handle in the TU setup.

\subsection{Optimal Mechanism}  The problem $[\mathcal P_{TU}]$  is difficult to solve. Instead, a relaxed problem is set up as follows. 

Fix any mechanism $M\in \M^{**}$.  It induces an interim allocation probability $X(v)$ and payment $T(v)$ for a buyer who has just arrived and reported value $v$, where the payment is made after (or net of) the reimbursement of waiting costs, which we assume in the sequel.\footnote{Recall we already observed that the designer may, without loss, reimburse waiting costs.} Since $M$ is incentive compatible, they must satisfy:.\footnote{Note that \cref{IC} ignores possible double deviations, so it is necessary but not sufficient for buyers to report truthfully.  This is not a problem since we are considering a relaxed program.}
\begin{align}
   U(v)&  := vX(v) - T(v) \geq 0,\, \forall v, \tag{$IR$} \label{IC}\\
    U(v) &\geq vX(v') - T(v'), \, \forall v,v'. \tag{$IC$} \label{IR}
\end{align}

Obviously, not all $X(v)\in [0,1]$ is feasible. The allocation promised to a buyer must be compatible with the stochastic supply of goods as well as with the promises the designer makes to buyers who arrived before and those who will arrive in the future. To handle the feasibility issue, we first let  $p_k(v)$ denote the probability that  {\it exactly $k$ buyers have values strictly above $v$ in the steady  state}. Accordingly, 
$p_k(0)$ denotes the probability that there are exactly $k$ buyers in the steady state queue.  Also, as before, let $p_{-\ell}$, $\ell\in \N$ denote the stationary probability that there are $\ell$ items in the inventory.  

Then, feasibility of $X(\cdot)$ requires:
\begin{align*}
   \lambda  \int^1_v X(s)f(s)ds
   \le \mu\sum_{k=1}^{\infty} y_k(v)p_{k}(v)+ \sum_{\ell=1}^{\infty}p_{-\ell} \int_v^1 z_{\ell}(v)f(v)dv, \quad \forall v\in V, \tag{$RF$} \label{RF}
\end{align*}
where  $y_k(v)$ is the probability that an incoming good is allocated to one of the $k$ buyers with values above $v$ and $z_{\ell}(v) $ is the probability with which an incoming buyer with value $v$ is allocated the good when there are $\ell$ items in storage.  The LHS describes total allocation promises made to types above $v$ per unit of time, while the RHS describes total allocation made to the buyers with values above $v$ per unit time, noting that an allocation occurs when a good arrives with buyers waiting in the queue or when a buyer arrives with goods in storage.  This condition is similar in spirit to \cite{border1991implementation} and \cite{che2013generalized} but has an added temporal dimension.
 
We next turn to the feasibility of the distribution $\bp=(p_k)$.  Since these are stationary objects, they must respect balance conditions.  For a subset $\Z_{\le -\ell} \subset \T$, for $\t=-\ell$, for some $\ell\in\N$, we must have a balance condition between transitions between $\Z_{\le -\ell}$ and $\T\setminus \Z_{\le -\ell}$:
\begin{align*}
\m p_{-\ell} \ge \l p_{-(\ell+1)}, \tag{${B_\ell}$} \label{Bl}   
\end{align*}
somewhat analogously to the condition we had in the NTU setup.  What is new and potentially difficult is the balance condition for each measurable set $\V' \subset \V \subset \T$, namely the set of profiles of values reported by the buyers in the queue. To this end, we only focus on (measurable) subsets of $\V$ of the form:
$$\V_k(v):=\{\bv : v_{k+1}\le v\},$$
which comprises a set of queue states in which the $k+1$-st highest value is less than $v$, or equivalently, at most $k$ buyers have values above $v$.   \Cref{fig:events-V_k(v)} depicts $\V_0(v)$ and $\V_1(v)$ (only in the first two coordinates).

\begin{figure}[h!]
\centering
\begin{tikzpicture}
\begin{groupplot}[
    group style={
        group size=2 by 1, % 2 plots in one row
        horizontal sep=80pt, % Separation between plots
    },
    xlabel={$v_1$},
    ylabel={$v_2$},
    xlabel style={below}, % Position the x-axis label below the axis
    ylabel style={rotate=90, above}, % Position the y-axis label to the left of the axis
    xmin=0, xmax=1.15, % Extend the x-axis range
    ymin=0, ymax=1.15, % Extend the y-axis range
    axis lines=middle,
    axis equal,
    unit vector ratio*=1 1 1,
    xtick={0,1}, % Standard ticks for 0 and 1
    ytick={0,1}, % Standard ticks for 0 and 1
    grid=none, % Remove all grid lines
]

% LEFT PANEL: S_1(v)
\nextgroupplot
\addplot[name path=A, domain=0:1, samples=100, color=black, thick] {x}; % Plot y=x
\addplot[name path=B, domain=0:1] {0}; % Path for the x-axis (y=0 line)
\addplot [red!20] fill between[of=A and B, soft clip={domain=0:0.6}]; % Shading
\node at (axis cs:0.35,0.17) {$\V_0(v)$}; % Text label inside shaded region
\node at (axis cs:0.99,1.02) {$v_1 = v_2$}; % Label for the 45-degree line
\pgfplotsset{xtick={0,1,0.6}, xticklabels={0,1,$v$}} % Add tick at x=0.6

% NEW ARROWS: from diagonal into shaded region and back
\draw[->, blue, thick, bend left=35] (axis cs:0.5, 0.3) to node[above, sloped, text=blue] {\tiny $\leq \l[1-F(v)]   p_{0}(v)$} (axis cs:1, 0.3);
\draw[->, blue, thick, bend left=35] (axis cs:1, 0.2) to node[below, sloped, text=blue] {\tiny $\geq \mu y_{1}(v) p_1(v)$} (axis cs:0.5, 0.2);

% RIGHT PANEL: S_2(v)
\nextgroupplot
\addplot[name path=C, domain=0:1, samples=100, color=black, thick] {x}; % Plot y=x
\addplot[name path=D, color = white, domain=0.6:1] {0.6}; % Horizontal cutoff
\addplot[name path=E, domain=0:1] {0}; % x-axis
\addplot [red!20] fill between[of=C and E, soft clip={domain=0:0.6}]; % Shade below y=x
\addplot [red!20] fill between[of=D and E, soft clip={domain=0.6:1}]; % Shade below y=0.6
\node at (axis cs:0.40,0.17) {$\V_1(v)$}; % Text label inside shaded region
\node at (axis cs:0.99,1.02) {$v_1 = v_2$}; % Label for the 45-degree line
\pgfplotsset{ytick={0,1,0.6}, yticklabels={0,1,$v$}} % Add tick at y=0.6
\addplot [color=black, dashed, thick] coordinates {(0, 0.6) (0.6, 0.6)}; % Dashed horizontal line

%arrows
\draw[->, blue, thick, bend left=35] (axis cs:0.8, 0.2) to node[pos=0, above, text=blue, xshift=-42pt, yshift=-5pt] {} (axis cs:.8, 0.75);
\draw[->, blue, thick, bend left=35] (axis cs:.9, 0.75) to node[below, text=blue] {} (axis cs:0.9, 0.2);

\end{groupplot}
\end{tikzpicture}
\caption{$\V_0(v)$ and $\V_1(v)$}
\label{fig:events-V_k(v)}
\end{figure}

A balance condition for $\V_k(v)$ boils down to: 
\begin{align*}
	\lambda[1-F(v)]   p_{k}(v)    \ge   \mu y_{k+1}(v) p_{k+1}(v).  \tag{${B_k}$}  \label{Bk}   
\end{align*}
The LHS is an upper bound on the outflow from set $
\V_k(v)$, depicted for $k=0$ by the out-arrow.  Ignoring the higher order terms, the outflow occurs when there are exactly $k$ buyers with values strictly above $v$---an event that occurs with probability $p_{k}(v)$---and a buyer with value above $v$ arrives---which occurs at rate $\l[1-F(v)]$.  In that case, a transition out of $\V_k(v)$ occurs if that buyer is admitted into the queue. Since he may not, the LHS gives an upper bound for the outflow.  Analogously, the RHS gives inflow into the set $\V_k(v)$.

These conditions here apply only to lower-dimensional subsets of measurable sets that characterize the stationary distribution $\bp\in \Delta(\T)$. To be more precise, $\V_k(v)$ is indexed by each $(k,v)$, so we have cardinality $\N\times [0,1]$ of conditions.  But to account for all measurable sets, we need a condition for a set of the form  $[\mathbf{0},\bv]$, for each $\bv \in \V$, so the cardinality of the conditions becomes in the order of $\N^{[0,1]}$, far bigger than $\N\times [0,1]$. Clearly, the selected conditions are necessary.  Although they are not sufficient for characterizing the stationary distribution, they turn out to be sufficient for identifying the optimal solution to $[\mathcal P_{\tu}']$ (that will be stated below); as only they will be seen to bind at the optimal solution. The reduction is ultimately what makes the problem tractable.  

We are now ready to formulate our relaxed program.   
Consider the problem:
\begin{align*}
    \max_{p,y,z, X,T} \l \int_0^1 [\a U^M(v)&+(1-\a) T(v)] f(v)dv- 
  c \sum_{k=1}^{\infty} k p_k(0)-d\sum_{\ell=1}^{\infty}\ell p_{-\ell} \\
\mbox{   subject to }& \,
\cref{IC}, \, \cref{IR},\, \cref{RF}, \cref{Bl}, \mbox{ and } \cref{Bk}, \,\, \forall \ell, k. 
\end{align*}
Again, one can interpret the objective as the long-run time average of the weighted sum of consumer and designer/producer surplus (with $\alpha$ being the weight for the former) minus the buyer waiting costs the designer reimburses and inventory costs, or its counterpart for steady-state flow surplus. Note that the term $\sum_{k=1}^{\infty} k p_k(0)$ accounts for the steady-state average queue length: $p_k(0)$ is the stationary probability that there are $k$ buyers in the queue (given our convention to encode absenece of a buyer by a presence of buyer with value $0$).

Clearly, all constraints are necessary for a mechanism to be in $\M^{**}$.  Using the standard envelope condition, the problem is further relaxed to:
$$\max_{p,y,z, X} \l \int_0^1 J(v)X(v)f(v)dv- 
	c \sum_{k=1}^{\infty} k p_k(0)-d\sum_{\ell=1}^{\infty}\ell p_{-\ell} \eqno{[\mathcal P_{\tu}']}$$
$$\mbox{ subject to } \, \cref{RF}, \cref{Bl}, \mbox{ and } \cref{Bk}, \,\, \forall \ell, k, $$
where $J(v):= v - (1-2\a) \frac{1-F(v)}{f(v)}$, which we assume is increasing in $v$.

Upon suitable change of variables, $[\mathcal P_{\tu}']$ can be seen as a linear program.  Note that we have transformed a dynamic mechanism design problem into a linear optimization problem.  What made this transformation possible is the combination of traditional mechanism design tools, such as \cref{IC} and \cref{IR}, with the reduced-form characterization \cref{RF} and the balance conditions necessitated by stationarity.  While the approach here is similar in spirit to its NTU counterpart, $[\mathcal P_{\ntu}']$, the presence of private information, and more importantly, the richness of the queue state space $\T$ sets it apart. As mentioned earlier, reducing the balance conditions for stationarity is a key step toward a tractable LP formulation.  The optimal solution to $[\mathcal P_{\tu}']$ is characterized next.

% In particular, the balance conditions chosen here, $(B_\ell), \mbox{ and } (B_k)$, are a lower-dimensional subset of conditions that are generally needed to characterize the stationary distribution of the queue state under a (regenerative)  mechanism.  They are a specially curated set of conditions that are conjectured to bind at the optimum.

\begin{theorem}\label{thm:main}  The optimal solution to the relaxed program $[\mathcal{P}_{TU}^{\prime}]$ is characterized as follows:  There are $\hat v_K> \cdots >\hat v_1>\hat v_{-1}> \cdots > \hat v_{-L}>J^{-1}(0)$, for some $K, L\in \N$, such that (i) items are stored only up to $L$ units; (ii) buyers are queued up to  $k\le K$ buyers  if and only if all $k$ of them  have  values above $\hat  v_k$; (iii) if  a buyer arrives with $\ell\ge 1$ items in storage, he is allocated the good if and only if his value is above $\hat v_{-\ell} $; and (iv) if an item arrives with $k\ge 1$ buyers waiting in the queue, then it is assigned to the buyer with the highest value.
\end{theorem}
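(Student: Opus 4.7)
The plan is to tackle the relaxed linear program $[\mathcal{P}_{\tu}']$ directly via Lagrangian duality and KKT conditions, then read off the cutoff structure from the pointwise optimality conditions. I would begin by introducing nonnegative multipliers $\zeta(v)$ for \cref{RF}, $\eta_\ell$ for each \cref{Bl}, and $\xi_k(v)$ for each \cref{Bk}. Since the program is linear in the decision variables $X, y, z, p$ (after the change of variables implicit in the formulation), an optimum is characterized by complementary slackness and the stationarity conditions.

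Next I would derive the KKT conditions variable-by-variable. Optimality in $X(v)$ yields $X(v) = \mathbf{1}\{J(v) \ge \zeta(v)\}$: combined with the monotonicity of $J$, this shows that a buyer's promised allocation takes a threshold form. Looking at $y_k(v)$, which is the probability that an arriving item is awarded to one of the $k$ buyers with values above $v$, the coefficient $\mu\xi_k(v)$ on $y_k(v)$ in \cref{Bk} is nonnegative and monotone in $v$; since increasing $y_k(v)$ relaxes the tightest balance constraint without violating \cref{RF}, the optimum assigns arriving items to the highest-value waiting buyers, giving part (iv). An analogous analysis of $z_\ell(v)$—the probability that an arriving buyer with value $v$ gets one of $\ell$ stored items—shows that $z_\ell(v)$ jumps from $0$ to $1$ at a threshold $\hat v_{-\ell}$, yielding part (iii). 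For the queue-cap structure, the complementary slackness of \cref{Bk} identifies the thresholds $\hat v_k$ governing the admission of an incoming buyer when $k$ buyers already wait with values above $v$: the buyer is admitted exactly when his value exceeds $\hat v_k$.

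To pin down finiteness of $K$ and $L$ (parts (i)-(ii) plus the existence claim), I would argue that the marginal benefit of admitting an additional waiting buyer (or storing an additional item) decreases in the queue/inventory length, because each extra unit imposes additive flow cost $c$ (resp.\ $d$) while the expected virtual surplus it can secure is bounded. Telescoping the chain of balance constraints shows $p_k(v)$ decays geometrically at rate $\lambda[1-F(v)]/\mu$ (resp.\ $p_{-\ell}$ at rate $\lambda/\mu$), so the marginal contribution of the $k$th cutoff to the objective eventually turns negative, yielding finite $K, L$. Monotonicity $\hat v_K > \cdots > \hat v_1 > \hat v_{-1} > \cdots > \hat v_{-L} > J^{-1}(0)$ then follows because the shadow price of queue-slot $k+1$ must exceed that of queue-slot $k$ (the extra admission forces further future allocations at even greater waiting cost), and symmetrically on the inventory side; the strict inequality $\hat v_{-L} > J^{-1}(0)$ captures that holding any item is worthwhile only if a sufficiently high-value buyer is expected to arrive.

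The main obstacle I expect is the last pair of steps: justifying that the restricted collection of balance inequalities $\{(B_k(v))\}_{k\in\mathbb N, v\in[0,1]}$ indexed only by the sets $\V_k(v)$—rather than by every measurable subset of $\Theta$—is actually tight enough at the candidate optimum to rule out gains from deviating. Concretely, one must verify that the proposed threshold policy induces a genuine stationary distribution (not merely one satisfying the reduced balance conditions) and that complementary slackness in the full (intractable) LP holds when extended from the reduced LP. I would handle this by explicitly constructing the stationary distribution of the Markov process induced by the thresholds $(\hat v_k, \hat v_{-\ell})$, verifying it assigns zero mass to states inconsistent with the cutoffs, and confirming it satisfies every balance condition—not just those on sets $\V_k(v)$—so that the reduced-form solution is implementable. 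Given that, weak duality closes the argument: the constructed primal is feasible, its objective equals the dual value, and thus it solves $[\mathcal{P}_{\tu}']$.
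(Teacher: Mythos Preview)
Your proposal is correct and follows essentially the same route as the paper: the paper's own proof is deferred to \cite{che2025optimal} but is summarized as ``identifying the set of primal and dual variables that satisfy the set of all complementary slackness,'' which is exactly your Lagrangian/KKT plan, including the weak-duality closure and the explicit verification that the threshold policy induces a bona fide stationary distribution so that the reduced family of balance constraints $\{(B_k(v))\}$ is not a strict relaxation at the optimum. One minor slip: the geometric ratio on the inventory side is $\mu/\lambda$, not $\lambda/\mu$, since $(B_\ell)$ reads $\mu p_{-\ell}\ge \lambda p_{-(\ell+1)}$; this does not affect the finiteness argument, which ultimately rests on the linear flow cost $d$ eventually dominating the bounded option value.
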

\begin{proof}  The proof involves identifying the set of primal and dual variables that satisfy the set of all complementary slackness.  The interested reader is referred to \cite{che2025optimal} for details.
\end{proof}

Similar to the static mechanism design, the optimal dynamic mechanism allocates the items optimally among (endogenously selected) participants.  Unlike the static mechanism, the key aspect of the optimal dynamic mechanisms concerns the design of queues. This aspect is crucial not only for balancing the intertemporal mismatch between demand and supply, but also for allocating competitive pressures across time to effectively discipline privately informed buyers.  Since both buyers and items are costly to store in queues, the mechanism requires buyers to meet progressively higher cutoffs as the queue length increases and the number of inventoried goods falls.

Note that the qualitative features of the optimal allocation are similar for revenue maximization ($\a=0$) and welfare maximization ($\a=1/2$).  The only difference between the two is that queue-dependent cutoffs $\hat v_k$ differ between the two cases.  In the revenue maximization case, they are chosen to maximize the virtual value  $J(v; \alpha=0)=v-\frac{1-F(v)}{f(v)}$, just as in the Myerson setting, whereas in the welfare maximization case, they are chosen to maximize realized value $J(v; \alpha=1/2)=v$.  Since the welfare-maximizing designer lacks the monopoly exclusion motive, one would expect the latter to be lower in general.  This is indeed true for a low queue size. For example, given $d=\infty$, the welfare maximizing cutoff for one buyer queue is $\hat v_1=c$, whereas the corresponding revenue-maximizing cutoff is $\hat v_1=J^{-1}(c)$, the same as the standard Myerson reserve-price with cost $c$.  More surprisingly, however, for a large queue size, the order is reversed:  the revenue-maximizing cutoffs are lower than the welfare-maximizing cutoffs. This single crossing feature can be explained as follows.  Again, lacking the monopoly exclusion motive, the welfare-maximizing designer is more willing to admit buyers into the queue when the queue is relatively short, thereby reducing the risk of buyer stockout (i.e., the situation where no buyers are available when a good arrives). 
At the same time, the revenue-maximizing designer is more willing to admit buyers into a long queue, since she is more keen on selling to high-value buyers who command lower information rent, and because the higher exclusion for shorter queues already means a stronger need to insure against the buyer stockout risk.
Consequently, compared to welfare-optimal thresholds, revenue-maximizing thresholds are higher for short queues and lower for long queues.

\Cref{thm:main} characterizes the optimal solution to $[\mathcal P_{\tu}']$. What remains is to show that the optimal solution to this relaxed problem can be made incentive compatible by a mechanism $M$ satisfying the constraints of our original program $[\mathcal P_{\tu}]$. \cite{che2025optimal} demonstrates that a mechanism comprising a series of auctions can implement the optimal mechanism in {\it dominant strategies.}

To illustrate how the optimal mechanism works, suppose a buyer arrives following a null state initially. Then, a designer imposes a reserve price of $\hat v_1$, a minimum price he will be charged for a good he may receive, making it dominant for the buyer to join the queue if and only if his value is above $\hat v_1$. Suppose another buyer arrives before an item arrives.  Then, the reserve price, or an ascending auction clock, rises from $\hat v_1$ continuously until one of them drops out when the price reaches his value, or until it reaches $\hat v_2$, whichever happens first.  In the first case, the price is stopped at the drop-out price, with the surviving buyer remaining in the queue.  In the latter case, both buyers survive and stay in the queue.  More generally, if a buyer arrives at a queue with length $k$, then a clock price will similarly rise from the existing stopped price until a buyer drops out or a price of $\hat v_{k+1}$ is reached, whichever occurs first.

Suppose a good arrives next with buyers waiting. Then, a sealed-bid auction is held in which buyers are required to make a bid no less than the highest reserve prices they have survived.  The highest bidder wins the good (with ties broken at random)  and is required to pay the {\bf cutoff price}---defined as {\it the lowest bid he could have made and would have eventually won in light of the future sample path, assuming that the same bid is used to determine future assignments.}   A cutoff price depends on the sample path realized (possibly well) after the auction, reflecting the realized competition the designer faces afterwards. In this dynamic setting, the price is not a single number known at the moment of matching, but a function of the realized stochastic process.\footnote{To fix the idea, consider a winner, $A$, who arrives when another buyer, $B$, is already in the queue. In one scenario, a second item arrives later to satisfy $B$; here, $A$'s payment is low because the eventual supply was sufficient for both. In a second scenario, no second item arrives, but two new high-value buyers arrive instead. In this case, $A$ must pay a higher price because, in light of that realized sample path, $A$ faced much tougher competition to secure one of the scarce items.}  In this case, the winner is ``billed'' after receiving the good.  

Intuitively, the cutoff prices track the future level of competition to discipline the current buyers. If the supply condition improves (with the arrival of more items)  following the assignment, a low price will be charged to the winner. By contrast, if excess demand arises, the winner will be charged a high price.   

Next, the designer caps the size of the inventory at some finite $L$ units; an additional item received after reaching $L$ is discarded.  Suppose there is an inventory of $\ell\ge 0$ items, and a buyer arrives.  Then, the buyer is charged a fixed price of $\hat v_{-\ell}$; as noted before, the price is lower the larger the inventory. 

The auction/pricing mechanism described here implements the optimal outcome in dominant strategies.  While we allowed for all PRRMs, the optimal DSIC mechanism is {\it pseudo-Markovian:} its allocation and queue/inventory decisions are Markovian, depending only on the queue state $\t\in \T$, whereas the cutoff price depends possibly on the entire within-cycle history.  An implication is that the allocation depends only on the reported values of the buyers in the queue, independently of their arrival orders; in other words, the allocation priority does not follow the standard queue discipline.

In summary, the optimal dynamic mechanism retains the core element of \cite{myerson1981} — namely, allocating goods to buyers who are already present via an auction with a reserve price. Furthermore, it involves an additional feature whereby the designer stores buyers or goods in a queue to balance the potential intertemporal mismatches between demand and supply, and, no less importantly, to balance competitive pressures across time.

\subsection{Large Market Properties} \label{sec:large-market}

Suppose the market becomes dense as $\l,\m\to\infty$ with the balance parameter $\rho=\l/\m$ held constant, or as $c$ or $d$ vanishes. As with the NTU model, the limit of such markets corresponds to a static model with a unit mass of items on the supply side and a mass $\r=\l/\m$ of buyers on the demand side (see \cite{che2025optimal}), or a dynamic model in which a unit mass of items and a mass $\r=\l/\m$ of buyers arrive at each instant.

The optimal mechanism in this limit continuum model is very simple.  Focusing on revenue maximization (i.e., $\a=0$), the optimal mechanism is simply the uniform-price multiunit auction with optimal reserve price: i.e., selling the mass of items at price equal to either the marginal buyer's value $\tilde v:=\inf\{v: \rho [1-F(v)]\le 1 \},$ or the standard monopoly price $\hat v_0:=J^{-1}(0)$, whichever is higher.  (The former is the continuum market analog of the highest losing bid in the uniform-price auction.)  Note also that, as $\a\to 1/2$, $\hat v_0\to 0$.

\Cref{fig:large-market} features the situation in which the former price is higher.

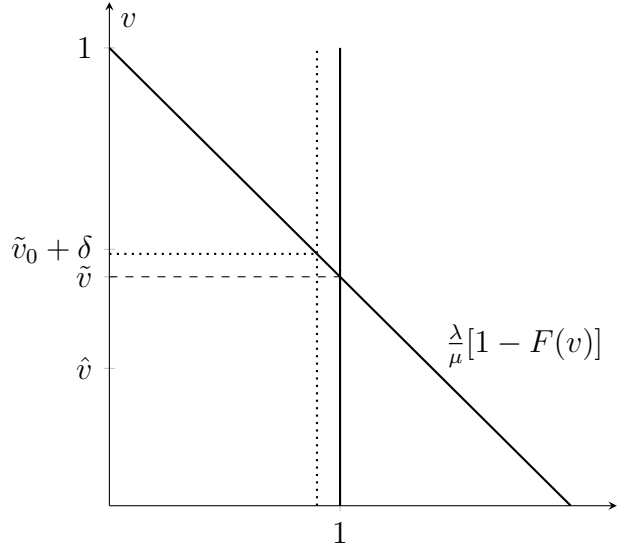
\begin{figure}[H]
\centering
\begin{tikzpicture}
\begin{groupplot}[
    group style={
        group size=1 by 1, % 2 plots in one row
        horizontal sep=40pt, % Separation between plots
    },
    height=20em,
    xlabel={},
    ylabel={$v$},
    xlabel style={below}, % Position the x-axis label below the axis
    ylabel style={rotate=90, above}, % Position the y-axis label to the left of the axis
    xmin=0, xmax=1.1, % Extend the x-axis range
    ymin=0, ymax=1.1, % Extend the y-axis range
    axis lines=middle,
    axis equal,
    unit vector ratio*=1 1 1,
    xtick={0,1}, % Standard ticks for 0 and 1
    ytick={0,1}, % Standard ticks for 0 and 1
    grid=none, % Remove all grid lines
]

\nextgroupplot
\addplot[name path=A, domain=0:1, samples=100, color=black, thick] {1-x};
\addplot [name path = B, color=black, dashed] coordinates {(0, 0.5) (0.5, 0.5)}; % Dashed horizontal line from (0,0.4) to (0.4,0.4)
\addplot [name path = B, color=black, dotted, thick] coordinates {(0, 0.55) (0.45, 0.55)};
\addplot[name path=C, domain=0:1, samples=100, color=black, thick] coordinates {(0.5,0) (0.5,1)};
\addplot[domain=0:1, samples=100, color=black, thick, dotted] coordinates {(0.45,0) (0.45,1)};
\node at (axis cs:0.9,0.35) {$\frac{\l}{\m}[1-F(v)]$};
\pgfplotsset{ytick={0,1,0.5, 0.56, 0.3}, yticklabels={0,1,$\tilde v$, $\tilde v_0 +\delta$, $\hat v$}} ;
\pgfplotsset{xtick={0,0.5}, xticklabels={0,1}};
\end{groupplot}
\end{tikzpicture}
\caption{Large market limit of the optimal mechanism}
\label{fig:large-market}
\end{figure}

\cite{che2025optimal} establishes the following results.

\begin{theorem}\label{thm:large-market} The normalized objective converges to the continuum model optimum if (a)
 $\l,\m\to \infty$ with $\rho=\l/\m$ held constant, or (b) $c \to 0$, or (c) $d \to 0$. 
\end{theorem}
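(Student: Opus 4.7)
My plan is to sandwich the normalized value $V^{\mathrm{opt}}/\mu$ of the optimal mechanism between the continuum-model optimum $V^*$ as an upper bound and the normalized value of an explicit feasible mechanism as a lower bound, and to verify that both endpoints converge to $V^*$ in each of the three regimes.

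\emph{Upper bound.} Since $V^{\mathrm{opt}}$ cannot exceed the value of the relaxed program $[\mathcal P_{\tu}']$, and the waiting and inventory-cost terms in that program are nonnegative, it suffices to bound the allocation term $\lambda \int_0^1 J(v) X(v) f(v)\,dv / \mu$. From the flow-balance implication that items leave the system at total rate at most $\mu$ while buyers receive them at rate $\lambda \int X f$, the reduced-form constraint $\cref{RF}$ evaluated at $v = 0$ yields $\rho \int_0^1 X(v) f(v)\,dv \le 1$. Together with $X \in [0,1]$ and the monotonicity imposed by incentive compatibility, this is precisely the feasibility set of the static continuum LP, whose value is $V^*$. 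Hence $V^{\mathrm{opt}}/\mu \le V^*$ uniformly in the parameters $(\lambda, \mu, c, d)$.

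\emph{Lower bound.} For the matching lower bound, I would introduce a family of simple dominant-strategy mechanisms $M_{\delta, L}$: post reserve $r_\delta := v_\star + \delta$, where $v_\star := \max\{\tilde v, \hat v_0\}$ is the continuum-optimal cutoff; admit every buyer reporting $v \ge r_\delta$ to a FCFS queue; store excess items in an inventory of capacity $L$, discarding overflow; and reimburse waiting costs. The signed queue--inventory process is a birth--death chain with utilization $\rho_\delta := \rho[1-F(r_\delta)] < 1$ for any small $\delta > 0$ (or at $\delta = 0$ in the easy case $v_\star = \hat v_0 > \tilde v$), so its stationary distribution is geometric and explicit. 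A standard computation yields a normalized value of the form
\[
\tfrac{1}{\mu}\,V(M_{\delta,L}) \;=\; V^{\mathrm{cont}}(r_\delta) \;-\; \tfrac{c\,Q(\delta)}{\mu} \;-\; \tfrac{d\,I(\delta,L)}{\mu} \;-\; R(\delta,L),
\]
where $V^{\mathrm{cont}}(r)$ is the per-unit-item continuum value of posting price $r$ (continuous in $r$ with $V^{\mathrm{cont}}(v_\star) = V^*$), $Q(\delta)$ and $I(\delta,L) \le L$ are the mean queue length and inventory, and $R(\delta,L)$ is the revenue loss from item discards, exponentially small in $L$ for fixed $\rho_\delta$.

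\emph{Calibration and main obstacle.} It remains to choose $(\delta, L)$ in each regime. In regime (a) $\lambda, \mu \to \infty$ with $\rho$ fixed, for any fixed $\delta, L$ the quantities $Q, I, R$ are $O(1)$ in $\mu$, so the error terms are $O(1/\mu)$; a diagonal schedule such as $\delta_\mu = \mu^{-1/4}$ and $L_\mu = \log \mu$ additionally drives $r_{\delta_\mu} \to v_\star$, delivering $V(M_{\delta_\mu,L_\mu})/\mu \to V^*$. In regime (b) $c \to 0$ with $(\lambda, \mu, d)$ fixed, I would take $\delta_c = \sqrt{c}$ and $L$ a large constant, so that $c\,Q(\delta_c) = O(c/\delta_c) = O(\sqrt{c}) \to 0$. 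Regime (c) $d \to 0$ is symmetric, letting $L_d \to \infty$ slowly enough that $R(\delta, L_d) \to 0$ while $dL_d \to 0$. The binding obstacle is the critical case $\rho > 1$, where $v_\star = \tilde v$ sits exactly on the capacity frontier and $Q(\delta) \sim 1/(1-\rho_\delta) \sim 1/\delta$ explodes as $\delta \downarrow 0$; the diagonal schedules above are designed precisely to balance this blowup against the $\delta$-linear revenue loss, and symmetrically for the inventory side in regime (c). All remaining steps reduce to routine $M/M/1$-with-inventory-cap steady-state analysis.
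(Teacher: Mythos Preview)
Your sandwich strategy matches the paper's: bound the normalized optimum above by the static continuum LP value (via the flow-balance inequality $\rho\int_0^1 X(v)f(v)\,dv\le 1$ and dropping the nonnegative queue/inventory costs) and below by the value of an explicit feasible mechanism. The execution of the lower bound, however, differs in two respects. First, the paper observes that scaling $(\lambda,\mu,c,d)$ by a common positive factor merely rescales time, so regime (a) is immediately reduced to (b) or (c) and need not be treated separately. Second, instead of your unified two-sided family $M_{\delta,L}$, the paper uses a distinct one-sided mechanism for each remaining regime: for (b), a buyer-queue-only policy with reserve $\tilde v_0+\delta$ and no inventory; for (c), an inventory-only policy with cap $\bar L=\sqrt{\mu/d}$ that turns away buyers whenever the shelf is empty. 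Each choice kills one of the two cost terms identically, leaving only a single error to control; your approach is more uniform but obliges you to manage both costs simultaneously in every regime.

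This creates a gap in your treatment of regime (c). With $(\lambda,\mu,c)$ fixed and the two-sided $M_{\delta,L}$, the binding term is $cQ(\delta,L)/\mu$, not the item-discard loss $R$. Since $\rho_\delta<1$, every admitted buyer is eventually served, so $R\equiv 0$ and your stated conditions ``$R(\delta,L_d)\to 0$ while $dL_d\to 0$'' are vacuous on the first count and silent on the buyer-queue cost. What you actually need is $\rho_\delta^{\,L}/(1-\rho_\delta)\to 0$; in the critical case $\rho_\delta\uparrow 1$ this forces $L\delta\to\infty$, which must be reconciled with $dL\to 0$ and $\delta\to 0$. A joint schedule such as $\delta_d=d^{1/4}$, $L_d=d^{-1/2}$ does the job, but this is not the ``symmetric'' analogue of your regime-(b) calibration. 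The paper sidesteps the issue entirely by switching in regime (c) to the inventory-only mechanism, so that no buyer ever waits and only the stockout probability and the storage cost $d\bar L=\sqrt{\mu d}$ need to vanish.
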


\noindent {\it Proof Sketch:}  Multiplying the four parameters $\l,\m,c$, and $d$ by a constant $k > 0$ is equivalent to simply rescaling time.
Hence, (b) is the same as $\l,\m,d \to \infty$ with $\l/\m$ and $\l/d$ held constant, and (c) is the same as $\l,\m,c \to \infty$ with $\l/\m$ and $\l/c$ held constant. Thus (a) is implied by either (b) or (c).

To prove (b),  consider a simple feasible mechanism that assigns an incoming good to a waiting buyer if the buyer queue is non-empty and discards the good otherwise, and queues a buyer if and only if his value is above $\tilde v_0 + \delta$ for an arbitrarily small $\delta>0$ (see \cref{fig:large-market}). This ensures that the good arrival rate $\mu$ is greater than an ''effective'' buyer arrival rate $\l[1-F(\tilde v_0 + \delta)]$.  Then, every buyer with a value above $\tilde v_0 + \delta$ is eventually assigned a good after a delay; however, one can show that the delay vanishes as $ c \to 0$.  Therefore, the per-unit revenue converges to the level attained in the optimal multiunit auction mechanism in the continuum model. This indicates that the optimal mechanism must also converge to the latter, which can be shown to provide the upper bound in the limit of a large market. 

Similarly, to prove (c), we consider a mechanism that sells a stored item to an incoming buyer at a price $\tilde v_0+\delta$, turns away buyers when inventory is empty, and stores at most $\bar L:=\sqrt{\mu/d}$ items. As $d\to 0$, one can show both the average storage cost borne by the designer and the probability of stockout (conditional on buyer arrival) vanish.  \qed

One can see a stark yet obvious difference relative to the NTU case.  Recall from \Cref{thm:ntu2} that a social optimum in the NTU setting involves a complete pooling or random assignment; here, the social optimum ($\a=1/2$) yields allocative efficiency.

\section{Broader Overview and Future Directions}

 This survey has focused on a framework of steady-state mechanism design. This approach provides a unified method for analyzing the core trade-offs in settings with stochastic and asynchronous arrivals. However, the broader agenda of dynamic market design  is rich and varied, and many important contributions approach the problem from different angles or with various assumptions. To place the surveyed work in a broader context, we briefly discuss six related streams of literature before turning to promising directions for future research.

\paragraph{Other Perspectives on Queueing and Priority}
The question of optimal queue design has a long history. The framework presented here, which enables the joint optimization of entry, priority, and information, often yields different conclusions than studies that restrict one or more of these design levers. For example, some work has found FCFS to be optimal in models where the stochasticity of the queue length is muted, either by assuming a continuum of agents or specific arrival-departure processes that render the queue size deterministic (e.g., \cite{bloch-cantala2016}; \cite{margaria2020}). Other research focuses on different sources of uncertainty; \cite{cripps-thomas2019}, for instance, analyzes a setting where the service rate is unknown, leading to a problem of strategic experimentation by agents, a different challenge from managing incentives based on the known state of the system. 

% Other works study the effects of alternative queueing rules for allocating goods with heterogeneous qualities. \cite{su-zenios2004} studies kidney allocation and show that under FCFS, patients may strategically reject a reasonably good organ in the hope that a better-quality one will soon become available. This ``cherry-picking" can lead to inefficiencies and wasted resources. A similar logic applies to platform-mediated markets, such as ride-hailing. When drivers operate under a strict FCFS discipline, they may be incentivized to reject short, less profitable trips to wait for a more lucrative fare. As shown in \yk{Cite Hongyao Ma's}, randomizing assignment---effectively, a form of SIRO---can mitigate this cherry-picking behavior by making it impossible to game the queue, thereby improving overall efficiency  

\paragraph{Screening with State-Independent Mechanisms}
A significant literature has pioneered the application of mechanism design to queueing systems, particularly for screening agents with heterogeneous preferences. Key contributions include \cite{mendelson1990optimal}, 
\cite{afeche2013incentive}, \cite{afeche2016optimal}, and \cite{kittsteiner2005priority}. These papers study how a service provider can design a menu of options—typically price-priority or price-delay pairs—to induce self-selection. The fundamental difference between this work and the TU framework surveyed here lies in whether the mechanism itself is dynamic. The mechanisms in the aforementioned literature are typically state-independent; the menu of contracts offered to an arriving agent is fixed and does not change with the number of customers already waiting or the system's current state.

\paragraph{Dynamic Matching.}

The problems discussed in this survey is related to the broad domain of dynamic matching, a field that has seen an explosion of recent work (e.g.,
\cite{akbarpour2017thickness}, \cite{akbarpour2020unpaired},
\cite{ashlagi2023matching}, to name just a few).  This literature has often focused on questions of aggregate market performance, such as the value of ``market thickness'' or the optimal timing of batch-matching versus continuous matching. The research highlighted in this survey complements this work by taking a more granular, micro-level mechanism design approach. The focus is less on the aggregate timing of matches and more on the precise design of policies—entry control, priority assignment, and information disclosure—needed to manage the participation and waiting incentives of individual, strategic agents. Also noteworthy is a distinct literature that takes a cooperative games approach to define stability in a dynamic setting; see \cite{damiano2005stability} and \cite{doval2022dynamically} for example. These works grapple with the conceptual challenges of extending static stability to a dynamic context, such as defining credible multi-period blocking plans and formalizing agents' expectations about the future.

\paragraph{Platform Management of Frictional Matching.}  A significant literature studies platforms that manage the inefficiencies inherent in decentralized, frictional matching markets. In these models, agents actively search and incur costs to screen potential partners, which can lead to congestion and wasted effort when they contact others who are ultimately unavailable or uninterested (\cite{arnosti2021managing}; \cite{fradkin2017search}; \cite{horton2019buyer}). Here, the platform's role is not to operate a centralized queue, but to indirectly manage these search frictions by designing the search environment itself. A key lever for the platform is to guide the search process by restricting the set of potential partners an agent might meet. Interventions include designing the search protocol by restricting who can initiate contact (\cite{kanoria2021facilitating}), or by directly setting the meeting rates between different types of agents (\cite{immorlica2021designing}). This approach contrasts sharply with the framework of this survey, which assumes a powerful designer with direct, centralized control over the allocation process via a queue, where frictions primarily manifest as waiting costs within a managed system.

\paragraph{Dynamic Pricing and Revenue Management}
The surveyed works are related to classic literature on dynamic pricing and revenue management (e.g., \cite{gallego1994optimal}; \cite{board2016revenue}; \cite{gershkov2018revenue}; \cite{pai2013optimal}, and \cite{dilme2019revenue}). This literature typically addresses the problem of selling a fixed, perishable inventory (like airline seats) to stochastically arriving buyers over a finite time horizon. In that setting, the dynamics are driven by the non-stationarity of a depleting stock and an approaching deadline. The framework in this survey analyzes a different economic environment: an infinite-horizon system where stochasticity is present on both sides of the market (i.e., supply and demand are both random flows). The central problem is thus the management of long-run, stationary processes, rather than the optimal pricing path for a finite deadline.

\paragraph{Dynamic Mechanism with Evolving Private Information.}
A major branch of the literature on dynamic mechanism design, including works by \cite{courty2000sequential}, \cite{eso2007optimal}, \cite{bergemann2010dynamic}, \cite{athey_efficient_2013},  \cite{pavan_dynamic_2014}, \cite{bergemann2015dynamic}, and \cite{bergemann2022progressive}, addresses a different source of dynamics. In this paradigm, the set of agents is fixed, but their private information—their ``type''—evolves stochastically over time, often as a function of their past allocations (e.g., through learning-by-doing or consuming an experience good). The central design problem is thus not the management of market flows, but the characterization of optimal long-term contracts that provide intertemporal incentives for truth-telling as this information evolves. This focus leads to a different set of analytical tools centered on dynamic versions of the first-order approach, which contrasts with the steady-state and queueing theory methods central to this survey.  See Chapter 11 of  \cite{borgers2015introduction} and \cite{bergemann2019dynamic} for excellent surveys on the subject matter.

\paragraph{Future Directions}

Several promising avenues for research emerge from relaxing the core assumptions of the models discussed.

First, in the NTU setting, a key challenge is to extend the analysis to accommodate heterogeneous agent types in a non-fluid model. The current state of the art for this problem, such as in \cite{ashlagi2025optimal}, largely invokes a fluid or continuum-agent model. While tractable, this approach effectively assumes away the very stochastic frictions and integer-level queue dynamics that make waiting a complex and interesting problem in the first place. Progress in a non-fluid model is challenging because feasible allocation of waiting times among stochastically arriving agents and items is difficult to characterize. 
% Hence, something akin to the Border-style reduced-form approach is needed to characterize feasible waiting times that can be promised across stochastically-arriving agents.  

Second, in the TU setting, while the model in \cite{che2025optimal} allows for heterogeneous values, there is ample scope to incorporate richer forms of heterogeneity, such as private information about agents’ waiting costs, outside options, or their specific service requirements. For example, in many gig platforms, tasks differ not only in the value they create but also in the time they take to complete. Designing mechanisms that can effectively screen agents along these multiple dimensions will require new methodological tools to analyze the interplay between queueing, multi-dimensional screening, and dynamic state-contingent policies.

Both areas represent exciting frontiers for market design. Understanding how to structure dynamic marketplaces---whether through waiting, prices, or a combination of both---remains a central task for economic theory and a practical imperative for the modern digital economy.

\bibliographystyle{economet}
\bibliography{reference}

%%%%%%%%%%%%%%%%%%%%%%%%%%%%%%%%%%%%%%%%%%%%%%%%%%%%%%%%%%%%%%%%%%%%%%%%%%%%%
%                             APPENDICES                                    %
%%%%%%%%%%%%%%%%%%%%%%%%%%%%%%%%%%%%%%%%%%%%%%%%%%%%%%%%%%%%%%%%%%%%%%%%%%%%%

%\begin{appendix}

%\setcounter{equation}{0}
\renewcommand{\theequation}{\Alph{section}.\arabic{equation}}

\renewcommand{\thetheorem}{\Alph{section}.\arabic{theorem}}

\renewcommand{\theproposition}{\Alph{section}.\arabic{proposition}}

\renewcommand{\thelemma}{\Alph{section}.\arabic{lemma}}
\renewcommand{\theclaim}{A.\arabic{claim}}

\renewcommand{\thecorollary}{\Alph{section}.\arabic{corollary}}

\renewcommand{\thedefinition}{\Alph{section}.\arabic{definition}}

\renewcommand{\theexample}{\Alph{section}.\arabic{example}}

\renewcommand{\thefootnote}{\Alph{section}.\arabic{footnote}}

\renewcommand{\thetable}{\Alph{section}.\arabic{table}}

\renewcommand{\thefigure}{\Alph{section}.\arabic{figure}}

%\clearpage

\clearpage 
\appendix  \label{Appendix}
\begin{center}
    {\bf\Large Online Appendix}
\end{center}
\section{Regenerative Processes} \label{app:regen}

Given a stochastic process, which we denote as $\theta=\{\theta(t):t\ge0\}$, we can imagine that there exists a specific, random time $\tau_1$ at which the process probabilistically "starts over". If the evolution of the process from this time onward, $\{\theta(\tau_1+t):t\ge0\}$, has the same distribution as the original process and is independent of its past history, then we say that the process $\theta$ has regenerated at time $\tau_1$. The portion of the process's evolution within the interval $[0, \tau_1)$, along with the regeneration time $\tau_1$ itself, is called the first cycle, denoted $C_1 = \{\{\theta(t):0 \le t < \tau_1\}, \tau_1\}$. The duration of this cycle is the first cycle length, $T_1 = \tau_1$.

If such a regeneration time $\tau_1$ exists, the nature of the process implies that this restarting behavior will continue. There must be a second regeneration time $\tau_2 > \tau_1$, which marks the end of a second cycle, $C_2$, that is identically distributed to the first. Continuing this logic, we find a sequence of regeneration times $\{\tau_k:k \ge 1\}$ (with $\tau_0 = 0$) that constitutes a renewal process. The cycle lengths $T_k = \tau_k - \tau_{k-1}$ for $k \ge 1$ are independent and identically distributed (i.i.d.), and consequently, the cycles $C_k = \{\{\theta(\tau_{k-1}+t):0 \le t < T_k\}, T_k\}$ are also i.i.d. objects.

A regenerative process is said to be \textbf{positive recurrent} if the underlying renewal process of regeneration times is also positive recurrent; this is the case when the expected cycle length is finite and positive, or $0 < \E[T_1] < \infty$. If the expected cycle length is infinite ($\E[T_1] = \infty$), the process is called \textbf{null recurrent}.

Because the regeneration times $\{\tau_k\}$ form a renewal process and the cycles $\{C_k\}$ are i.i.d., one can compute long-run time averages, which will be seen to equal the expected value over a cycle divided by its expected length.  If $\theta$ is a regenerative process, it follows that for any measurable function $f(x)$, the transformed process $f(\theta(t))$ is also a regenerative process that shares the exact same regeneration times. This property allows us to establish a general theorem for the time average of such functions.

The main result is stated precisely in the following theorem.

\begin{theorem} \label{thm:regenerative} If $\theta$ is a positive recurrent regenerative process, and $f=f(x)$ is a function such that $\E[\int_{0}^{T_{1}}|f(\theta(s))|ds] < \infty$, then the following hold:
\begin{itemize}
    \item \textit{The long-run time average of the process converges with probability 1:}
    \[
        \lim_{t\to\infty}\frac{1}{t}\int_{0}^{t}f(\theta(s))ds=\frac{\E[Y]}{\E[T]}
    \]
    \item \textit{The limit of the time-averaged expectation also converges to the same value:}
    \[
        \lim_{t\to\infty}\frac{1}{t}\int_{0}^{t}\E[f(\theta(s))]ds=\frac{\E[Y]}{\E[T]}
    \]
\end{itemize}
Here, $T = T_1$ is the first cycle length and $Y = Y_1 = \int_{0}^{T_{1}}f(\theta(s))ds$ is the value of $f(\theta)$ accumulated during the first cycle.
\end{theorem}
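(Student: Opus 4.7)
The plan is to decompose the integral $\int_0^t f(\theta(s))\,ds$ into a sum over complete regeneration cycles plus a boundary remainder, then apply the strong law of large numbers (SLLN) to the i.i.d.\ cycle rewards combined with the elementary renewal theorem (ERT) for the renewal counting process $N(t):=\sup\{k\ge 0:\tau_k\le t\}$. The main technical difficulty lies in controlling the boundary remainder $R(t):=\int_{\tau_{N(t)}}^{t} f(\theta(s))\,ds$ in expectation, where the inspection paradox makes naive estimates fail.

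For part (i), I would write
$$\int_0^t f(\theta(s))\,ds=\sum_{k=1}^{N(t)} Y_k+R(t), \qquad \frac{1}{t}\sum_{k=1}^{N(t)} Y_k=\frac{N(t)}{t}\cdot\frac{1}{N(t)}\sum_{k=1}^{N(t)} Y_k.$$
The regenerative property makes $\{Y_k\}$ i.i.d., and the integrability hypothesis gives $\E[|Y_1|]\le\E[\tilde Y_1]<\infty$, where $\tilde Y_k:=\int_{\tau_{k-1}}^{\tau_k}|f(\theta(s))|\,ds$. Positive recurrence yields $\E[T_1]<\infty$, so the ERT gives $N(t)/t\to 1/\E[T]$ a.s.\ (via $\tau_n/n\to\E[T]$ a.s.\ and the sandwich $\tau_{N(t)}\le t<\tau_{N(t)+1}$), while the SLLN gives $N(t)^{-1}\sum_{k=1}^{N(t)}Y_k\to\E[Y]$ a.s. For the remainder, $|R(t)|\le \tilde Y_{N(t)+1}$; since $\E[\tilde Y_1]<\infty$ implies $\tilde Y_n/n\to 0$ a.s.\ (a standard Borel--Cantelli consequence of finite first moment) and $N(t)+1$ grows linearly in $t$ a.s., one gets $R(t)/t\to 0$ a.s. Combining these yields the first claim.

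For part (ii), apply Fubini and then Wald's identity. Since $N(t)+1=\inf\{n\ge 1:\tau_n>t\}$ is a stopping time for the filtration generated by the pairs $\{(T_k,Y_k)\}$ and $\E[N(t)+1]<\infty$ by the ERT, Wald gives $\E[\sum_{k=1}^{N(t)+1}Y_k]=\E[N(t)+1]\cdot\E[Y_1]$, so dividing by $t$ and applying the ERT (in its expectation form $\E[N(t)]/t\to 1/\E[T]$) makes this term converge to $\E[Y]/\E[T]$. It then remains to verify
$$\frac{1}{t}\E\bigl[R(t)-Y_{N(t)+1}\bigr]\to 0,$$
for which it suffices, since $|R(t)-Y_{N(t)+1}|\le \tilde Y_{N(t)+1}$, to show $\E[\tilde Y_{N(t)+1}]/t\to 0$.

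This last step is the main obstacle; treating $\tilde Y_{N(t)+1}$ as a ``typical'' cycle reward fails because the cycle straddling $t$ is size-biased. My plan is a truncation argument: for any $M>0$,
$$\E[\tilde Y_{N(t)+1}]\le M+\E\bigl[\tilde Y_{N(t)+1}\mathbf{1}_{\{\tilde Y_{N(t)+1}>M\}}\bigr]\le M+\E\Bigl[\sum_{k=1}^{N(t)+1}\tilde Y_k\mathbf{1}_{\{\tilde Y_k>M\}}\Bigr]=M+\E[N(t)+1]\cdot\E\bigl[\tilde Y_1\mathbf{1}_{\{\tilde Y_1>M\}}\bigr],$$
the second inequality by bounding the maximum by the sum, and the last equality by Wald applied to the truncated variables. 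Dividing by $t$, sending $t\to\infty$ (using the ERT to control $\E[N(t)+1]/t$) and then sending $M\to\infty$ (using dominated convergence to send $\E[\tilde Y_1\mathbf{1}_{\{\tilde Y_1>M\}}]\to 0$) yields $\E[\tilde Y_{N(t)+1}]/t\to 0$, completing the proof.
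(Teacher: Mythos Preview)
Your argument for part~(i) is correct and follows essentially the same route as the paper: decompose into completed-cycle rewards plus a remainder, combine the SLLN for the i.i.d.\ $Y_k$ with $N(t)/t\to 1/\E[T]$, and kill the remainder via $\tilde Y_{N(t)+1}/t\to 0$. The only cosmetic difference is in how that last fact is obtained: the paper first treats the case $f\ge 0$ via the sandwich
\[
\frac{1}{t}\sum_{j=1}^{N(t)}Y_j \;\le\; \frac{1}{t}\int_0^t f(\theta(s))\,ds \;\le\; \frac{1}{t}\sum_{j=1}^{N(t)+1}Y_j,
\]
deduces $Y_{N(t)+1}/t\to 0$ as a byproduct of the two bounds having the same limit, and then applies this to $|f|$ to handle the signed case. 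You instead invoke the Borel--Cantelli consequence $\tilde Y_n/n\to 0$ a.s.\ directly. Both are standard; yours is slightly more direct, the paper's avoids citing that auxiliary fact.

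For part~(ii) you actually go \emph{beyond} the paper: the paper's proof stops after establishing part~(i) and never argues the expectation statement. Your route via Wald's identity applied to the stopping time $N(t)+1$, together with the truncation bound
\[
\E[\tilde Y_{N(t)+1}]\le M+\E[N(t)+1]\cdot\E\bigl[\tilde Y_1\mathbf{1}_{\{\tilde Y_1>M\}}\bigr],
\]
is correct and is precisely the right device to defeat the inspection-paradox obstruction you identify. One small point worth making explicit is the Fubini justification $\int_0^t \E[f(\theta(s))]\,ds=\E\int_0^t f(\theta(s))\,ds$, which follows from $\E\int_0^t|f(\theta(s))|\,ds\le \E[N(t)+1]\,\E[\tilde Y_1]<\infty$ (itself a Wald computation); you mention Fubini but do not spell out why it applies.
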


\begin{proof}
Let $N(t) = \max\{j: \tau_j \le t\}$ denote the number of regenerations by time $t$, and let $Y_j = \int_{\tau_{j-1}}^{\tau_{j}} f(\theta(s))ds$ be the reward from the $j^{th}$ cycle. The proof for the first part of the theorem proceeds in two steps.

Assume first the function $f$ is non-negative ($f \ge 0$).  Since time $t$ must fall between the $N(t)$-th and $(N(t)+1)$-th regeneration, we can bound the integral of $f(\theta(s))$. The total reward is at least the sum of rewards from all completed cycles and at most the sum of rewards from all completed cycles plus the reward from the entire next cycle. This gives the sandwich inequality:
\[
\frac{1}{t}\sum_{j=1}^{N(t)}Y_j \le \frac{1}{t}\int_{0}^{t}f(\theta(s))ds \le \frac{1}{t}\sum_{j=1}^{N(t)+1}Y_j
\]
We examine the lower bound first by rewriting it as a product:
\[
\frac{1}{t}\sum_{j=1}^{N(t)}Y_j = \frac{N(t)}{t} \cdot \frac{1}{N(t)}\sum_{j=1}^{N(t)}Y_j
\]
As $t \to \infty$, the term $\frac{N(t)}{t}$ converges to $\frac{1}{\E[T]}$ by the Elementary Renewal Theorem. The second term, being the average of i.i.d. random variables $Y_j$ with a finite mean (as per the theorem's hypothesis), converges to $\E[Y]$ by the Strong Law of Large Numbers. Thus, the lower bound converges to $\frac{\E[Y]}{\E[T]}$.

Similarly, the upper bound can be rewritten as $\frac{N(t)+1}{t} \cdot \frac{1}{N(t)+1}\sum_{j=1}^{N(t)+1}Y_j$. Since $\frac{N(t)+1}{t} \to \frac{1}{\E[T]}$, the upper bound also converges to the same limit $\frac{\E[Y]}{\E[T]}$. By the Squeeze Theorem, the integral term must also converge to $\frac{\E[Y]}{\E[T]}$, which proves the result for the non-negative case. As a direct consequence, the difference between the upper and lower bounds, which is precisely $\frac{Y_{N(t)+1}}{t}$, must converge to 0 with probability 1.

Suppose next that function $f$ need not be non-negative.  We can apply the result from the first case to the non-negative function $|f|$. Let $Y_j^* = \int_{\tau_{j-1}}^{\tau_{j}}|f(\theta(s))|ds$. From the prior step, we know that $\frac{Y_{N(t)+1}^*}{t} \to 0$ with probability 1.

Now, we decompose the integral of $f$ into the sum over complete cycles and the remainder in the last, incomplete cycle:
\[
\frac{1}{t}\int_{0}^{t}f(\theta(s))ds = \frac{1}{t}\sum_{j=1}^{N(t)}Y_j + \frac{1}{t}\int_{\tau_{N(t)}}^{t}f(\theta(s))ds
\]
The first term on the right-hand side converges to $\frac{\E[Y]}{\E[T]}$, as shown before. The second term is the error term, which we must show converges to 0. We can bound its magnitude:
\[
\left|\frac{1}{t}\int_{\tau_{N(t)}}^{t}f(\theta(s))ds\right| \le \frac{1}{t}\int_{\tau_{N(t)}}^{t}|f(\theta(s))|ds \le \frac{1}{t}\int_{\tau_{N(t)}}^{\tau_{N(t)+1}}|f(\theta(s))|ds = \frac{Y_{N(t)+1}^*}{t}
\]
Since we established that $\frac{Y_{N(t)+1}^*}{t} \to 0$, the error term converges to 0. This completes the proof for the first part of the theorem.
\end{proof}

A straightforward application of  \Cref{thm:regenerative} characterizes the stationary distribution.

\begin{corollary} Suppose $P$ is the stationary distribution of $\t=\{\t(t): t\ge 0\}$, a positive recurrent regenerative process. Then, for each measurable set $\T'\subset \T$,
\[
\Pr\{\theta\in \T'\} = \lim_{t\to\infty}\frac{1}{t}\int_{0}^{t}\mathbf{1}_{\{\theta(s)\in \T'\}}ds=\frac{\E[Y]}{\E[T]}, \quad \text{with probability 1}
\]
and also,
\[
\Pr\{\theta \in \T'\} = \lim_{t\to\infty}\frac{1}{t}\int_{0}^{t}\Pr\{\theta\in \T'\}ds=\frac{\E[Y]}{\E[T]},
\]
where $T=T_1$ and  $Y = \int_{0}^{T_{1}}\mathbf{1}_{\{\theta(s)\in \T'\}}ds$.  Hence, $P$ is also a  unique limiting distribution of $\t(t).$
\end{corollary}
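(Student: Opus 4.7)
The plan is to obtain the corollary as a direct specialization of \Cref{thm:regenerative} applied to the indicator function $f(x) = \mathbf{1}_{\{x \in \Theta'\}}$. First I would verify the integrability hypothesis of the theorem: since $|f| \le 1$ and positive recurrence gives $\E[T_1] < \infty$, we immediately have $\E[\int_0^{T_1} |f(\theta(s))|\, ds] \le \E[T_1] < \infty$. With this choice of $f$, the cycle-reward $Y = \int_0^{T_1} \mathbf{1}_{\{\theta(s) \in \Theta'\}}\, ds$ is exactly the occupation time of $\Theta'$ within one regeneration cycle, as in the statement.

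Then I would invoke the two parts of \Cref{thm:regenerative} in turn. Part (i) yields the almost-sure convergence of the empirical occupation time $\frac{1}{t}\int_0^t \mathbf{1}_{\{\theta(s) \in \Theta'\}}\, ds$ to $\E[Y]/\E[T]$, which is precisely the first display. For the second display, the plan is to exploit stationarity: if $\theta$ starts in its stationary distribution $P$, then $\Pr\{\theta(s) \in \Theta'\} = P(\Theta')$ for every $s \ge 0$, so the Cesàro average $\frac{1}{t}\int_0^t \Pr\{\theta(s) \in \Theta'\}\, ds$ equals $P(\Theta')$ identically in $t$. Part (ii) of \Cref{thm:regenerative} asserts this average converges to $\E[Y]/\E[T]$, so $P(\Theta') = \E[Y]/\E[T]$ as claimed.

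Uniqueness then comes for free: the right-hand side $\E[Y]/\E[T]$ is a functional of the cycle distribution only, so any stationary measure $P'$ must assign the same probability $\E[Y]/\E[T]$ to every measurable $\Theta'$, forcing $P' = P$. I would interpret the final ``$P$ is also a unique limiting distribution'' sentence as referring to the time-averaged (Cesàro) limit, which is what the two displays literally provide. The only mildly subtle point — and the place I would be most careful — is precisely this interpretive step: without an additional non-lattice/aperiodicity hypothesis on the cycle length $T_1$, one cannot upgrade the Cesàro convergence to pointwise convergence of $\Pr\{\theta(t) \in \Theta'\}$, so I would explicitly frame the ``limiting distribution'' claim as the assertion that $P$ is the unique measure identified by the long-run fraction of time (and expected fraction of time) the process spends in each measurable set.
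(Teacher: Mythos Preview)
Your proposal is correct and follows essentially the same route as the paper: apply \Cref{thm:regenerative} with $f(x)=\mathbf{1}_{\{x\in\Theta'\}}$, noting that the stationary distribution coincides with the long-run time average. Your write-up is more detailed than the paper's two-line proof---you explicitly verify the integrability hypothesis via $|f|\le 1$ and $\E[T_1]<\infty$, and you correctly flag that the ``limiting distribution'' claim should be read in the Ces\`aro sense absent a non-lattice assumption---but the underlying argument is identical.
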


\begin{proof}  We first note that a stationary distribution coincides with a long run time average. We apply \Cref{thm:regenerative} by setting $f(x) = \mathbf{1}_{\{x \in \T'\}}$, which  satisfies the conditions of the theorem. 
\end{proof}

We do not present the limit theorem for a regenerative process, as it requires additional background knowledge (e.g., the key renewal theorem).  An interested reader is referred to \cite{asmussen2003applied}.

\section{Proof of \Cref{thm:hassin}}
\begin{proof}  Suppose the first incumbent decides to leave the queue if and only if the queue reaches a length $K$.(That the optimal decision takes this form is obvious.) We can characterize the value $w_k$ to the first incumbent of the queue length (including herself) being $k=1, ..., K$ via dynamic programming.  First, observe that 
\begin{align*}
    w_1=  (\m dt)v  +(\l dt)w_2+ (1-\m dt -\l dt)(w_1-c dt) +o(dt),
\end{align*}
since during the next short time interval $dt$, either he receives his service and collects $v$ with probability $\m dt$ or a new entrant arrives with probability $\l dt$ and his value shifts to $w_2$, or neither happens with the remaining probability,  in which case he stays at the same state except that he incurs the waiting costs $c dt$.  Similarly, we can write:
\begin{align*}
    w_k &= (\m dt ) w_{k-1} +(\l dt) w_{k+1} +(1-\m dt -\l dt) (w_k-c dt) +o(dt), \forall k=2,..., K-1;\\
   w_K&=  (\m dt)w_{K-1}  +(1-\m dt -\l dt)(w_K-c dt)  +o(dt),
\end{align*}
where we use the fact that when a new entry occurs at $k=K$, the first incumbent immediately exits the queue.

Dividing all equations by  $dt$ and letting $dt\to  0$, we obtain a system of equations:
\begin{align*}
    (\l +\m) w_1&=\m v+ \l w_2 -c;\\
      (\l +\m) w_{k-1}&=\m w_{k-2} +\l w_k-c, \forall k=2,..., K-1; \\
      (\l +\m) w_{K}&=\m w_{K-1} -c.  
\end{align*}
This system has a unique solution. In particular, we focus on its last coordinate:
$$w_K=p_0^K\left(v-\sum_{j=0}^{K-1} \frac{(K-j)c}{\m} \r^{j}\right),$$
for the first incumbent stays in the queue up to length $K$  if and only if $w_K\ge 0$.  Namely, the maximal queue length $K_{\lcfs}$ under LCFS satisfies  $w_{K_{\lcfs}}\ge 0\ge w_{K_{\lcfs}+1}$.  

Define the  social welfare under the cutoff structure with the maximal queue length as 
$$W(K)=\sum_{k=1}^K p_k^K(\m v-ck).$$
A straightforward calculation confirms that 
\begin{align*}
\Delta W(K)&:=W(K)-W(K-1)\\
&\propto v-\sum_{j=0}^{K-1}\frac{(K-j+1)c}{\m} \r^j\\
&\propto w_K.    
\end{align*}
Hence, $w_K\ge 0$ if  and only  if $\Delta W(K)\ge 0$.  Together with the fact that $W(K)$ is quasi-concave in $K$, the conclusion follows. 
\end{proof}

\section{Proof of \Cref{thm:leshno}}
\begin{proof}  We can write the expected waiting time when one finds herself in the queue  of  $k$, or joining the queue of $k$:\footnote{They are the recursive equations applying the logic of dynamic programming, where we use the fact that unless state changes (when a new buyer arrives with $k<K$ or a good arrives), the state remains the same with elapse  of waiting time by $dt$, for a brief period $dt$.}
\begin{align*}
    \tau_k &=  (\m dt) \left[\frac{k-1}{k}\tau_{k-1}\right]+ (\l dt)\tau_{k+1}+(1-\m dt-\l  dt)[\tau_k+dt]+o(dt), \forall k<K;\\
     \tau_K &=  (\m dt) \left[\frac{K-1}{K}\tau_{K-1}\right]+(1-\m dt)[\tau_K+dt]+o(dt),
\end{align*}
where $\tau_{0}\equiv 0.$
Dividing both sides by $dt$ and letting $dt\to 0$, we get
\begin{align}
  (\m+\l) \tau_k &=  \m  \left[\frac{k-1}{k} \tau_{k-1}\right]+\l \tau_{k+1} +1, \, \forall k<K; \label{eq:SIRO-k} \\
    \m \tau_K &= \m \left[\frac{K-1}{K}\tau_{K-1}\right]+1. \label{eq:SIRO-K}
\end{align}
The online appendix of \cite{che2021optimal} (which considers more general primitive processes) shows that the system admits a unique solution which satisfies $$\frac{1}{\m}\le \tau_1\le ....\le \tau_K\le \frac{K}{\m}.$$
In fact, we can show that $K<\frac{K}{\m}$; or else \cref{eq:SIRO-K} implies that $\tau_K=\tau_{K-1}=K/\m$, which, when applied to the penultimate equation, yields $\tau_{K-2}=\frac{(K-1)^2}{K\mu}>\frac{K}{\m}=\tau_K$, a contradiction to the monotonicity. (Analogously, one can show that $\tau_1>1/\m$.) Since $K_{SIR0}$ is the largest $K$ such that $v-c\tau_K \ge 0$ and $K<\frac{K}{\m}$, the $K$-th entrant's wait time under FCFS, we conclude that $K_{\siro}\ge K_{\fcfs}$.  To show $K_{\siro}\le K^*$, we can assume $K^*<\infty$ without loss.  In the case, \cref{IRB} is binding at $K^*$.  Clearly, \cref{IRB} is satisfied under SIRO.  To see this, write:
\begin{align*}
  &\sum_{k=1}^{K_{\siro}}  p_k^{K_{\siro}}(\m v-  ck)\\
 =& \m v \sum_{k=1}^{K_{\siro}}p_k^{K_{\siro}}  - c \sum_{k=1}^{K_{\siro}}p_k^{K_{\siro}}k\\
 =& \l v \sum_{k=1}^{K_{\siro}}p_{k-1}^{K_{\siro}}  - c \l \sum_{k=1}^{K_{\siro}}p_{k_1}^{K_{\siro}}\tau_k\\
 =&\l \sum_{k=1}^{K_{\siro}}p_{k-1}^{K_{\siro}}(v-c\tau_k)\\
 \ge & 0,
\end{align*}
 where the third equation follows from the balance condition associated with the stationarity of $\bf$ (first term) and Little's law (second term), and the last inequality follows from the fact that, for the buyer to enter a queue with $k-1$, $v-c\tau_k\ge 0$. Since \cref{IRB} is binding at $K^*$, it follows that $K^*\ge K_{\siro}$.  
\end{proof}

\section{Proof of \Cref{thm:che-tercieux}} \label{app:che-tercieux}

\begin{proof} 
Recall that the first-best solution corresponds to a cutoff policy in which the designer invites buyers into the queue if and only if the queue state is $k<K^*$ (with a possible rationing at $k=K^*-1$) and asks those invited to stay until they collect $v$. Assume for ease of exposition that there is no rationing at $k=K^*-1$.\footnote{The proof works more generally with a small adjustment in the updating formula; see \cite{che2021optimal} for details.}
The optimal information policy, which, as noted before, informs buyers either ``in'' or ``no entry.'' (The no entry is enforced with a commitment not to allocate the good in case of entry.)   Buyers, in turn, make an inference on the relevant history $h\in \H_0$, based on the designer's mechanism, their recommended actions, as well as the amounts of time $t\ge 0$ they have spent in the queue.  

Suppose a buyer has just arrived and receives the invitation to join the queue. What does he believe about the queue length if he joins the queue?
Given the optimal information policy and the optimal cutoff $K^*$, 
Her rational belief at the steady state will be that after joining the queue, its length will be $k$  with probability:\footnote{\label{fn:pasta} This is simply a  Bayesian update of the stationary distribution of the queue state conditional on the recommendation to join (which indicates that $k\in \{0,..., K^*-1\}$). The use of stationarity means that the buyer has a long-run belief that the system has been running for a long time. This is formally verified as PASTA (``Poisson Arrival Sees Time Averages''); see \cite{wolff1982poisson}.}  
\begin{equation}
{\gamma}_{k}^{0}=\left\{%
\begin{array}{ll}
\frac{p_{k-1}^{K^*}}{\sum_{i=1}^{K^*}p
_{i-1}^{K^*}}= \frac{\r^{k-1}}{\sum_{i=1}^{K^*}\r^{i-1}} & \mbox{ if } k=1,..., K^* \\
\label{eq:belief0} 0 & \mbox{ if } k >K^*. \cr%
\end{array}
\right.
\end{equation}
  By Little's  law, the buyer's expected wait time under the optimal information policy is given by:
$$\frac{\sum_{k=1}^{K^*}k p_k^{K^*} }{\l \sum_{j=1}^{K^*} p_{j-1}^{K^*}},$$  
independently of the service priority rule, if one were to stay in the queue until he collects $v$.  Hence, a buyer's expected payoff from joining the queue under the latter assumption is:
\begin{align*}
v- c \frac{\sum_{k=1}^{K^*}k p_k^{K^*} }{\l \sum_{j=1}^{K^*} p_{j-1}^{K^*}}& =  \frac{1}{ \l \sum_{j=1}^{K^*} p_{j-1}^{K^*}}\left [\l \sum_{j=1}^{K^*} p_{j-1}^{K^*}v-c\sum_{k=1}^{K^*}k p_k^{K^*} \right]\\
& =  \frac{1}{\l \sum_{j=1}^{K^*} p_{j-1}^{K^*}}\left [\m \sum_{k=1}^{K^*} p_k^{K^*}v-c\sum_{k=1}^{K^*}k p_k^{K^*} \right]\\
&\propto \sum_{k=1}^{K^*} p_k^{K^*}(\m v-ck )\ge 0,
\end{align*}
where we use the balance condition to obtain the second equality and use \cref{IRB} to obtain the inequality.  It follows that buyers have an incentive to join the queue (under any service priority rule, including FCFS) when  recommended by the optimal mechanism under the optimal information policy, under the assumption that he will continue to stay in the queue once he joins the queue. 

Next, we show that a buyer has the incentive to stay in the queue once he has joined the queue under FCFS. To this end, we show that the residual expected wait time does not increase in the amount of time $t$ that the buyer spends in the queue. Under FCFS, a sufficient statistic for the latter is a buyer's {\it queue position}, $\ell$, i.e., his arrival order within the queue. At $t=0$, his belief on queue position $\ell=1,..., K^*$ is simply given by  $\g_{\ell}^0$.  As $t$ increases, a buyer's belief about his queue position evolves according to the recursion equation: for any $dt>0$,\footnote{The numerator is the probability that his queue position is $\ell$ after
staying in the queue for a length $t+dt$ of time. This event occurs if either
(i) the buyer already has position $\ell$ in the queue at $t$ and none
of the agents ahead of him or himself have been served during   $dt$; or
(ii) if he has position $\ell+1$ at $t$ and one buyer ahead of him is served
by $t+dt$.}
\begin{equation*}
{\gamma}_{\ell }^{t+dt}=\frac{{\gamma}_{\ell
}^{t}(1-\m dt)+{\gamma}_{\ell
+1}^{t}\m dt }{\sum_{i=1}^{K^* }{ \gamma}%
_{i}^{t}(1- \m dt)} +o(dt),
\end{equation*}
for $\ell=1, ..., K^*$, where $\g^t_{K^*+1}\equiv 0$.
We wish to show that $(\g^0_{\ell})$ likelihood-ratio dominates $(\g^t_{\ell})$, for all $t>0$, which will imply that the expected residual wait time decreases in $t$.  

 One can use the above equations (with $dt\to 0$) to derive a system of ordinary differential equations (ODEs) on the likelihood ratios:
\begin{equation}  \label{eq:lr}
\dot r_{\ell}^{t}= \mu r^{t}_{\ell} \left( r^{t}_{\ell+1} - r_{\ell}^{t}\right),
\end{equation}
for $\ell=2,..., K^*$, where $r^{t}_{K^*+1}\equiv 0$.   The system has a unique solution by appealing to the Banach fixed-point theorem.  \cref{eq:belief0}  yields boundary conditions:  $r_{\ell}^{0}= \r$ for $\ell=2,..., K^*$.  There are two cases. Suppose first $K^*=\infty$.  In this case, $r^t_{\ell}$ is constant in $t$, so trivially we have $r^t_{\ell}\le  r^0_{\ell}$ for all $t$.

Next, $K^*<\infty$.  In this case, $\dot r_{\ell}^0=0$ for all $\ell=2,..., K^*-1$, and $\dot r^0_{K^*}<0$. Differentiating the ODE once more, we get
\begin{equation}  \label{eq:ode2}
\ddot r_{\ell}^{t}= \mu \dot r^{t}_{\ell} \left( r^{t}_{\ell+1} - r_{\ell}^{t}\right)+ \m  r^{t}_{\ell} \left(\dot r^{t}_{\ell+1} - \dot r_{\ell}^{t}\right).
\end{equation}
If $\dot r^t_\ell>0$ at some  $t$, there exists the smallest $t$ at which $\dot r^t_\ell$ crosses zero with $\ddot r_{\ell}^t>0$, for some $\ell$.\footnote{We choose the largest $\ell$ if this happens for multiple $\ell$'s.}  From \cref{eq:ode2}, we have  $\ddot r^t_{\ell}=0$, a contradiction.
\end{proof}

\end{document}